\documentclass[journal]{IEEEtran}
\IEEEoverridecommandlockouts
\usepackage[noadjust]{cite}

\usepackage{amsmath,amssymb,amsfonts}
\usepackage{mathrsfs}
\usepackage{etoolbox}
\usepackage{algorithmic}
\usepackage{graphicx}
\usepackage{tikz}
\usepackage{pgfplots}
\pgfplotsset{compat=1.13}
\usepackage{float}
\usepackage{textcomp}
\usepackage{xcolor}
\usepackage[hidelinks]{hyperref}
\usepackage[all]{xy} 
\usepackage{bm}
\usepackage{multirow}
\usetikzlibrary{shapes,snakes,patterns,positioning,decorations.pathreplacing}

\pgfdeclarepatternformonly{soft horizontal lines}{\pgfpointorigin}{\pgfqpoint{100pt}{1pt}}{\pgfqpoint{100pt}{3pt}}%
{
	\pgfsetstrokeopacity{0.3}
	\pgfsetlinewidth{0.1pt}
	\pgfpathmoveto{\pgfqpoint{0pt}{0.5pt}}
	\pgfpathlineto{\pgfqpoint{100pt}{0.5pt}}
	\pgfusepath{stroke}
}

\pgfdeclarepatternformonly{soft crosshatch}{\pgfqpoint{-1pt}{-1pt}}{\pgfqpoint{4pt}{4pt}}{\pgfqpoint{3pt}{3pt}}%
{
	\pgfsetstrokeopacity{0.3}
	\pgfsetlinewidth{0.4pt}
	\pgfpathmoveto{\pgfqpoint{3.1pt}{0pt}}
	\pgfpathlineto{\pgfqpoint{0pt}{3.1pt}}
	\pgfpathmoveto{\pgfqpoint{0pt}{0pt}}
	\pgfpathlineto{\pgfqpoint{3.1pt}{3.1pt}}
	\pgfusepath{stroke}
}

\def\BibTeX{{\rm B\kern-.05em{\sc i\kern-.025em b}\kern-.08em
    T\kern-.1667em\lower.7ex\hbox{E}\kern-.125emX}}

\newcommand{\beq}{\begin{equation}}
\newcommand{\enq}{\end{equation}}
\newcommand{\bel}{\begin{lemma}}
\newcommand{\enl}{\end{lemma}}
\newcommand{\bet}{\begin{theorem}}
\newcommand{\ent}{\end{theorem}}

\newcommand{\suppress}[1]{}

\mathchardef\mhyphen="2D

\makeatletter
\newcommand*{\rom}[1]{\expandafter\@slowromancap\romannumeral #1@}
\makeatother

\newtheorem{remark}{Remark}
\newtheorem{theorem}{Theorem}
\newtheorem{lemma}{Lemma}
\newtheorem{corollary}{Corollary}
\newtheorem{proposition}{Proposition}

\makeatletter

\newcommand{\Rmnum}[1]{\expandafter\@slowromancap\romannumeral #1@}
\makeatother

\graphicspath{{./fig/}}
\DeclareGraphicsExtensions{.pdf}

\AtBeginEnvironment{equation}{\small}
\AtBeginEnvironment{equation*}{\small}
\AtBeginEnvironment{align}{\small}
\AtBeginEnvironment{align*}{\small}
\AtBeginEnvironment{gather}{\small}
\AtBeginEnvironment{gather*}{\small}
\AtBeginEnvironment{multline}{\small}
\AtBeginEnvironment{multline*}{\small}
\begin{document}
\title{Adaptive Coding for Two-Way Wiretap Channels with One-Sided Strong Secrecy}
\author{Masahito Hayashi and Yanling Chen 
\thanks{The work of MH was supported in part by the General R\&D Projects of 1+1+1 CUHK-CUHK(SZ)-GDST Joint Collaboration Fund (No. GRDP2025-022) and the Guangdong Provincial Quantum Science Strategic Initiative (No. GDZX2505003).} 
\thanks{This article was presented in part at the 2026 International Conference on Collaborative Technologies and Data Science for Smart City Applications (CODASSCA 2026), Yerevan, Armenia, September 7--9, 2026.}
\thanks{Masahito Hayashi is with the School of Data Science, The Chinese University of Hong Kong, Shenzhen, China, with the International Quantum Academy (SIQA), Shenzhen, China, and with the Graduate School of Mathematics, Nagoya University, Nagoya, Japan (e-mail: hmasahito@cuhk.edu.cn).}
\thanks{Yanling Chen is with Volkswagen Infotainment GmbH, Germany, 
Bochum, Germany
 (e-mail: yanling.chen@volkswagen-infotainment.com).}}

\maketitle

\begin{abstract}
We study a discrete memoryless two-way wiretap channel under strong
one-sided secrecy, where User~1's message is confidential and User~2's
message is not required to be secret.  We develop non-adaptive and
key-assisted adaptive inner bounds from a common resolvability estimate
for selected index components.  For the adaptive construction, we prove a
multiround leakage theorem that accounts both for information revealed about
a key before its later use and for the possibility that an encoder uses an
incorrectly decoded key.  The proof includes an ideal-to-implemented coupling,
deterministic codebook selection, and a growing-round argument that removes
the initialization-rate loss.  We eliminate the internal component rates
exactly and characterize the adaptive gain at each fixed auxiliary
distribution.  For an open two-parameter family of binary channels, we also
prove a converse for arbitrary non-adaptive stochastic codes and obtain an
operational adaptive-versus-non-adaptive separation.  For one representative
channel, the complete non-adaptive strong one-sided secrecy capacity region
is identified exactly.
\end{abstract}

\section{Introduction}
\label{sec:introduction}

Two-way communication allows each terminal to transmit and receive over the
same channel, and adaptation can create input dependence that is unavailable
to non-adaptive encoders \cite{Shannon1961,Dueck1979,Schalkwijk1983,Han1984,ZBS1986,PW1989, TU2007}.  When an
external eavesdropper is present, this interaction must be combined with
information-theoretic secrecy \cite{Shannon1949,Wyner1975}.  The normalized-leakage criterion commonly called weak secrecy allows the total leakage to remain nonzero or even to grow sublinearly with blocklength, whereas strong secrecy requires the total leakage to vanish \cite{src:Csiszar1996,MH2006}.  

In this paper, we consider a
discrete memoryless two-way wiretap channel under strong one-sided secrecy:
User~1's message must be protected, whereas User~2's message is not subject
to a secrecy constraint. The channel model is shown in Fig. \ref{fig:tw-wc-model}.  
This asymmetry is operationally significant because only User~1's payload
must satisfy the secrecy criterion, whereas User~2's unprotected payload can
also contribute randomness in the eavesdropper's marginal.
Here and throughout, \emph{payload} means the users' actual message
content, as distinguished from embedded keys, coding-randomization indices,
ciphertext coordinates, and dummy symbols used only for initialization or
code implementation.  A message variable remains operational message
information even when it is averaged in a resolvability argument.
User~2's payload is useful information decoded by its intended receiver, and
the same coordinate can also supply randomness in the distribution observed
by the eavesdropper.  It is not declared as public side information available to
the eavesdropper; rather, no secrecy guarantee is imposed on it.

Early work on two-way wiretap channels treated Gaussian and additive models
and developed non-adaptive achievable regions \cite{TY2007,TY2008}.  For the
general discrete memoryless model, adaptive key exchange was used to obtain
weak-secrecy regions \cite{GKYG2013}; one-sided and individual secrecy were
studied separately in \cite{QCHT2016,QDT2017}.  Strong secrecy has also been
analyzed through resolvability-based non-adaptive coding and key-assisted adaptive 
coding \cite{PB2011,HC2023,CH2025}.  These works provide the
coding foundations used here.  The present paper addresses two issues that
become decisive in the asymmetric setting: a complete treatment of leakage
propagation when a previously generated key is later used, and a comparison
of adaptive and non-adaptive inner bounds that remains valid after input
optimization and time sharing.

The multiround analysis has two coupled difficulties.  A ciphertext in
round $t$ uses a key generated in round $t-1$, so the eavesdropper's past observations
may already reveal information about that key.  Moreover, the implemented encoder
uses a decoded estimate of the key, whereas an ideal one-time-pad argument
uses the true key.  A proof must therefore track both the accumulated
confidential payload and the key retained for the next round, and then
transfer the ideal true-key estimate to the implemented decoded-key process.  At
the region level, a separate difficulty arises: strict enlargement at one
fixed auxiliary distribution need not imply strict enlargement of the
closed convexified union, because another non-adaptive distribution or time
sharing may cover the added boundary.

The main contributions are the following.
First, we isolate a resolvability estimate for selected index components and combine
it with complete-index decoding to obtain a reusable one-round mechanism.
The construction keeps operational payload, protected coordinates, and
coding-randomization coordinates distinct, even when an unprotected payload
is averaged in the eavesdropper's marginal.

Second, we develop a multiround propagation theorem for the key-assisted
adaptive construction.  Its state contains the accumulated confidential
payload and the key retained for the next round.  The theorem handles prior
key leakage, one-time-pad use with side information, and transfer from the
ideal true-key process to the implemented decoded-key process.  For a fixed
number of rounds the per-round estimates are exponential in the round
blocklength, while a separate growing-round argument establishes ordinary
achievability without asserting a positive exponent in total blocklength.

Third, we verify the abstract conditions for the concrete construction and
eliminate all component rates exactly.  At each strictly feasible auxiliary
distribution, the non-adaptive region is a rectangle and the adaptive region
is a containing polygon.  We characterize strict inclusion, the complete
User~1 gain profile, and the common maximum sum-rate.

Fourth, for an open family of binary channels with noiseless legitimate
cross-links, we prove a converse for the entire non-adaptive stochastic code
class defined in Section~\ref{sec:model-framework}.  The adaptive
construction attains a point outside the resulting operational capacity
region.  For one representative channel, an exact analytic maximization together
with the reverse inclusion induced by the uniform product input identifies
the non-adaptive capacity region exactly.

The exact rate elimination and fixed-distribution comparisons concern the
specific constructions developed here.  They are distinct from the binary
operational converse, which applies to arbitrary non-adaptive stochastic
codes in the stated code class.  User~2's payload is averaged in the
resolvability analysis but is not supplied to the eavesdropper as public side
information.  A supplementary appendix records the formal relation to an
earlier symmetric component-rate system; it is not part of the operational
main line.

The remainder of the paper is organized as follows.  Section~\ref{sec:model-framework} defines the
channel, code classes, secrecy criterion, achievable regions, and common
information quantities.  Section~\ref{sec:selective-resolvability} states the one-round resolvability
bound for selected index components.  Section~\ref{sec:nonadaptive} proves the
non-adaptive strong-secrecy inner bound.
Section~\ref{sec:one-round-selected-confidential} combines complete-index
decoding with that leakage estimate to formulate the general one-round
mechanism.  Section~\ref{sec:asymmetric-construction} instantiates the
mechanism as the direct asymmetric adaptive construction, and
Section~\ref{sec:embedded-key-propagation} treats the inter-round leakage
and decoded-key mismatch and verifies the required structural conditions.  Section~\ref{sec:region-geometry} proves the exact projection and
fixed-distribution geometry.  Section~\ref{sec:three-inner-bound-comparison} compares the weak, strong
non-adaptive, and adaptive construction-specific inner bounds, and
Section~\ref{sec:binary-family} proves an operational adaptive-versus-non-adaptive separation for a binary channel family.
Appendix~\ref{sec:symmetric-relation} records the formal relation to the earlier symmetric component-rate system.
Section~\ref{sec:conclusion} concludes the paper.  The appendices provide the one-round
resolvability proof, auxiliary multiround arguments, and constructive
projection details.

\section{Channel model, codes, and common information quantities}
\label{sec:model-framework}

\subsection{Channel model and codes}
We consider a discrete memoryless two-way wiretap channel with finite
input alphabets $\mathcal X_1,\mathcal X_2$, legitimate-output alphabets
$\mathcal Y_1,\mathcal Y_2$, and eavesdropper-output alphabet
$\mathcal Z$.  Its transition law is
$P_{Y_1Y_2Z\mid X_1X_2}$, and its operation is illustrated in
Fig.~\ref{fig:tw-wc-model}.  User~$i$ sends a message $M_i$ to the other
user and observes $Y_i$, while the external eavesdropper observes $Z$.
Only User~1's message is required to be secret.  We use $P_{Y_i\mid X_1X_2}$ and $P_{Z\mid X_1X_2}$ for the corresponding marginals of the joint transition law.

\begin{figure}[t]
\centering
\begin{tikzpicture}[>=stealth,scale=.80,transform shape]
  \node (m1) at (0,0) {$M_1$};
  \node [draw] (e1) at (1.65,0) {Encoder 1};
  \node [draw,rounded corners,fill=yellow!10,
    minimum width=2.25cm,minimum height=0.80cm] (ch) at (4.45,0)
    {$P_{Y_1Y_2Z\mid X_1X_2}$};
  \node [draw] (e2) at (7.25,0) {Encoder 2};
  \node (m2) at (8.9,0) {$M_2$};
  \node (mh2) at (0,-1.25) {$\widehat M_2$};
  \node [draw] (d1) at (1.65,-1.25) {Decoder 1};
  \node [draw] (d2) at (7.25,-1.25) {Decoder 2};
  \node (mh1) at (8.9,-1.25) {$\widehat M_1$};
  \path[->] (m1) edge (e1);
  \draw[->] (e1) -- (ch) node[midway,above] {$X_1$};
  \draw[->] (e2) -- (ch) node[midway,above] {$X_2$};
  \path[->] (m2) edge (e2);
  \path[->] (d1) edge (mh2);
  \path[->] (d2) edge (mh1);
  \draw[->] (3.75,-0.4) -- (3.75,-1.25) -- (d1);
  \draw[->] (5.15,-0.4) -- (5.15,-1.25) -- (d2);
  \draw[->] (4.45,-0.4) -- (4.45,-1.55);
  \node at (3.35,-1.55) {$Y_1$};
  \node at (5.55,-1.55) {$Y_2$};
  \node at (4.45,-1.78) {$Z$};
  \node at (1.65,-2.25) {User 1};
  \node at (7.25,-2.25) {User 2};
  \node at (4.45,-2.25) {Eavesdropper};
\end{tikzpicture}
\caption{Discrete memoryless two-way wiretap channel with an external
eavesdropper.}
\label{fig:tw-wc-model}
\end{figure}

For a block of $N$ channel uses, User~$i$ has a message set
$\mathcal M_{i,N}$, and $M_i$ is uniform on this set.  A rate sequence
$R_{i,N}$ means
\begin{equation}
 R_{i,N}:=\frac1N\log|\mathcal M_{i,N}|.
 \label{eq:model-rate-sequence}
\end{equation}
All logarithms are natural, so rates are measured in nats per channel
use.  Integer rounding of exponential alphabet sizes is suppressed.
We write $X_i^N=(X_{i,1},\ldots,X_{i,N})$ and similarly for the
legitimate and eavesdropper outputs.

A non-adaptive stochastic encoder for User~$i$ is a conditional law
$P_{X_i^N\mid M_i}$; its entire input sequence depends only on its
message and private local randomness.  The two local random seeds are
independent of each other and of the messages.  Shared encoder randomness
and correlated encoder seeds are not included in this code class, although
the deterministic public code design and a deterministic time-sharing
schedule may be common knowledge.  Thus, for every fixed public
non-adaptive code,
\begin{equation}
P_{X_1^NX_2^N\mid M_1M_2}
=P_{X_1^N\mid M_1}P_{X_2^N\mid M_2}.
\label{eq:model-nonadaptive-product-kernel}
\end{equation}
An adaptive stochastic encoder is a
sequence of causal kernels
\begin{equation}
 P_{X_{i,t}\mid M_i,Y_i^{t-1}},\qquad t=1,\ldots,N,
 \label{eq:model-adaptive-encoder}
\end{equation}
where private encoder randomness is included in these kernels.  Thus an
adaptive encoder may use its preceding channel outputs but not future
outputs.  A decoder at User~$i$ estimates the other user's message from
its own message, transmitted sequence, and received sequence; we write
\begin{equation}
 \widehat M_{3-i}=\psi_{i,N}(M_i,X_i^N,Y_i^N).
 \label{eq:model-decoder}
\end{equation}
The sequence $X_i^N$ is locally known to User~$i$, because it is the sequence actually generated and sent by that terminal; the decoder need not reconstruct the private random seed.
Allowing a stochastic decoder would not change the results, so the
decoders are taken to be deterministic.

An $N$-use code $\mathcal C_N$ consists of the two message sets, the two
encoders, and the two decoders.  It is called non-adaptive or adaptive
according to the encoders used.  Its average probability of decoding error is
\begin{equation}
 P_e(\mathcal C_N):=
 \Pr_{\mathcal C_N}\!\left[
   (\widehat M_1,\widehat M_2)\ne(M_1,M_2)
 \right].
 \label{eq:model-average-error}
\end{equation}
The probability averages over the uniform messages, private
randomness, and channel outputs.  For a deterministic selected
codebook, the codebook is part of $\mathcal C_N$ and is therefore fixed
in \eqref{eq:model-average-error}.

\subsection{Strong one-sided secrecy and achievability}
Only User~1's message is protected.  A sequence of deterministic codes
satisfies strong one-sided secrecy if
\begin{equation}
 I(M_1;Z^N\mid\mathcal C_N)\longrightarrow0.
 \label{eq:model-strong-one-sided}
\end{equation}
No secrecy constraint is imposed on $M_2$.  Thus secrecy of $M_2$ is not guaranteed, but $M_2$ is not thereby given to the eavesdropper as side information.  In particular, the criterion is not $I(M_1;Z^N,M_2\mid\mathcal C_N)\to0$.  Because a deterministic
code $\mathcal C_N$ is fixed, the conditioning in
\eqref{eq:model-strong-one-sided} merely records the public code and
may be suppressed after code selection.  Before selection, we instead
write $\mathsf C_N$ for a random codebook; quantities such as
$\mathbb E_{\mathsf C_N}I(M_1;Z^N\mid\mathsf C_N)$ are ensemble
averages and should not be confused with the leakage of one selected
code.

A rate pair $(R_1,R_2)$ is achievable by adaptive, respectively
non-adaptive, codes under strong one-sided secrecy if there is a
sequence $\{\mathcal C_N\}_{N\ge1}$ of the corresponding type such that
\begin{align}
 R_{i,N}&\longrightarrow R_i,\qquad i=1,2,\label{eq:model-rate-limit}\\
 P_e(\mathcal C_N)&\longrightarrow0,\label{eq:model-error-limit}\\
 I(M_1;Z^N\mid\mathcal C_N)&\longrightarrow0.
 \label{eq:model-leakage-limit}
\end{align}
The one-block constructions below initially use blocklength $n$.  The
adaptive construction uses $T$ such rounds and hence active blocklength
$nT$; when a code is embedded in an arbitrary total blocklength $N$, we
use fixed-input padding and calculate the rates with respect to $N$.
Thus every achievability statement is understood in the sense of
\eqref{eq:model-rate-limit}--\eqref{eq:model-leakage-limit}, not merely
along a restricted subsequence of total blocklengths.

The adaptive code constructed in Section~\ref{sec:asymmetric-construction} is a blockwise causal specialization of \eqref{eq:model-adaptive-encoder}.  Each round uses a fresh non-adaptive random codebook, while adaptation occurs across rounds through a key decoded in the preceding round.  Round~1 initializes the key and later rounds carry payload.

\subsection{Common information quantities}
Let $\mathcal Q$ be the set of distributions
\begin{equation}
P_{V_1}P_{V_2}P_{X_1\mid V_1}P_{X_2\mid V_2}.
\label{eq:model-Q}
\end{equation}
Operationally, $V_i^n$ is the auxiliary codeword selected from User~$i$'s random codebook, and $P_{X_i\mid V_i}^{\otimes n}$ is its memoryless stochastic prefix.  The product form describes independent codebook generation within one round; inter-round adaptation instead enters through the preceding decoded key in the current index. Note that the stochastic prefixes are absorbed into the induced channel, for instance,
\begin{equation}
	P_{Z\mid V_1V_2}=P_{Z\mid X_1X_2}\mathbin{\cdot}
	P_{X_1\mid V_1}\mathbin{\cdot}P_{X_2\mid V_2}.
	\label{eq:induced-eve-mac}
\end{equation}

Together with the channel, each $P\in\mathcal Q$ induces all information
quantities below.  For distributions on the same finite alphabet, define
\begin{equation}
D_{1+s}(P\Vert Q):=\frac1s\log\sum_zP(z)^{1+s}Q(z)^{-s},
\end{equation}
and, for a joint law $P_{ZX}$,
\begin{equation}
I_{1+s}^{\uparrow}(Z;X):=D_{1+s}(P_{ZX}\Vert P_ZP_X).
\label{eq:upward-renyi-definition}
\end{equation}
This is the convention of \cite[Eq.~(33)]{HC2023}.  

We next define the downward quantity used in the reliability bounds.  For
an input law $P_V$, a finite-output channel $W_{Y\mid V}$, and
$0<\alpha<1$, set
\begin{equation}
I_\alpha^{\downarrow}(Y;V)
:=\frac{\alpha}{\alpha-1}\log\sum_y
\left(\sum_v P_V(v)W_{Y\mid V}(y\mid v)^\alpha\right)^{1/\alpha}.
\label{eq:downward-renyi-definition}
\end{equation}
With $\alpha=\frac{1}{1+s}$, the Gallager random-coding function \cite{Gallager68} satisfies
$E_0(s,P_V,W)=sI_{\frac{1}{1+s}}^{\downarrow}(Y;V)$ under this convention.

Consider the case where $X$ is independent of $V$ and is
known to the decoder.  We define
\begin{equation}
I_\alpha^{\downarrow}(Y;V\mid X)
:=I_\alpha^{\downarrow}((X,Y);V),
\label{eq:conditional-downward-definition}
\end{equation}
where the channel from $V$ to $(X,Y)$ is
$P_X(x)W_{Y\mid VX}(y\mid v,x)$.  
Equivalently,
\begin{align}
&I_\alpha^{\downarrow}(Y;V\mid X)\notag\\
&\quad=\frac{\alpha}{\alpha-1}\log\sum_xP_X(x)\sum_y
 \left(\sum_vP_V(v)W_{Y\mid VX}(y\mid v,x)^\alpha
 \right)^{1/\alpha}.
\label{eq:conditional-downward-expanded}
\end{align}
Thus the logarithm is taken after averaging the Gallager sum over the
receiver side information, rather than separately for each value of $X$. 

For the two reliability decoders, define the induced single-user channels
\begin{align}
W_1(x_2,y_2\mid v_1)
&:=P_{X_2}(x_2)\sum_{x_1}P_{X_1\mid V_1}(x_1\mid v_1)
 P_{Y_2\mid X_1X_2}(y_2\mid x_1,x_2),\notag\\
W_2(x_1,y_1\mid v_2)
&:=P_{X_1}(x_1)\sum_{x_2}P_{X_2\mid V_2}(x_2\mid v_2)
 P_{Y_1\mid X_1X_2}(y_1\mid x_1,x_2),
\label{eq:induced-reliability-channels}
\end{align}
where $P_{X_i}$ is the marginal induced by
$P_{V_i}P_{X_i\mid V_i}$.  These channels include the
memoryless stochastic prefixes, and their outputs include the transmitted
sequence already known to the receiving user. 

For $s\in(0,1]$, define
\begin{align}
\begin{split}\label{eq:model-finite-s}
	A_s(P)&:=I_{1/(1+s)}^{\downarrow}(Y_2;V_1\mid X_2),\\
	B_s(P)&:=I_{1/(1+s)}^{\downarrow}(Y_1;V_2\mid X_1),\\
	C_s(P)&:=I_{1+s}^{\uparrow}(Z;V_1),\\
	D_s(P)&:=I_{1+s}^{\uparrow}(Z;V_2),\\
	E_s(P)&:=I_{1+s}^{\uparrow}(Z;V_1,V_2).
\end{split}
\end{align}
Their Shannon limits are
\begin{align}
\begin{split}\label{eq:model-Shannon}
	A(P)&:=I(Y_2;V_1\mid X_2),\\
	B(P)&:=I(Y_1;V_2\mid X_1),\\
	C(P)&:=I(Z;V_1),\\
	D(P)&:=I(Z;V_2),\\
	E(P)&:=I(Z;V_1,V_2).
\end{split}
\end{align}
Operationally, $A$ and $B$ are the two legitimate auxiliary-codeword rate limits, $C$ and $D$ are the individual resolvability costs at the eavesdropper, and $E$ is the joint resolvability cost.  Their finite-$s$ counterparts control the exponential bounds, while these Shannon limits govern the projected geometry.
For each fixed $P$, we have
\begin{equation}
(A_s,B_s,C_s,D_s,E_s)\longrightarrow(A,B,C,D,E)
\quad\text{as }s\downarrow0.
\label{eq:model-continuity}
\end{equation}
We use the strict feasibility set
\begin{equation}
\mathcal Q_{\rm F}:=
\left\{
	P\in\mathcal Q\left| 
		\begin{array}{l}
			C(P)<A(P),\\
			D(P)<B(P),\\ 
			E(P)<A(P)+B(P)
		\end{array}
		\right.
\right\}.
\label{eq:model-QF}
\end{equation}
These inequalities are precisely the nonemptiness conditions for the
strict auxiliary-rate systems used below.  A closed fixed-distribution
rate formula is invoked only for $P\in\mathcal Q_{\rm F}$; closure does
not create an achievable fixed-$P$ region when the corresponding strict
system is empty.

\subsection{Notation summary}
Table~\ref{tab:common-notation} collects the notation used across the
non-adaptive and adaptive parts.  The adaptive round variables are listed
separately in Table~\ref{tab:asym-round-operation}.
\begin{table}[t]
\centering
\caption{Common notation.}
\label{tab:common-notation}
\footnotesize
\begin{tabular}{p{1.45cm}p{5.45cm}}
\hline
Symbol & Meaning \\
\hline
$X_i,Y_i,Z$ & Legitimate user's inputs, outputs, and the eavesdropper's observation\\
$V_i$ & Auxiliary codeword variable before stochastic prefixing \\
$\mathcal Q,\mathcal Q_{\rm F}$ & Product auxiliary laws and strict feasibility subset \\
$A_s,B_s$ & Finite-$s$ reliability quantities \\
$C_s,D_s,E_s$ & Individual and joint finite-$s$ resolvability quantities \\
$A,B,C,D,E$ & Shannon limits of the preceding quantities \\
$\mathsf C_N,\mathsf C_t$ & Random block and round codebooks \\
$\mathcal R_{\rm N},\mathcal R_{\rm A}$ & Overall non-adaptive and adaptive inner bounds \\
\hline
\end{tabular}
\end{table}

\section{One-round resolvability for selected index components}
\label{sec:selective-resolvability}
This section isolates the resolvability estimate used by both the
non-adaptive and adaptive constructions.  The bound retains only specified components of the complete codebook indices
in the leakage criterion.  In one block, let User~$i$'s auxiliary
codeword be indexed by a triple $(A_i,J_i,B_i)$.  The component $A_i$ is
retained as protected information in the leakage criterion.  The component
$J_i$ carries information that is not protected by that criterion, whereas
$B_i$ is introduced solely for coding randomization.  Although $J_i$ and
$B_i$ have different operational meanings, both are averaged in
the eavesdropper's marginal.

The stochastic prefixes are included in the induced memoryless MAC
$P_{Z\mid V_1V_2}$ of \eqref{eq:induced-eve-mac}.  The three nonempty
transmitter subsets therefore correspond to the averaging rates
\begin{align}
\{1\}:&\quad R_{J_1}+R_{B_1}\quad\text{against }C_s(P),\notag\\
\{2\}:&\quad R_{J_2}+R_{B_2}\quad\text{against }D_s(P),\notag\\
\{1,2\}:&\quad R_{J_1}+R_{B_1}+R_{J_2}+R_{B_2}
\quad\text{against }E_s(P).
\label{eq:selected-subset-map}
\end{align}

\begin{proposition}[Resolvability for selected index components]
\label{prop:selected-index-component}
Fix $P\in\mathcal Q$ and $s\in(0,1]$.  Suppose that
$A_1,J_1,B_1,A_2,J_2,B_2$ are mutually independent and uniform before
codebook lookup, and that the two random codebooks are generated
independently according to $P_{V_1}^{\otimes n}$ and
$P_{V_2}^{\otimes n}$.  Then
\begin{align}
&\mathbb E_{\mathsf C}I(A_1,A_2;Z^n\mid\mathsf C)\notag\\
&\quad\leq\frac1s\Bigl[
 e^{ns(E_s(P)-R_{J_1}-R_{B_1}-R_{J_2}-R_{B_2})}\notag\\
&\qquad
 +e^{ns(C_s(P)-R_{J_1}-R_{B_1})}
 +e^{ns(D_s(P)-R_{J_2}-R_{B_2})}\Bigr].
\label{eq:selected-index-component-bound}
\end{align}
The same statement holds conditionally on a history variable whenever,
conditional on that history, the six coordinates retain the stated
independence and uniformity and the current codebook is independent of the
history with its original product ensemble law.
\end{proposition}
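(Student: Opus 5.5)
We would prove the bound by the standard Rényi-resolvability route of \cite{HC2023}, applied to a MAC soft-covering problem indexed by the protected pair $(A_1,A_2)$. The first reduction is to a divergence. Since $(A_1,A_2)$ is uniform and independent of the codebook, we have
\[
I(A_1,A_2;Z^n\mid\mathsf C)\le \sum_{a_1,a_2}P(a_1,a_2)\,D\bigl(P_{Z^n\mid a_1a_2,\mathsf C}\,\big\Vert\,P_{Z\mid V_1V_2}\circ(P_{V_1}P_{V_2})^{\otimes n}\bigr).
\]
This holds because mutual information equals the average divergence to the actual marginal, which is at most the average divergence to any fixed reference. We then bound $D\le D_{1+s}$ and use the concavity of $x\mapsto \frac1s\log x$ with Jensen to move $\mathbb E_{\mathsf C}$ inside the logarithm. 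This gives
\[
\mathbb E_{\mathsf C}I\le \frac1s\log \mathbb E_{\mathsf C}\sum_z P_{Z^n\mid a,\mathsf C}(z)^{1+s}Q(z)^{-s},
\]
and finally $\log x\le x-1$. By the symmetry of the ensemble, we may fix a single $(a_1,a_2)$.

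For fixed $(a_1,a_2)$, the conditional output is an average of $P_{Z^n\mid V_1V_2}(z\mid v_1(a_1,k_1),v_2(a_2,k_2))$ over $K_i=(J_i,B_i)$, which is uniform on $L_i:=e^{n(R_{J_i}+R_{B_i})}$ values. Write $W(z\mid k_1,k_2)$ for this summand. The key step is to expand $\bigl(\frac1{L_1L_2}\sum_{k_1,k_2}W(z\mid k_1,k_2)\bigr)^{1+s}$ as $\frac1{L_1L_2}\sum_{k_1,k_2}W(z\mid k_1,k_2)\,\bigl(\frac1{L_1L_2}\sum_{k'_1,k'_2}W(z\mid k'_1,k'_2)\bigr)^{s}$. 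We then take the expectation over codewords, conditioning on the $(k_1,k_2)$ codewords, and apply Jensen to $x^s$ ($s\le1$) over the remaining codewords. We split the inner sum according to which coordinates of $(k'_1,k'_2)$ coincide with $(k_1,k_2)$, which gives four cases. If neither coincides, the conditional mean is at most $P_Z(z)$. If only $k'_2=k_2$, it is at most $\frac1{L_1}P_{Z\mid V_2}(z\mid v_2)$. If only $k'_1=k_1$, it is at most $\frac1{L_2}P_{Z\mid V_1}(z\mid v_1)$. If both coincide, it is at most $\frac1{L_1L_2}W(z\mid v_1,v_2)$. Applying the subadditivity $(x+y+u+w)^s\le x^s+y^s+u^s+w^s$ and multiplying by $Q(z)^{-s}=P_Z(z)^{-s}$, summing over $z$ and the product codeword law yields
\[
1+L_1^{-s}e^{sD_s\text{-type term}}+\cdots .
\]
The cross term pairs $V_2$ with $L_1$, which gives $e^{ns(\,I^{\uparrow}_{1+s}(Z;V_2)-R_{J_1}-R_{B_1})}$.

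This last point is where we expect the main obstacle. The naive split attaches each individual Rényi quantity to the wrong user's rate, so the labels must be checked carefully. The fix we would try first is to use the identity $\sum_z P_{ZV_2}(z,v_2)^{1+s}$-type tilting correctly. When only $k'_2=k_2$ coincides, the randomness that still averages is user~1's, so the relevant divergence is that of $P_{Z\mid V_2}$ against $P_Z$. This is indeed $I^{\uparrow}(Z;V_2)=D_s$, paired with rate $L_1$, whereas the statement pairs $D_s$ with $R_{J_2}+R_{B_2}$. We would therefore redo the split with the roles reversed. When $k'_1\ne k_1$ but $k'_2=k_2$, we average over $v_1'$ while keeping $v_2$; the term is $\frac1{L_1}$ times a quantity bounded after summation by $e^{nsC_s}$ only when we bound $P_{Z\mid V_2}\le$ via data processing. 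If this fails, we would invoke the standard MAC resolvability bound in \cite{HC2023}, whose subset form \eqref{eq:selected-subset-map} matches the statement, and check its derivation term by term. Dropping the constant $1$ after $\log x\le x-1$ gives the bound with the overall factor $\frac1s$; an extra factor of at most $3$ would be absorbed by tightening via $\log(1+x)\le x$.

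The conditional version follows verbatim. Conditioned on the history, the six coordinates have the same independence and uniformity and the codebook has the same product law, so every expectation above is unchanged. Averaging the conditional bound over the history then gives the claim.
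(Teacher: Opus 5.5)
Your top-level reduction is the same as the paper's. You fix the retained pair $(a_1,a_2)$, compare $P_{Z^n\mid a_1a_2,\mathsf C}$ with $Q=P_Z^{\otimes n}$, and average using the identity that the average divergence to $Q$ equals $I(A_1,A_2;Z^n\mid\mathsf C)+D(P_{Z^n\mid\mathsf C}\Vert Q)$. The paper then cites \cite[Lemma~2]{HC2023} for the expected divergence, whereas you re-derive that lemma. Your derivation is the right one, but it contains a counting error. The ``main obstacle'' you describe comes from that error; it is not a real mismatch with the statement.

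Condition on $v_1=v_1(k_1)$ and $v_2=v_2(k_2)$. There are $L_1-1$ pairs with $k_2'=k_2$ and $k_1'\ne k_1$. Each has weight $1/(L_1L_2)$ and conditional mean $P_{Z\mid V_2}(z\mid v_2)$. Their total is therefore $\frac{L_1-1}{L_1L_2}P_{Z\mid V_2}(z\mid v_2)\le\frac1{L_2}P_{Z\mid V_2}(z\mid v_2)$. It is not $\frac1{L_1}P_{Z\mid V_2}(z\mid v_2)$, which is not even a valid bound when $L_1>L_2+1$. Symmetrically, the pairs with only $k_1'=k_1$ contribute at most $\frac1{L_1}P_{Z\mid V_1}(z\mid v_1)$.

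Now apply subadditivity of $x^s$, multiply by $W(z\mid k_1,k_2)Q(z)^{-s}$, sum over $z$, average over the product codeword law, and tensorize. This gives
\begin{align*}
\mathbb E_{\mathsf C}\sum_{z^n}P_{Z^n\mid a_1a_2,\mathsf C}(z^n)^{1+s}Q(z^n)^{-s}
\le{}&1+e^{ns(C_s(P)-R_{J_1}-R_{B_1})}+e^{ns(D_s(P)-R_{J_2}-R_{B_2})}\\
&+e^{ns(E_s(P)-R_{J_1}-R_{B_1}-R_{J_2}-R_{B_2})},
\end{align*}
where the ``neither coincides'' term gives exactly the leading $1$. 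This is precisely the pairing in the statement. Then $\frac1s\log(1+x)\le x/s$ gives the bound with coefficient one on each term, so no factor of three ever arises.

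Hence no role reversal is needed. The remedy you sketch would also fail: no data-processing argument converts the $V_2$-tilted quantity $e^{nsD_s(P)}$ into $e^{nsC_s(P)}$, because $I^{\uparrow}_{1+s}(Z;V_1)$ and $I^{\uparrow}_{1+s}(Z;V_2)$ are unrelated in general.

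A one-line sanity check would have caught the slip. As $L_2\to\infty$, User~2's sub-codebook reproduces $P_{V_2}^{\otimes n}$, so User~1 effectively faces $P_{Z\mid V_1}$. The only surviving condition must then be $R_{J_1}+R_{B_1}>C(P)$, whereas your labelling would demand $R_{J_1}+R_{B_1}>D(P)$.

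With the count corrected, your argument is a complete self-contained proof of the cited lemma, and hence of the proposition. Your treatment of the history-conditioned version coincides with the paper's.
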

\begin{IEEEproof}
Appendix~\ref{app:selected-index-component-proof} gives the specialization of
the two-transmitter MAC resolvability argument in \cite{HC2023}.  The
symmetric notation above is obtained by treating $(J_i,B_i)$ as the
averaged super-index of transmitter $i$.  The conditional version follows
by applying the same bound for each history realization and then using the
tower property.
\end{IEEEproof}

The proposition is only a leakage estimate.  Section~\ref{sec:one-round-selected-confidential}
combines it with complete-index decoding to obtain the one-round
communication mechanism used later.

\section{Non-adaptive strong one-sided secrecy}
\label{sec:nonadaptive}
Fix $P\in\mathcal Q$.  In the non-adaptive construction, User~$i$
transmits an actual message $M_i$ of rate $R_i$.  Thus $(R_1,R_2)$ is the
payload-rate pair.  In addition, User~$i$ independently generates a
coding-randomization index of rate $R_{i,r}$.  This index carries no payload
and is introduced only to randomize the distribution observed by
the eavesdropper.  The complete codebook index of User~$i$ therefore has rate
$R_i+R_{i,r}$.  Each complete index selects an i.i.d. $V_i^n$ codeword,
followed by the memoryless stochastic prefix $P_{X_i\mid V_i}^{\otimes n}$.
The other user decodes the complete message and coding-randomization index
pair by maximum-likelihood decoding using its own transmitted sequence and
channel output.

\begin{proposition}[One-block non-adaptive bound]
\label{prop:nonadaptive-one-block}
For fixed $P\in\mathcal Q$, $s\in(0,1]$, and nonnegative rates, the random
codebook ensemble satisfies
\begin{align}
\mathbb E_{\mathsf C}P_e^n(\mathsf C)
&\leq a_n,
\label{eq:nonadaptive-error-ensemble}\\
\mathbb E_{\mathsf C}I(M_1;Z^n\mid\mathsf C)
&\leq b_n,
\label{eq:nonadaptive-leakage-ensemble}
\end{align}
where
\begin{align}
	a_n :=  
	& e^{ns(R_1+R_{1,r}-A_s(P))} +e^{ns(R_2+R_{2,r}-B_s(P))},
	\label{eq:nonadaptive-error-ensemble-an}\\
	b_n :=\frac1s\Bigl[
	&e^{ns(D_s(P)-R_{2,r}-R_2)}+e^{ns(C_s(P)-R_{1,r})}\notag\\
	&+e^{ns(E_s(P)-R_{1,r}-R_{2,r}-R_2)}\Bigr].\label{eq:nonadaptive-leakage-ensemble-bn}
\end{align}
Consequently, one deterministic codebook realization satisfies
\begin{equation}
P_e^n\leq2a_n,
\qquad I(M_1;Z^n)\leq2b_n.
\label{eq:nonadaptive-deterministic-consequence}
\end{equation}
\end{proposition}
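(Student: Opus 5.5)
We treat the two bounds separately and then combine them with a Markov argument.

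\emph{Reliability.}
User~2's decoder sees its own transmitted sequence $X_2^n$ and output $Y_2^n$. Since the codebooks and seeds are independent, $X_2^n$ is independent of User~1's complete index $(M_1,M_{1,r})$ and of the codeword $V_1^n$. After averaging over User~2's codebook and message, the effective channel from $V_1^n$ to $(X_2^n,Y_2^n)$ is memoryless and equals $W_1^{\otimes n}$ from \eqref{eq:induced-reliability-channels}. The stochastic prefix $P_{X_1\mid V_1}$ is absorbed into $W_1$, and $X_2$ appears with its marginal $P_{X_2}$ because User~2's codewords are i.i.d.\ $P_{V_2}^{\otimes n}$ and pass through the prefix.

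User~2 decodes the $e^{n(R_1+R_{1,r})}$ complete indices by maximum likelihood on $W_1^{\otimes n}$. Gallager's random-coding bound with parameter $\rho=s\in(0,1]$ then gives an ensemble average error of at most
\[
e^{ns(R_1+R_{1,r})}e^{-nE_0(s,P_{V_1},W_1)}=e^{ns(R_1+R_{1,r}-A_s(P))}.
\]
This uses $E_0=sI^{\downarrow}_{1/(1+s)}$ and the conditional form \eqref{eq:conditional-downward-expanded}; the product over coordinates factors because the Gallager sum is averaged over $x_2$ inside each coordinate.

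The one point to check is that the ensemble average over User~2's codebook turns the conditional i.i.d.\ law of $X_2^n$ into the product $P_{X_2}^{\otimes n}$. This holds for a fixed User~2 index, and the error event of decoder~2 depends on User~2's quantities only through $X_2^n$. The symmetric argument gives the $B_s$ term, and a union bound over the two decoders gives $a_n$.

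\emph{Secrecy.}
We apply Proposition~\ref{prop:selected-index-component} with the assignment
\[
A_1=M_1,\quad J_1=\emptyset,\quad B_1=M_{1,r},\quad A_2=\emptyset,\quad J_2=M_2,\quad B_2=M_{2,r}.
\]
All six coordinates are independent and uniform, so the proposition gives
\[
\mathbb E_{\mathsf C} I(M_1;Z^n\mid\mathsf C)\le b_n,
\]
with the averaging rates $R_{1,r}$, $R_2+R_{2,r}$, and their sum placed exactly as in \eqref{eq:nonadaptive-leakage-ensemble-bn}. Trivial components are treated as rate-zero indices, for which the proposition still holds. Since $A_2$ is trivial, $I(A_1,A_2;Z^n)=I(M_1;Z^n)$. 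This step is where $M_2$ is averaged but not given to the eavesdropper.

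\emph{Derandomization.}
Both quantities are nonnegative random variables over the codebook ensemble. By Markov's inequality,
\[
\Pr[P_e^n>2a_n]<\tfrac12\quad\text{and}\quad \Pr[I>2b_n]<\tfrac12,
\]
where the inequalities are strict whenever $a_n,b_n>0$. A union bound leaves positive probability for a codebook satisfying both bounds simultaneously, which gives \eqref{eq:nonadaptive-deterministic-consequence}. We expect the main obstacle to be the reliability step, namely justifying the memoryless induced channel with side information $X_2^n$. The remaining steps are direct invocations.
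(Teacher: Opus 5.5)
Your proposal is correct and follows the paper's proof. It uses complete-index maximum-likelihood decoding through the induced channels $W_1,W_2$ with Gallager's bound at $\rho=s$ and a union bound, then applies Proposition~\ref{prop:selected-index-component} with the same trivial-component assignment; the only cosmetic difference is codebook selection, where you combine two strict Markov bounds (valid since $a_n,b_n>0$) by a union bound, whereas the paper uses the normalized sum $P_e^n/a_n+I(M_1;Z^n)/b_n$ with expectation at most two, and both work.
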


\begin{IEEEproof}
The proof is given in Appendix \ref{app:nonadaptive-one-block-proof}.
\end{IEEEproof}

The following proposition is an algebraic elimination result for the
internal rates of this particular construction.  Its only-if direction is
not a converse for general non-adaptive codes.
\begin{proposition}[Exact elimination of the non-adaptive coding-randomization rates]
\label{prop:nonadaptive-strict-projection}
Fix $P\in\mathcal Q_{\rm F}$ and abbreviate
$A=A(P)$, $B=B(P)$, $C=C(P)$, $D=D(P)$, and $E=E(P)$.
For a payload-rate pair $(R_1,R_2)\in\mathbb R_+^2$, there exist
coding-randomization rates $(R_{1,r},R_{2,r})\in\mathbb R_+^2$ satisfying
\begin{align}
R_1+R_{1,r}&<A,&R_2+R_{2,r}&<B,
\label{eq:nonadaptive-shannon-reliability}\\
R_{1,r}&>C,&R_2+R_{2,r}&>D,
\label{eq:nonadaptive-shannon-individual}\\
R_{1,r}+R_2+R_{2,r}&>E
\label{eq:nonadaptive-shannon-joint}
\end{align}
if and only if
\begin{equation}
R_1<A-C,\qquad R_1<A+B-E,\qquad R_2<B.
\label{eq:nonadaptive-strict-projection}
\end{equation}
This equivalence characterizes only the projection of the internal-rate
system of the present construction; it does not bound arbitrary
non-adaptive codes.
\end{proposition}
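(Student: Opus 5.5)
We treat this as a Fourier--Motzkin elimination of the two variables $R_{1,r},R_{2,r}\ge0$ from the strict linear system \eqref{eq:nonadaptive-shannon-reliability}--\eqref{eq:nonadaptive-shannon-joint}, with $(R_1,R_2)\ge0$ fixed and $P\in\mathcal Q_{\rm F}$. All constraints are strict, and the only non-strict ones are the nonnegativity constraints, so at every stage the feasible set for the remaining variable is an interval. We eliminate one variable at a time and check nonemptiness of these intervals.

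\emph{Necessity.} Assume a feasible pair exists and derive \eqref{eq:nonadaptive-strict-projection} by adding constraints.
\begin{itemize}
\item Combining $R_1+R_{1,r}<A$ with $R_{1,r}>C$ gives $R_1<A-C$.
\item Adding $R_1+R_{1,r}<A$ and $R_2+R_{2,r}<B$, and comparing with \eqref{eq:nonadaptive-shannon-joint}, gives $E<R_{1,r}+R_2+R_{2,r}<A+B-R_1$, hence $R_1<A+B-E$.
\item Finally, $R_2\le R_2+R_{2,r}<B$ uses $R_{2,r}\ge0$.
\end{itemize}

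\emph{Sufficiency.} Assume \eqref{eq:nonadaptive-strict-projection} holds and set $S:=R_2+R_{2,r}$, which must lie in $[R_2,B)$. For fixed $S$, the conditions on $R_{1,r}$ are
\[
\max\{0,\,C,\,E-S\}<R_{1,r}<A-R_1 .
\]
Here $C\ge0$, so the lower bound is $\max\{C,E-S\}$ (the nonnegativity is then automatic). This interval is nonempty if and only if $R_1<A-C$ and $S>E-A+R_1$. It therefore remains to find $S$ with
\[
\max\{R_2,\,D,\,E-A+R_1\}<S<B \quad\text{or}\quad S=R_2 \text{ when } R_2 \text{ itself satisfies the strict lower bounds.}
\]
We handle this by taking $S$ in the open interval
\[
\bigl(\max\{D,\,E-A+R_1\},\,B\bigr)\cap[R_2,B),
\]
which is nonempty when each of the following holds:
\begin{itemize}
\item $D<B$, which is part of $\mathcal Q_{\rm F}$;
\item $E-A+R_1<B$, which is $R_1<A+B-E$;
\item $R_2<B$, which is assumed.
\end{itemize}
An interval of the form $[R_2,B)$ intersected with an open interval $(\ell,B)$ with $\ell<B$ and $R_2<B$ is nonempty. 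Choosing such an $S$ gives $R_{2,r}=S-R_2\ge0$, and then any $R_{1,r}$ in the nonempty open interval above gives a feasible pair.

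The main point to watch is the bookkeeping of the nonnegativity constraints. Without them the projection would contain no $R_2<B$ condition; it is exactly $R_{2,r}\ge0$ that produces it. Nonnegativity of $R_{1,r}$, by contrast, is absorbed by $C\ge0$. We would also remark that the strictness in $\mathcal Q_{\rm F}$ is used only through $D<B$. The condition $C<A$ is implied by the first inequality of \eqref{eq:nonadaptive-strict-projection} at $R_1\ge0$, and $E<A+B$ is implied by the second, so they serve only to make the region nonempty.
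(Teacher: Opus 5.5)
Your proof is correct and follows essentially the same route as the paper. The paper also substitutes $x=R_{1,r}$ and $y=R_2+R_{2,r}$, proves necessity by the same pairwise combinations, and proves sufficiency by choosing the two aggregate rates from nested open intervals. The only difference is the order: the paper picks $x\in(\max\{C,E-B\},A-R_1)$ first and then $y$, whereas you fix $y=S$ first and then choose $R_{1,r}$, which changes nothing of substance.
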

\begin{IEEEproof}
Set
\begin{equation}
x:=R_{1,r},\qquad y:=R_2+R_{2,r}.
\label{eq:nonadaptive-xy-definition}
\end{equation}
The component-rate system is equivalent to
\begin{align}
C&<x<A-R_1,\label{eq:nonadaptive-x-interval}\\
D&<y<B,\label{eq:nonadaptive-y-interval}\\
R_2&\leq y,\label{eq:nonadaptive-y-payload}\\
x+y&>E.\label{eq:nonadaptive-xy-joint}
\end{align}
\textit{Only-if direction.}
The first interval gives $R_1<A-C$, while
\eqref{eq:nonadaptive-y-interval}--\eqref{eq:nonadaptive-y-payload}
give $R_2<B$.  Moreover, $x<A-R_1$, $y<B$, and $x+y>E$ imply
$R_1<A+B-E$.

\textit{If direction.}
Suppose \eqref{eq:nonadaptive-strict-projection} holds.  The first two
inequalities give
\begin{equation}
\max\{C,E-B\}<A-R_1.
\label{eq:nonadaptive-x-choice-condition}
\end{equation}
Choose
\begin{equation}
\max\{C,E-B\}<x<A-R_1.
\label{eq:nonadaptive-x-choice}
\end{equation}
Then $E-x<B$.  Since $P\in\mathcal Q_{\rm F}$ gives $D<B$ and
\eqref{eq:nonadaptive-strict-projection} gives $R_2<B$, choose
\begin{equation}
\max\{D,R_2,E-x\}<y<B.
\label{eq:nonadaptive-y-choice}
\end{equation}
With $R_{1,r}:=x$ and $R_{2,r}:=y-R_2$, all component-rate inequalities
hold strictly.  This proves the equivalence.
\end{IEEEproof}

\begin{remark}[Role of strict feasibility]
The restriction $P\in\mathcal Q_{\rm F}$ ensures that the strict
component-rate system is nonempty.  The closed fixed-distribution formula
below is used only for such $P$; closure after projection does not create an
achievable fixed-$P$ region when the underlying strict system is empty.
\end{remark}

For $P\in\mathcal Q_{\rm F}$ define
\begin{align}
\mathcal R_{\rm N}(P):=
\Bigl\{(R_1,R_2)&\in\mathbb R_+^2:R_2\leq B(P),\notag\\
&R_1\leq A(P)-C(P),\notag\\
&R_1\leq A(P)+B(P)-E(P)\Bigr\}.
\label{eq:nonadaptive-region}
\end{align}
This is the closure of the strict projection in
Proposition~\ref{prop:nonadaptive-strict-projection}.

\begin{theorem}[Non-adaptive achievable region]
\label{thm:nonadaptive-region}
The region
\begin{equation}
\mathcal R_{\rm N}:=\overline{\operatorname{conv}}\!\left(
\bigcup_{P\in\mathcal Q_{\rm F}}\mathcal R_{\rm N}(P)\right)
\label{eq:nonadaptive-overall}
\end{equation}
is achievable by non-adaptive codes under strong one-sided secrecy.
\end{theorem}
\begin{IEEEproof}
This is purely an achievability proof.  Proposition~\ref{prop:nonadaptive-strict-projection}
eliminates the internal coding-randomization rates of the proposed
construction; it does not provide a converse for arbitrary non-adaptive
codes.

\textit{Step 1: Fixed-$P$ interior achievability.}
Fix $P\in\mathcal Q_{\rm F}$ and an interior point $(R_1,R_2)$ of
$\mathcal R_{\rm N}(P)$.  Proposition~\ref{prop:nonadaptive-strict-projection}
provides nonnegative randomization rates for which
\eqref{eq:nonadaptive-shannon-reliability}--
\eqref{eq:nonadaptive-shannon-joint} hold with a common positive slack.
By \eqref{eq:model-continuity}, a sufficiently small fixed $s\in(0,1]$
preserves all five inequalities:
\begin{align}
R_1+R_{1,r}&<A_s(P),&R_2+R_{2,r}&<B_s(P),
\label{eq:nonadaptive-finite-s-reliability}\\
R_{1,r}&>C_s(P),&R_2+R_{2,r}&>D_s(P),
\label{eq:nonadaptive-finite-s-individual}\\
R_{1,r}+R_2+R_{2,r}&>E_s(P).
\label{eq:nonadaptive-finite-s-joint}
\end{align}
Every exponent in \eqref{eq:nonadaptive-error-ensemble-an} and
\eqref{eq:nonadaptive-leakage-ensemble-bn} is therefore negative.
Proposition~\ref{prop:nonadaptive-one-block} gives exponentially vanishing
ensemble-average error and leakage, and its deterministic consequence
selects one codebook satisfying both bounds.  Hence every fixed-$P$
interior point is achievable.

\textit{Step 2: Fixed-$P$ boundary and union.}
For a boundary point, choose nonnegative interior points
$(R_1^{(k)},R_2^{(k)})$ converging to it.  For each $k$, choose strict
component rates and a fixed $s_k>0$ as in Step~1.  Since the corresponding
exponential estimates hold for every sufficiently large blocklength, choose
a strictly increasing sequence $N_k$ such that for every $n\geq N_k$ the
$k$th construction has error and leakage at most $1/k$, with its integer
rounding error also at most $1/k$.  For $N_k\leq n<N_{k+1}$, use the $k$th
construction directly at blocklength $n$.  This gives codes at every
sufficiently large blocklength, with rates converging to the boundary point
and with vanishing error and leakage.  Thus every $\mathcal R_{\rm N}(P)$ is
achievable, and arbitrary choice of $P\in\mathcal Q_{\rm F}$ gives the union.

\textit{Step 3: Time sharing and outer closure.}
For finitely many achievable constituent points and weights, choose integer
segment lengths $n_\ell(N)$ summing to $N$ such that their normalized lengths
converge to the weights.  Omit zero-weight segments.  Use independent
messages, randomizers, selected codebooks, and channel blocks across
segments, under a deterministic public schedule.  The concatenated encoder
is non-adaptive and the total error is bounded by the sum of constituent
errors.  With $M^{[L]}=(M^{(1)},\ldots,M^{(L)})$ and
$Z^{[L]}=(Z^{(1)},\ldots,Z^{(L)})$, segment independence and the chain rule
give
\begin{equation}
I(M^{[L]};Z^{[L]}\mid C^{[L]})\leq
\sum_{\ell=1}^{L}I(M^{(\ell)};Z^{(\ell)}\mid C^{(\ell)}).
\label{eq:timesharing-leakage-sum}
\end{equation}
Thus every finite convex combination is achievable.

For a point in the outer closure, choose convex-hull points converging to it.
For the $k$th approximant choose a strictly increasing threshold $N_k$ such
that for all $N\geq N_k$ its time-sharing construction has rate error,
decoding error, and leakage at most $1/k$.  Use the $k$th construction when
$N_k\leq N<N_{k+1}$.  This all-blocklength sequence converges to the target
and proves \eqref{eq:nonadaptive-overall}.
\end{IEEEproof}

The non-adaptive construction retains protected information only in
User~1's index.  Key transport in the adaptive construction also requires
User~2's fresh key to be protected.  The next section formulates the
corresponding one-round mechanism before the rounds are connected through
an embedded key.

\section{One-round two-way wiretap mechanism}
\label{sec:one-round-selected-confidential}
The adaptive construction requires a one-round mechanism in which each
user transmits both information retained in the leakage criterion and index
components that are averaged in the eavesdropper's marginal.  We formulate that
mechanism here before introducing any inter-round key operation.

\subsection{Complete codebook indices and operational roles}
For User~$i$, write the complete codebook index as
\begin{equation}
(A_i,J_i,B_i).
\label{eq:one-round-index-triple}
\end{equation}
The component $A_i$ is retained as protected information in the one-round
leakage criterion.  The component $J_i$ carries information decoded by the
intended receiver but not protected by that criterion, whereas $B_i$ is
introduced solely for coding randomization.  The auxiliary codewords are selected as
\begin{equation}
V_1^n(A_1,J_1,B_1),\qquad V_2^n(A_2,J_2,B_2),
\label{eq:one-round-codeword-lookups}
\end{equation}
and are followed by the memoryless stochastic prefixes.  Each legitimate
receiver decodes the other user's complete triple.  In contrast, the
one-round secrecy criterion retains only $(A_1,A_2)$:
\begin{equation}
I(A_1,A_2;Z^n\mid\mathsf C).
\label{eq:one-round-selected-secrecy}
\end{equation}
Thus $J_i$ may be genuine message information even though it is averaged in
the eavesdropper's marginal.  The phrase ``not protected by the one-round criterion'' describes the role
of $J_i$ in the present analysis.  It
does not mean that its realized value is revealed to the eavesdropper as public
side information.  In particular, $J_i$ may be genuine payload decoded by
the intended receiver.  It is averaged in the eavesdropper's induced marginal
for the resolvability estimate, whereas $B_i$ is introduced only to
implement the code and carries no payload.

\subsection{Combined one-round bound}
\begin{proposition}[One-round bound with selected protected information]
\label{prop:one-round-selected-confidential}
Fix $P\in\mathcal Q$ and $s\in(0,1]$.  For mutually independent uniform
index coordinates and independent product random codebooks, define
\begin{align}
\epsilon_n(P,s):={}&
 e^{ns(R_{A_1}+R_{J_1}+R_{B_1}-A_s(P))}\notag\\
 &\ \ +e^{ns(R_{A_2}+R_{J_2}+R_{B_2}-B_s(P))},
\\
\delta_n(P,s):={}&\frac1s\Bigl[
 e^{ns(C_s(P)-R_{J_1}-R_{B_1})}
 +e^{ns(D_s(P)-R_{J_2}-R_{B_2})}\notag\\
&\qquad
 +e^{ns(E_s(P)-R_{J_1}-R_{B_1}-R_{J_2}-R_{B_2})}
 \Bigr].
\end{align}
Then the ensemble satisfies
\begin{align}
\mathbb E_{\mathsf C}P_e^n(\mathsf C)&\leq\epsilon_n(P,s),\label{eq:one-round-error-bound}\\
\mathbb E_{\mathsf C}I(A_1,A_2;Z^n\mid\mathsf C)&\leq\delta_n(P,s).\label{eq:one-round-leakage-bound}
\end{align}
In particular, both quantities decay exponentially when
\begin{align}
R_{A_1}+R_{J_1}+R_{B_1}&<A_s(P),\notag\\
R_{A_2}+R_{J_2}+R_{B_2}&<B_s(P),\notag\\
R_{J_1}+R_{B_1}&>C_s(P),\notag\\
R_{J_2}+R_{B_2}&>D_s(P),\notag\\
R_{J_1}+R_{B_1}+R_{J_2}+R_{B_2}&>E_s(P).
\label{eq:one-round-five-conditions}
\end{align}
\end{proposition}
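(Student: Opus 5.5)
The proof splits into two parts: the reliability bound \eqref{eq:one-round-error-bound} and the leakage bound \eqref{eq:one-round-leakage-bound}. The leakage part needs no new work. The coordinates $A_1,J_1,B_1,A_2,J_2,B_2$ are mutually independent and uniform, and the two product codebooks are generated independently from $P_{V_1}^{\otimes n}$ and $P_{V_2}^{\otimes n}$. These are exactly the hypotheses of Proposition~\ref{prop:selected-index-component}. Applying it with the same rates gives $\mathbb E_{\mathsf C}I(A_1,A_2;Z^n\mid\mathsf C)\le\delta_n(P,s)$, since the three exponents in \eqref{eq:selected-index-component-bound} coincide term by term with those defining $\delta_n(P,s)$ through the subset map \eqref{eq:selected-subset-map}.

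For reliability, I treat each receiver separately and apply the union bound:
\[
P_e^n\le\Pr[\widehat{(A_1,J_1,B_1)}\ne(A_1,J_1,B_1)]+\Pr[\widehat{(A_2,J_2,B_2)}\ne(A_2,J_2,B_2)].
\]
Consider User~2 decoding User~1's complete triple. User~2 knows its own transmitted $X_2^n$ and its output $Y_2^n$. User~1's codebook is independent of User~2's codebook, indices and prefix randomness. Hence, averaged over User~1's ensemble, the map from $V_1^n$ to $(X_2^n,Y_2^n)$ is the memoryless channel $W_1^{\otimes n}$ of \eqref{eq:induced-reliability-channels}. This holds because $X_2^n$ is i.i.d.\ $P_{X_2}$ when User~2's codeword is drawn from $P_{V_2}^{\otimes n}$ and passed through its prefix.

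User~2 then performs maximum-likelihood decoding of a single-user codebook of size $e^{n(R_{A_1}+R_{J_1}+R_{B_1})}$ over $W_1^{\otimes n}$. Gallager's random-coding bound \cite{Gallager68}, with $\rho=s\in(0,1]$, gives
\[
\mathbb E\Pr[\text{error}]\le e^{ns(R_{A_1}+R_{J_1}+R_{B_1})-nE_0(s,P_{V_1},W_1)}.
\]
By the convention following \eqref{eq:downward-renyi-definition} and \eqref{eq:conditional-downward-definition}, $E_0(s,P_{V_1},W_1)=sA_s(P)$. The symmetric argument with $W_2$ and $B_s(P)$ bounds User~1's decoding error. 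Adding the two bounds gives $\epsilon_n(P,s)$.

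The step needing the most care is justifying that the ensemble-averaged channel seen by each decoder is exactly the product channel $W_i^{\otimes n}$. One subtlety is that the receiver's own codeword, selected through its own indices, could in principle correlate $X_2^n$ across time. To handle this, I condition on User~2's entire codebook and its indices. User~1's codebook remains i.i.d.\ and independent of everything User~2 uses, so Gallager's bound applies to the channel conditioned on a fixed $x_2^n$, which is $\prod_tP_{Y_2\mid X_1X_2}$ composed with the prefix. The receiver's side information enters as in \eqref{eq:conditional-downward-expanded}: the Gallager sum is averaged over $x_2$ rather than taking a logarithm for each value. Averaging over the i.i.d.\ $X_2^n$ drawn from User~2's random codeword produces exactly that form, because the product over $t$ of the averaged sums equals the expectation of the product of the per-letter sums under the i.i.d.\ law.

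Finally, the statement that both quantities decay exponentially under \eqref{eq:one-round-five-conditions} is immediate. Those five inequalities make every exponent in $\epsilon_n(P,s)$ and $\delta_n(P,s)$ strictly negative.
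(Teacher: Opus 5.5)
Your proposal is correct and follows the paper's own argument. The error bound comes from a union bound over the two complete-index maximum-likelihood decoders, each controlled by Gallager's random-coding bound on the induced channels $W_1,W_2$ of \eqref{eq:induced-reliability-channels} with $E_0=sA_s(P)$ and $sB_s(P)$; the leakage bound is a direct application of Proposition~\ref{prop:selected-index-component}. Your extra step of conditioning on User~2's codebook and then averaging the per-letter Gallager sums over the i.i.d.\ $X_2^n$ spells out what the paper leaves implicit, and it correctly explains why the receiver's side information enters as in \eqref{eq:conditional-downward-expanded}.
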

\begin{IEEEproof}
Complete-index maximum-likelihood decoding through the two induced
reliability channels gives \eqref{eq:one-round-error-bound}.  Proposition~\ref{prop:selected-index-component}
gives \eqref{eq:one-round-leakage-bound}.
\end{IEEEproof}

\begin{corollary}[History-conditioned one-round bound]
\label{cor:history-conditioned-one-round}
Let $\mathscr H$ be a history variable that excludes the current random
codebook.  If, conditional on every admissible history realization, the six
current index coordinates have the product uniform law of Proposition~\ref{prop:one-round-selected-confidential}
and the current codebook remains independent with its original ensemble
law, then
\begin{equation}
\mathbb E_{\mathsf C_t}I(A_1,A_2;Z_t\mid\mathscr H=h,\mathsf C_t)
\leq\delta_n(P,s)
\label{eq:history-conditioned-one-round}
\end{equation}
for every admissible $h$.
\end{corollary}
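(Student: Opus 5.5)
The corollary follows from the conditional form of Proposition~\ref{prop:selected-index-component}, applied one history realization at a time. The strategy is to show that, once we condition on $\mathscr H=h$, the current round looks exactly like the unconditioned one-round setting of Proposition~\ref{prop:one-round-selected-confidential}. The bound $\delta_n(P,s)$ then transfers unchanged.

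\emph{Step 1: the joint law under $\mathscr H=h$.} Fix an admissible realization $h$. Under the conditional law given $\mathscr H=h$, the current round is generated in four stages:
\begin{itemize}
\item the six coordinates $(A_1,J_1,B_1,A_2,J_2,B_2)$ are drawn;
\item the codebook $\mathsf C_t$ is drawn;
\item the codewords $V_1^n(A_1,J_1,B_1)$ and $V_2^n(A_2,J_2,B_2)$ are looked up and passed through the memoryless stochastic prefixes;
\item the eavesdropper's output $Z_t$ is produced by the memoryless channel.
\end{itemize}
By hypothesis, the coordinates are mutually independent and uniform given $h$. The codebook is independent of $h$ and of the coordinates, with its original product law $P_{V_1}^{\otimes n}\times P_{V_2}^{\otimes n}$.

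One further fact is needed: given $(V_1^n,V_2^n)$, the channel must act on the current round independently of $h$. This holds because $h$ excludes the current codebook, and both the local prefix randomness and the channel noise of round $t$ are fresh. I would state this Markov property explicitly as part of what ``admissible'' means.

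Given these facts, the conditional joint law of $(A_1,A_2,\mathsf C_t,Z_t)$ given $h$ coincides with the unconditioned joint law in Proposition~\ref{prop:one-round-selected-confidential}.

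\emph{Step 2: apply the one-round bound.} Since the laws coincide, the conditional mutual information satisfies
\[
\mathbb E_{\mathsf C_t}I(A_1,A_2;Z_t\mid\mathscr H=h,\mathsf C_t)=\mathbb E_{\mathsf C}I(A_1,A_2;Z^n\mid\mathsf C).
\]
By \eqref{eq:one-round-leakage-bound}, the right-hand side is at most $\delta_n(P,s)$, which gives \eqref{eq:history-conditioned-one-round}. This is precisely the conditional version recorded in Proposition~\ref{prop:selected-index-component}, specialized to the current round.

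\emph{Main obstacle.} The difficulty is in Step~1, namely justifying that conditioning on $h$ leaves the channel and prefix kernels of the current round unchanged. Equivalently, one needs the Markov chain $\mathscr H - (V_1^n,V_2^n) - Z_t$ given $\mathsf C_t$. This fails if $h$ contains any current-round randomness.

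To handle this, I would restrict $\mathscr H$ to variables generated before round $t$, together with any variables that are independent of the round-$t$ noise. Under that restriction the identity of laws is immediate. The bound is also uniform in $h$, since $\delta_n$ does not depend on $h$. Consequently, averaging over $h$ later, in the multiround argument, is harmless.
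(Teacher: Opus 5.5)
Your proposal is correct and follows essentially the same route as the paper, which obtains the corollary from the conditional version of Proposition~\ref{prop:selected-index-component}: it fixes an admissible $h$, notes that the product-uniform index law, the codebook ensemble, the induced MAC, and hence $C_s,D_s,E_s$ are unchanged, and applies the same divergence bound uniformly in $h$. Your explicit requirement that the round-$t$ prefix and channel kernels be unaffected by conditioning on $h$ is left implicit in the paper's phrase ``the induced MAC is unchanged''; for the concrete construction it is supplied by the fixed-history factorization \eqref{eq:A3-factorization} in Lemma~\ref{lem:asym-round-local-factorization}.
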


The non-adaptive construction is the specialization in which User~2's
protected index component is a singleton.  In the adaptive construction that is to be introduced in the next section,
User~2's protected index component is instead a fresh key for the next
round.  

\section{Asymmetric key-assisted adaptive construction}
\label{sec:asymmetric-construction}
This section constructs the adaptive code by applying the one-round
mechanism of Section~\ref{sec:one-round-selected-confidential} in every
round.  The section specifies the operational pipeline and the substitution
of the round variables into the general three-component codebook indices.  The
round-to-round leakage issue is deferred to
Section~\ref{sec:embedded-key-propagation}.

\subsection{Payload rounds}
Round~1 is used only to initialize the inter-round key and carries no actual
message content.  Rounds $t\geq2$ are payload rounds.  In each such round,
both users transmit fresh actual messages, while User~2 also transports the
key to be used in the following round.

Fix $P\in\mathcal Q$, a number of rounds $T\geq2$, and a per-round
blocklength $n$.  Independent random codebooks are generated for the rounds.
For every payload round $t\geq2$, the message variables in round $t$ are
\begin{equation}
	(S_t,U_t,M_2^t),
	\label{eq:asym-payload-variable-list}
\end{equation}
where $(S_t,U_t)$ is User~1's confidential payload, i.e.,
\begin{equation}
	M_1^t=(S_t,U_t).
	\label{eq:asym-current-message}
\end{equation}
The variable $M_2^t$ is User~2's unprotected payload. In particular, the component
$S_t$ is protected directly by the current-round random-coding and
resolvability mechanism, whereas $U_t$ is protected through a one-time pad
using the key generated in the preceding round.  Moreover, User~2's variable $M_2^t$ is genuine message content that is to be decoded by User~1,
but no secrecy guarantee is imposed on it.  In particular, its use as an
averaging coordinate in the resolvability bound does not turn it into public
side information.

In round $t,$ User~2 also generates a fresh key $K_t$ for the next round. Note that $K_t$ is an inter-round coding resource rather than payload.
The variables $U_t$ and $K_t$ take values in the same finite abelian group $G_n$. Besides the message payloads and key variables, $O_{1,t},O_{2,t}$ are coding-randomization coordinates generated at User~1 and User~2, respectively.

The implemented and ideal ciphertexts are
\begin{align}
C_t^{\rm real}&=U_t\oplus\widehat K_{t-1},
\label{eq:asym-real-ciphertext}\\
C_t&=U_t\oplus K_{t-1}.
\label{eq:asym-ideal-ciphertext}
\end{align}
The ciphertext
$C_t^{\rm real}$ is a derived transmitted coordinate, not an additional
message.  Thus only $(S_t,U_t,M_2^t)$ contributes to the payload rates
$(R_1,R_2)$.

In the implemented round, the one-round index triples are
\begin{align}
(A_1,J_1,B_1)&=(S_t,C_t^{\rm real},O_{1,t}),\notag\\
(A_2,J_2,B_2)&=(K_t,M_2^t,O_{2,t}).
\label{eq:asym-round-indices}
\end{align}
For the ideal leakage analysis, $C_t^{\rm real}$ is replaced by $C_t$.
Consequently, the protected components retained by the current one-round leakage criterion
are $(S_t,K_t)$, whereas the ciphertext $C_t$ and User~2's payload
$M_2^t$ are unprotected information components.  The variables
$O_{1,t},O_{2,t}$ are coding randomization.

\begin{figure*}[t]
\centering
\begin{tikzpicture}[font=\footnotesize]
  \node[anchor=east] at (-0.25,1.05) {$V_1^n$:};
  \node[draw,fill=blue!22,minimum width=3.5cm,minimum height=0.72cm] (s) at (1.6,1.05) {$S_t$};
  \node[draw,fill=red!18,minimum width=3cm,minimum height=0.72cm,right=0pt of s] (c) {$C_t^{\rm real}=U_t\oplus\widehat K_{t-1}$};
  \node[draw,fill=gray!18,minimum width=2.0cm,minimum height=0.72cm,right=0pt of c] (o1) {$O_{1,t}$};
  \draw[decorate,decoration={brace,amplitude=5pt}] (s.north west) -- (s.north east) node[midway,above=6pt] {$R_{1,\mathrm{sec}}=\widetilde R_1$};
  \draw[decorate,decoration={brace,amplitude=5pt}] (c.north west) -- (o1.north east) node[midway,above=6pt] {$R_{1,e}+R_{1,o}=\widetilde R_{1,r}$};
  \node[anchor=east] at (-0.25,-1.05) {$V_2^n$:};
  \node[draw,fill=teal!22,minimum width=3cm,minimum height=0.72cm] (k) at (1.35,-1.05) {$K_t$};
  \node[draw,fill=yellow!25,minimum width=3.5cm,minimum height=0.72cm,right=0pt of k] (m) {$M_2^t$};
  \node[draw,fill=gray!18,minimum width=2.0cm,minimum height=0.72cm,right=0pt of m] (o2) {$O_{2,t}$};
  \draw[decorate,decoration={brace,amplitude=5pt}] (k.north west) -- (m.north east) node[midway,above=6pt] {$R_{2,k}+R_2=\widetilde R_2$};
  \draw[decorate,decoration={brace,amplitude=5pt}] (o2.north west) -- (o2.north east) node[midway,above=6pt] {$R_{2,o}=\widetilde R_{2,r}$};
\end{tikzpicture}
\caption{Complete codebook-index structure in a payload round $t\geq2$.
For User~1, $S_t$ is directly protected payload, $C_t^{\rm real}$ carries
the one-time-pad-protected payload $U_t$, and $O_{1,t}$ is coding
randomization.  For User~2, $K_t$ is the protected next-round key, $M_2^t$
is unprotected payload, and $O_{2,t}$ is coding randomization.  The braces
show the internal rates seen by the one-round decoder.}
\label{fig:asymmetric-round-structure}
\end{figure*}
\begin{table*}[t]
\centering
\caption{Operational roles in payload round $t\geq2$.}
\label{tab:asym-round-operation}
\footnotesize
\setlength{\tabcolsep}{4pt}
\begin{tabular}{p{5cm}p{5 cm}p{5 cm}}
\hline
Role & User~1 & User~2 \\
\hline
Protected index component & $S_t$ & Fresh key $K_t$ for round $t+1$ \\
Unprotected information component & $C_t^{\rm real}=U_t\oplus\widehat K_{t-1}$ & Payload $M_2^t$ \\
Coding-randomization component & $O_{1,t}$ & $O_{2,t}$ \\
Complete decoded tuple & User~1 decodes $(K_t,M_2^t,O_{2,t})$ & User~2 decodes $(S_t,C_t^{\rm real},O_{1,t})$ \\
Post-processing & Retains $\widehat K_t$ & Uses the true $K_{t-1}$ to recover $U_t$ \\
\hline
\end{tabular}
\end{table*}

\subsection{Rates and key matching}
Assign the rates
\begin{align}
R_{A_1}&=R_{1,\mathrm{sec}},&R_{J_1}&=R_{1,e},&R_{B_1}&=R_{1,o},\notag\\
R_{A_2}&=R_{2,k},&R_{J_2}&=R_2,&R_{B_2}&=R_{2,o}.
\label{eq:asym-rate-identification}
\end{align}
The payload rate at User~1 is
\begin{equation}
R_1=R_{1,\mathrm{sec}}+R_{1,e}.
\label{eq:asym-R1}
\end{equation}
Only the key consumed in the following one-time pad is transported.  Hence
we choose the common group so that
\begin{equation}
R_{2,k}=R_{1,e}.
\label{eq:asym-key-size}
\end{equation}
This normalization loses no payload-rate point: if a feasible component
assignment has $R_{2,k}>R_{1,e}$, reducing $R_{2,k}$ to $R_{1,e}$ decreases
only User~2's decoding load and leaves all averaging-rate inequalities
unchanged.

For compatibility with the projection notation, define
\begin{align}
\widetilde R_1&=R_{1,\mathrm{sec}},&
\widetilde R_{1,r}&=R_{1,e}+R_{1,o},\notag\\
\widetilde R_2&=R_2+R_{2,k},&
\widetilde R_{2,r}&=R_{2,o}.
\label{eq:asym-underlying-rates}
\end{align}
The rates $R_1$ and $R_2$ measure actual message content.  The component
rates $R_{1,\mathrm{sec}}$, $R_{1,e}$, $R_{2,k}$, $R_{1,o}$, and $R_{2,o}$
describe how payload and coding resources are allocated inside a round.  By
contrast, $\widetilde R_i$ and $\widetilde R_{i,r}$ are the main- and
randomization-index rates seen by the underlying one-block code.  They are
internal index rates and should not be interpreted as additional payload
rates.
Substitution into \eqref{eq:one-round-five-conditions} gives
\begin{align}
\begin{split}\label{eq:asym-finite-s-conditions}
\widetilde R_1+\widetilde R_{1,r}&<A_s(P),\\
\widetilde R_2+\widetilde R_{2,r}&<B_s(P),\\
\widetilde R_{1,r}&>C_s(P),\\
\widetilde R_{2,r}+R_2&>D_s(P),\\
\widetilde R_{1,r}+\widetilde R_{2,r}+R_2&>E_s(P).
\end{split}
\end{align}
The corresponding per-round leakage increment is
\begin{align}
\delta_n(P,s)=\frac1s\Bigl[&
 e^{ns(C_s(P)-\widetilde R_{1,r})}
 +e^{ns(D_s(P)-R_2-\widetilde R_{2,r})}\notag\\
&+e^{ns(E_s(P)-\widetilde R_{1,r}-R_2-\widetilde R_{2,r})}\Bigr],
\label{eq:asym-delta}
\end{align}
and the union-bound reliability estimate over $T$ rounds is
\begin{equation}
a_{n,T}:=T\left[
 e^{ns(\widetilde R_1+\widetilde R_{1,r}-A_s(P))}
 +e^{ns(\widetilde R_2+\widetilde R_{2,r}-B_s(P))}\right].
\label{eq:asym-ensemble-error}
\end{equation}
These expressions are direct instances of Proposition~\ref{prop:one-round-selected-confidential}; they are not separately rederived here.

\subsection{Initialization}
Round~1 carries no actual payload.  The same one-round mechanism is used
under the explicit specialization
\begin{align}
(A_1,J_1,B_1)
=&(D_{1,\mathrm{sec}},D_{1,e},O_{1,1}),\notag\\
(A_2,J_2,B_2)
=&(K_1,D_2,O_{2,1}),
\label{eq:asym-initialization-map}
\end{align}
where the $D$-coordinates are independent uniform dummies with the
corresponding payload-round sizes.  $K_1$ is the protected User~2 index component retained in the
initialization leakage criterion.  User~1 decodes and retains
$\widehat K_1$, while User~2 retains the true key it generated.  Discarding
the dummy protected component by data processing leaves the initial-key
leakage bound required for the recursion.  Thus all rounds have the same
codebook-index dimensions.  For $T$ rounds, rounds $2$ through $T$ carry
payload, giving the effective factor $(T-1)/T$.

\section{Leakage propagation with imperfect embedded keys}
\label{sec:embedded-key-propagation}
The one-round theorem of
Section~\ref{sec:one-round-selected-confidential} already controls the
fresh protected information in each ideal round.  The remaining issue is
whether these local bounds propagate when the eavesdropper's past observations
may contain information about the preceding key and when the implemented
encoder may use an incorrectly decoded key.

Denote the secret state after round $t$ as $(W_t,K_t)$, containing all
of User~1's actual messages transmitted so far and the key retained for the
following round.  Then one round update of the secret state can be described as
\begin{equation}
(W_{t-1},K_{t-1})\longrightarrow(W_t,K_t).
\label{eq:abstract-state-transition}
\end{equation}
The target recursion is
\begin{align}
&I(W_t,K_t;Z^{1:t}\mid\boldsymbol{\mathsf C})\leq
I(W_{t-1},K_{t-1};Z^{1:t-1}\mid\boldsymbol{\mathsf C})+\delta_{n,t}.
\label{eq:abstract-target-recursion}
\end{align}
The first term is the leakage accumulated before round $t$, and
$\delta_{n,t}$ is the one-round leakage of the fresh protected state
$Q_t=(S_t,K_t)$ in round $t$.  There is no separate increment for $U_t$: its current
leakage is absorbed into the information already accumulated about
$K_{t-1}$.

The proof therefore has two genuinely multiround stages.  First, the ideal
process with $C_t=U_t\oplus K_{t-1}$ is analyzed by a conditional one-time-pad
argument.  Second, the ideal estimate is transferred to the implemented
process with $C_t^{\rm real}=U_t\oplus\widehat K_{t-1}$.  The two coupled
processes agree unless a wrongly decoded key is used.  The structural
round-kernel conditions and the local one-round theorem play distinct
roles; neither is a consequence of channel memorylessness alone.

Moreover, for propagating a one-round leakage estimate through an ideal embedded-key process, certain conditions have to be fulfilled.  
We first prove a general propagation theorem under these conditions.  
We then verify that the ideal process and random round-codebook ensemble induced by the 
asymmetric adaptive code of Section~\ref{sec:asymmetric-construction}
satisfy all these conditions.  Consequently, the fixed-round and growing-round
reliability and secrecy bounds obtained here apply to the concrete asymmetric adaptive code. 

The proof dependency is as follows.  Proposition~\ref{prop:selected-index-component}
and Corollary~\ref{cor:history-conditioned-one-round} supply the fresh
one-round leakage increment.  Proposition~\ref{prop:abstract-otp} absorbs
the encrypted-payload leakage into the leakage already carried by the
preceding key, and Lemma~\ref{lem:one-step-recursion} gives the ideal
one-step recursion.  Conditions (A1a)--(A4b) are then verified for the
concrete asymmetric construction.  Finally, the real/ideal coupling and
continuity bound transfer the recursion to the implemented decoded-key
process; the fixed-round, growing-round, and rate-projection steps then
complete the achievability proof.

\subsection{Variables used in the multiround leakage proof}

Fix a total number of rounds $T\geq2$, and label the rounds by
$t\in\{1,\ldots,T\}$.  
Let
\begin{equation}
	\boldsymbol{\mathsf C}:=(\mathsf C_1,\ldots,\mathsf C_T)
\end{equation}
denote the collection of independently generated round codebooks, where
$\mathsf C_t$ is the codebook used in round $t$.

For each round $t$, let
$Z_t\in\mathcal Z^n$ denote the length-$n$ observation block at the eavesdropper in that
round, and write $Z^{1:t}:=(Z_1,\ldots,Z_t)$ for the tuple of round
outputs.  

For each payload round $t\in\{2,\ldots,T\}$, define the fresh protected
state by
\begin{equation}
Q_t:=(S_t,K_t),
\end{equation}
and User~1's messages sent through the current round by
\begin{equation}
	W_t:=(S_2,U_2,\ldots,S_t,U_t).
\end{equation}

Round~1 as the initialization round is treated separately: it carries no actual message (thus $W_1=\emptyset$), and only the
initial key $K_1$ is retained in the state used in the leakage recursion.
In particular,
$(W_t,K_t)$ and $(W_{t-1},Q_t,U_t)$ contain the same variables up
to ordering, with no duplicated terminal key. 

The terminal key $K_T$ is retained only to keep the recursive state uniform
through round $T$ and is discarded at the end of the leakage analysis.

For each round, define the history excluding the current random codebook by
\begin{equation}
\mathscr H_{t-1}^{-t}:=
\sigma\!\left(W_{t-1},K_{t-1},Z^{1:t-1},
\boldsymbol{\mathsf C}_{<t},\boldsymbol{\mathsf C}_{>t}\right).
\label{eq:abstract-history-minus-current}
\end{equation}
After a realization of $\mathsf C_t$ is fixed, conditioning on
$(\mathscr H_{t-1}^{-t},\mathsf C_t)$ is equivalent to conditioning on the
preceding state, observations, and the complete codebook collection.  The
exclusion of $\mathsf C_t$ is essential when the one-round ensemble bound is
applied.

\subsection{Probability levels and conditions for leakage propagation}
\label{subsec:probability-levels}
The conditioning order is part of the proof.  We distinguish three
probability levels throughout the propagation argument.

\textit{Level 1: fixed-codebook process.}  In the \emph{fixed-codebook process}, the complete collection is
fixed at $\boldsymbol{\mathsf C}=\boldsymbol c$.  The source variables,
stochastic-prefix randomness, and channel randomness then generate the
ideal or implemented process.  Conditions (A1a), (A1b), and (A3), as well
as the one-time-pad and one-step recursion lemmas, are structural statements
about the induced finite-alphabet conditional laws at this level.  All
history-conditioned assertions are restricted to positive-probability
histories.

\textit{Level 2: history-conditioned current-round ensemble.}  Second, in the \emph{history-conditioned current-round ensemble}, an
admissible value $h$ of $\mathscr H_{t-1}^{-t}$ is fixed while only the
current codebook $\mathsf C_t$ remains random.  By condition (A2), its
conditional law is its original random-coding ensemble law.  The
history-conditioned resolvability estimate and condition (A4a) are applied
at this level.

\textit{Level 3: full ensemble.}  Third, the history and the codebooks outside round $t$ are averaged by the
tower property, producing the full-ensemble recursion.  Only after the
ideal recursion and the transfer from the ideal process to the implemented process are complete do we select
one deterministic codebook collection.  We write $\mathsf C_t$ and
$\boldsymbol{\mathsf C}$ for random codebooks, $\mathsf c_t$ and
$\boldsymbol c$ for fixed realizations, and display the expectation
subscript whenever a codebook average is taken.

We next state conditions (A1a), (A1b), and (A2)--(A4b), summarized in
Table~\ref{tab:condition-use-verification-map}, under which the ideal
embedded-key process admits the one-step leakage recursion. 
Note that Conditions (A1a)--(A3) are
structural properties that need to be fulfilled by the code construction, whereas (A4a)--(A4b) provides the
quantitative one-round leakage bounds.

\medskip
\noindent\textbf{Condition (A1a): independent generation of the current-round variables.}
For every payload round $t\in\{2,\ldots,T\}$, the joint distribution of
$(S_t,U_t,K_t)$, the preceding state, and the past observation
has the following properties.  The variables
$(S_t,U_t,K_t)$ are mutually independent, $U_t$ and
$K_t$ are uniform on their respective finite groups, and
\begin{equation}
(S_t,U_t,K_t)
\perp (W_{t-1},K_{t-1},Z^{1:t-1})
\mid\boldsymbol{\mathsf C}.
\label{eq:abstract-current-freshness}
\end{equation}
In particular, with $A_t=(W_{t-1},K_{t-1})$, this gives
$Q_t\perp(A_t,Z^{1:t-1})\mid\boldsymbol{\mathsf C}$, which is the
independence property used to bound the one-round leakage term for $Q_t$.

\medskip
\noindent\textbf{Condition (A1b): independence of $K_{t-1}$ and $W_{t-1}$ before conditioning on past observations.}
For every payload round $t\in\{2,\ldots,T\}$, the joint distribution of
$(W_{t-1},K_{t-1})$, before $Z^{1:t-1}$ is adjoined to the
conditioning, satisfies
\begin{equation}
K_{t-1}\perp W_{t-1}
\mid\boldsymbol{\mathsf C}.
\label{eq:abstract-preobservation-closure}
\end{equation}
This is not a posterior-independence assumption.  After conditioning on
$Z^{1:t-1}$, the eavesdropper's observation may correlate with the key and the accumulated
message; that correlation is retained in the recursion.

\medskip
\noindent\textbf{Condition (A2): independence of the current-round codebook.}
For every payload round $t\in\{2,\ldots,T\}$, the random round-codebook
ensemble satisfies the following requirement: the current codebook
$\mathsf C_t$ is independent of $\mathscr H_{t-1}^{-t}$ and, conditional on
each history realization, retains its original random-coding ensemble law.

\medskip
\noindent\textbf{Condition (A3): dependence of the current output on the current-round variables.}
Here $C_t$ denotes the ideal ciphertext, $\mathsf C_t$ the current random
codebook, and $\boldsymbol{\mathsf C}$ the full codebook collection.  The
first kernel below conditions on a fixed ciphertext; the second is obtained
only after averaging the history-wise uniform ciphertext.
For every payload round $t\in\{2,\ldots,T\}$, fix an admissible
history realization $h$ and a current-codebook realization $\mathsf c_t$.
After the fresh coding-randomization indices have been averaged, the
conditional law generating $Z_t$ in the ideal fixed-codebook process
satisfies the following identities for every positive-probability history
$h$ and every $\mathsf c_t$ in the support of the current codebook ensemble:
\begin{align}
&P_{Z_t\mid W_{t-1},Z^{1:t-1},Q_t,C_t,
 \boldsymbol{\mathsf C}}
 =P_{Z_t\mid Q_t,C_t,\mathsf C_t},
\label{eq:abstract-round-kernel}\\
&P_{Z_t\mid W_{t-1},Z^{1:t-1},Q_t,
 \boldsymbol{\mathsf C}}
 =P_{Z_t\mid Q_t,\mathsf C_t}.
\label{eq:abstract-averaged-kernel}
\end{align}
The first equality \eqref{eq:abstract-round-kernel} states that, once $(Q_t,C_t)$ is fixed,
the current output has no further direct dependence on the past.  The second equality \eqref{eq:abstract-averaged-kernel} 
states that, after averaging the history-wise uniform
ciphertext, the current output has no dependence on the preceding message variables.
In addition,
\begin{equation}
U_t-(C_t,Z^{1:t-1},W_{t-1},Q_t,
\boldsymbol{\mathsf C})-Z_t
\label{eq:abstract-otp-markov}
\end{equation}
is a Markov chain.  Thus the current channel does not access $U_t$ directly;
it sees that message part only through $C_t$.  These three statements are
structural properties of the round kernel and are logically independent of
the leakage estimate below.

\medskip
\noindent\textbf{Condition (A4a): one-round leakage bound.}
For every payload round $t\in\{2,\ldots,T\}$, the current-codebook ensemble
satisfies the following bound uniformly over the preceding history.  There
is a nonnegative number $\delta_{n,t}$ such that, for every
positive-probability realization $h$ of $\mathscr H_{t-1}^{-t}$,
\begin{equation}
\mathbb E_{\mathsf C_t}
 I(Q_t;Z_t\mid \mathscr H_{t-1}^{-t}=h,\mathsf C_t)
\leq\delta_{n,t}.
\label{eq:abstract-one-round-bound}
\end{equation}
The expectation is only over the current codebook.  By (A2), the
conditional law of $\mathsf C_t$ given
$\mathscr H_{t-1}^{-t}=h$ is its original ensemble law, and the right-hand
side is uniform in $h$.  Averaging the fixed-history inequality over the history and the codebooks
outside round $t$, and then restoring $\mathsf C_t$ to the conditioning,
gives by the tower property
\begin{equation}
\mathbb E_{\boldsymbol{\mathsf C}}
 I(Q_t;Z_t\mid W_{t-1},K_{t-1},Z^{1:t-1},
 \boldsymbol{\mathsf C})
\leq\delta_{n,t}.
\label{eq:abstract-one-round-tower}
\end{equation}
This is the increment appearing in the ensemble recursion.

\medskip
\noindent\textbf{Condition (A4b): initial-key leakage bound.}
The initialization round and its round-1 codebook ensemble supply the
initial key state and satisfy
\begin{equation}
\mathbb E_{\boldsymbol{\mathsf C}}
 I(K_1;Z_1\mid\boldsymbol{\mathsf C})
\leq\delta_{n,1}.
\label{eq:abstract-initial-bound}
\end{equation}
Conditions (A4a)--(A4b) are the only points at which the one-round coding
result enters the abstract propagation argument.  For the concrete code,
(A4a) follows from Corollary~\ref{cor:history-conditioned-one-round}, and
(A4b) follows from its initialization specialization
\eqref{eq:asym-initialization-map}.

\begin{table*}[t]
\centering
\caption{Conditions used in the leakage-propagation argument and their
verification for the asymmetric construction.}
\label{tab:condition-use-verification-map}
\footnotesize
\setlength{\tabcolsep}{5pt}
\renewcommand{\arraystretch}{1.12}
\begin{tabular}{p{1.5cm}p{8cm}p{7.5cm}}
\hline
Condition & Role in the propagation proof & Concrete fact establishing it \\
\hline
(A1a)--(A1b) & Use the stated independence relations and apply the conditional one-time pad without
assuming posterior independence after $Z^{1:t-1}$. & Independent generation
of $(S_t,U_t,K_t)$ and the pre-observation factorization
\eqref{eq:asym-A1b-factorization}. \\
(A2) & Preserve the current random-coding ensemble under history
conditioning. & Independent generation of the round codebooks, so
$\mathsf C_t\perp\mathscr H_{t-1}^{-t}$. \\
(A3) & Use the second identity to show that the corresponding conditional mutual information is zero, and use the Markov condition to ensure that the current channel accesses $U_t$ only through $C_t$. & The current-round
factorization in Lemma~\ref{lem:asym-round-local-factorization}. \\
(A4a) & Bound the one-round leakage term for $Q_t$ by $\delta_{n,t}$. &
Corollary~\ref{cor:history-conditioned-one-round} under the adaptive-round
identification \eqref{eq:asym-A4-explicit-map}. \\
(A4b) & Initialize the recursion at round~1. & The initialization
specialization \eqref{eq:asym-initialization-map}, followed by data
processing that discards the dummy protected component. \\
\hline
\end{tabular}
\end{table*}
Table~\ref{tab:condition-use-verification-map} separates the role of each
condition in the propagation proof from the concrete fact that verifies it.
In particular, the table does not rederive the one-round leakage bound.

\subsection{One-round leakage argument for the ideal process}
\begin{proposition}[Conditional one-time pad with side information]
\label{prop:abstract-otp}
Let $D$ be a conditioning variable.  Conditional on every $D=d$, let $U$
and $K$ be uniform on the same finite abelian group, let
$U\perp(K,W,E)\mid D$, and let $K\perp W\mid D$.  If $C=U\oplus K$, then
\begin{equation}
I(U;C,E\mid W,D)\leq I(K;E\mid W,D).
\label{eq:abstract-otp}
\end{equation}
\end{proposition}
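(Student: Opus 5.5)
Everything is done conditionally on a fixed $D=d$, so we suppress $D$ and average at the end. We need to bound $I(U;C,E\mid W)$ by $I(K;E\mid W)$.

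The chain rule gives
\begin{equation}
I(U;C,E\mid W)=I(U;E\mid W)+I(U;C\mid E,W).
\end{equation}
The first term vanishes because $U\perp(K,W,E)$ implies $U\perp E\mid W$. For the second term we write $I(U;C\mid E,W)=H(C\mid E,W)-H(C\mid U,E,W)$. Since $C=U\oplus K$, translation invariance of entropy on the group gives
\begin{equation}
H(C\mid U,E,W)=H(K\mid U,E,W)=H(K\mid E,W),
\end{equation}
where the last equality again uses $U\perp(K,W,E)$. Hence
\begin{equation}
I(U;C,E\mid W)=H(C\mid E,W)-H(K\mid E,W).
\end{equation}
This step is the main point of the argument: it turns the payload leakage into a deficit in the conditional entropy of the key.

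Next we bound $H(C\mid E,W)\le\log|G|$. It remains to identify $\log|G|$ with $H(K\mid W)$. This is where the hypothesis $K\perp W$ enters: it gives $H(K\mid W)=H(K)=\log|G|$, since $K$ is uniform. Therefore
\begin{equation}
I(U;C,E\mid W)\le H(K\mid W)-H(K\mid E,W)=I(K;E\mid W).
\end{equation}
Note that we never need $K$ to be independent of $E$. Any posterior correlation of $K$ with $E$ is exactly what appears on the right-hand side, so no posterior independence is assumed. This is consistent with Condition (A1b).

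Finally, all hypotheses hold for each $D=d$, so the inequality holds pointwise in $d$. Averaging over $P_D$ then yields \eqref{eq:abstract-otp}.

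The only step that needs care is the entropy identity $H(C\mid U,E,W)=H(K\mid E,W)$. It requires the full joint independence $U\perp(K,W,E)$, not merely pairwise independence. We also check that all alphabets are finite, so the entropies are well defined.
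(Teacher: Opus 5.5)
Your proof is correct and follows the paper's argument step for step: the chain rule with $I(U;E\mid W,D)=0$, translation invariance to get $H(C\mid U,E,W,D)=H(K\mid E,W,D)$, the bound $H(C\mid E,W,D)\le\log|G|$, and then $K\perp W\mid D$ together with uniformity of $K$ to identify $\log|G|$ with $H(K\mid W,D)$. The only cosmetic difference is that the paper conditions pointwise on $(W,D)=(w,d)$ before averaging, whereas you fix only $D=d$ and keep $W$ in the conditioning.
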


The proposition is proved in Appendix~\ref{app:embedded-key-proofs}.  We now combine it with the
independence in (A1a) and the kernel conditions in (A3) in the order in which they enter
the one-step recursion.

\begin{lemma}[Prior key leakage bounds the encrypted-message leakage]
\label{lem:otp-leakage-absorption}
When the ideal process satisfies the independence condition (A1a), the
independence condition (A1b), and the
first identity and the Markov condition in (A3),
\begin{align}
&I(U_t;Z_t\mid Z^{1:t-1},W_{t-1},Q_t,
\boldsymbol{\mathsf C})\leq I(K_{t-1};Z^{1:t-1}
\mid W_{t-1},\boldsymbol{\mathsf C}).
\label{eq:otp-leakage-absorption}
\end{align}
\end{lemma}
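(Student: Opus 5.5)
The plan is to reduce the left-hand side to an instance of Proposition~\ref{prop:abstract-otp} in two steps: first a data-processing step through the ciphertext, then the conditional one-time pad with the past observations as side information. Throughout we work at Level~1, with $\boldsymbol{\mathsf C}=\boldsymbol c$ fixed, and average at the end if desired.

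\textbf{Step 1 (data processing through $C_t$).} By the Markov condition \eqref{eq:abstract-otp-markov}, $U_t$ reaches $Z_t$ only through $C_t$ given $(Z^{1:t-1},W_{t-1},Q_t,\boldsymbol{\mathsf C})$. The chain rule then gives
\begin{align*}
I(U_t;Z_t\mid Z^{1:t-1},W_{t-1},Q_t,\boldsymbol{\mathsf C})
&\leq I(U_t;C_t,Z_t\mid Z^{1:t-1},W_{t-1},Q_t,\boldsymbol{\mathsf C})\\
&= I(U_t;C_t\mid Z^{1:t-1},W_{t-1},Q_t,\boldsymbol{\mathsf C}),
\end{align*}
because the Markov chain makes $I(U_t;Z_t\mid C_t,\dots)=0$. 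Next, the fresh variables $Q_t=(S_t,K_t)$ are independent of $(U_t,K_{t-1},W_{t-1},Z^{1:t-1})$ by (A1a), and $C_t$ is a function of $(U_t,K_{t-1})$. Hence $Q_t$ can be removed from the conditioning, using $I(U_t;C_t\mid Y,Q_t)=I(U_t;C_t\mid Y)$ whenever $Q_t\perp(U_t,C_t,Y)$. The result is
\[
I(U_t;C_t,Z^{1:t-1}\mid W_{t-1},\boldsymbol{\mathsf C}).
\]
This quantity is at least $I(U_t;C_t\mid Z^{1:t-1},W_{t-1},\boldsymbol{\mathsf C})$ by the chain rule, since $U_t\perp(Z^{1:t-1},W_{t-1})$.

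\textbf{Step 2 (one-time pad).} Apply Proposition~\ref{prop:abstract-otp} with $D=\boldsymbol{\mathsf C}$, $U=U_t$, $K=K_{t-1}$, $W=W_{t-1}$, $E=Z^{1:t-1}$ and $C=C_t$. The hypotheses are checked as follows.
\begin{itemize}
\item[(i)] $U_t$ is uniform and independent of $(K_{t-1},W_{t-1},Z^{1:t-1})$ given $\boldsymbol{\mathsf C}$, by (A1a).
\item[(ii)] $K_{t-1}$ is uniform on $G_n$. For $t\ge3$ this follows from the uniformity of $K_{t-1}$ in (A1a) applied at round $t-1$; for $t=2$ it holds by the construction of $K_1$.
\item[(iii)] $K_{t-1}\perp W_{t-1}\mid\boldsymbol{\mathsf C}$, which is exactly (A1b). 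No posterior independence given $Z^{1:t-1}$ is needed.
\end{itemize}
The proposition then yields
\[
I(U_t;C_t,Z^{1:t-1}\mid W_{t-1},\boldsymbol{\mathsf C})\le I(K_{t-1};Z^{1:t-1}\mid W_{t-1},\boldsymbol{\mathsf C}),
\]
which is \eqref{eq:otp-leakage-absorption}.

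\textbf{Main obstacle.} The delicate point is the bookkeeping of conditioning in Step~1. The first identity in (A3) is used only to justify that, given $(Q_t,C_t)$, $Z_t$ carries no extra dependence on the past that could reintroduce $U_t$ beyond $C_t$; this is what makes the Markov reduction legitimate. Removing $Q_t$ also requires checking that $Q_t$ is independent jointly of $U_t$, $K_{t-1}$ and the whole past. I would verify this carefully from \eqref{eq:abstract-current-freshness} together with the mutual independence of $(S_t,U_t,K_t)$. Some care is also needed because the quantity bounded is $I(U_t;C_t,Z^{1:t-1}\mid\cdot)$, which dominates the conditional form. All these steps are stated for positive-probability histories only.
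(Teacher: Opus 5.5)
Your proof is correct and follows essentially the paper's route: data processing through $C_t$ via the Markov condition in (A3), then Proposition~\ref{prop:abstract-otp} with $Z^{1:t-1}$ as the side information and (A1a)--(A1b) supplying its hypotheses. The first identity in (A3) is not actually needed, by you or by the paper. The only other difference is bookkeeping. The paper keeps $Q_t$ inside $D=(Q_t,\boldsymbol{\mathsf C})$ through the one-time-pad step and removes it from the right-hand side at the end, whereas you remove it first and take $D=\boldsymbol{\mathsf C}$. Your explicit check that $K_{t-1}$ is uniform is a worthwhile addition: the paper leaves it implicit, and for $t=2$ it rests on the construction of $K_1$ rather than on the listed conditions.
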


\begin{lemma}[Leakage recursion from round $t-1$ to round $t$]
\label{lem:one-step-recursion}
With all codebook conditioning retained,
\begin{align}
I(W_t,K_t;Z^{1:t}\mid\boldsymbol{\mathsf C})\le 
& I(W_{t-1},K_{t-1};Z^{1:t-1}
\mid\boldsymbol{\mathsf C})\notag\\
&+I(Q_t;Z_t\mid W_{t-1},K_{t-1},
Z^{1:t-1},\boldsymbol{\mathsf C}).
\label{eq:one-step-recursion-revised}
\end{align}
\end{lemma}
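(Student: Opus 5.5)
The plan is to expand the left-hand side of \eqref{eq:one-step-recursion-revised} by the chain rule. I will split off the past observations first, then decompose the current-round term along the state variables in the order $W_{t-1}$, $Q_t$, $U_t$. All mutual informations carry the conditioning on $\boldsymbol{\mathsf C}$, which I suppress below. Since $(W_t,K_t)$ coincides with $(W_{t-1},Q_t,U_t)$ up to ordering, write
\begin{align*}
I(W_t,K_t;Z^{1:t})
&=I(W_{t-1},Q_t,U_t;Z^{1:t-1})\\
&\quad+I(W_{t-1},Q_t,U_t;Z_t\mid Z^{1:t-1}).
\end{align*}

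\emph{Past term.} By (A1a), $(Q_t,U_t)\perp(W_{t-1},K_{t-1},Z^{1:t-1})$. Hence
\[
I(W_{t-1},Q_t,U_t;Z^{1:t-1})=I(W_{t-1};Z^{1:t-1})\le I(W_{t-1},K_{t-1};Z^{1:t-1}),
\]
where the inequality is monotonicity. To make the target come out exactly, I will instead keep this term as $I(W_{t-1};Z^{1:t-1})$ and use the identity
\[
I(W_{t-1},K_{t-1};Z^{1:t-1})=I(W_{t-1};Z^{1:t-1})+I(K_{t-1};Z^{1:t-1}\mid W_{t-1}).
\]
The extra key-leakage term in this identity is then available to absorb the $U_t$ contribution.

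\emph{Current term.} Apply the chain rule:
\begin{align*}
I(W_{t-1},Q_t,U_t;Z_t\mid Z^{1:t-1})
&=I(W_{t-1};Z_t\mid Z^{1:t-1})\\
&\quad+I(Q_t;Z_t\mid Z^{1:t-1},W_{t-1})\\
&\quad+I(U_t;Z_t\mid Z^{1:t-1},W_{t-1},Q_t).
\end{align*}
The third piece is bounded by $I(K_{t-1};Z^{1:t-1}\mid W_{t-1})$ via Lemma~\ref{lem:otp-leakage-absorption}. The second piece is compared with the target increment $I(Q_t;Z_t\mid W_{t-1},K_{t-1},Z^{1:t-1})$. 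Because $Q_t\perp K_{t-1}\mid(W_{t-1},Z^{1:t-1})$ by (A1a), we have
\begin{align*}
I(Q_t;Z_t\mid Z^{1:t-1},W_{t-1})&\le I(Q_t;Z_t,K_{t-1}\mid Z^{1:t-1},W_{t-1})\\
&=I(Q_t;Z_t\mid Z^{1:t-1},W_{t-1},K_{t-1}),
\end{align*}
which is exactly the increment in \eqref{eq:one-step-recursion-revised}.

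The first piece, $I(W_{t-1};Z_t\mid Z^{1:t-1})$, is the obstacle. I expect it to vanish, and the argument I intend runs as follows. By the second identity \eqref{eq:abstract-averaged-kernel} of (A3), the law of $Z_t$ given $(W_{t-1},Z^{1:t-1},Q_t)$ equals $P_{Z_t\mid Q_t,\mathsf C_t}$. Therefore $I(W_{t-1},Z^{1:t-1};Z_t\mid Q_t)=0$. Combined with $Q_t\perp(W_{t-1},Z^{1:t-1})$ from (A1a), this makes $Z_t$ independent of $(W_{t-1},Z^{1:t-1})$ jointly, and hence $I(W_{t-1};Z_t\mid Z^{1:t-1})=0$. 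Justifying \eqref{eq:abstract-averaged-kernel} requires the ciphertext $C_t$ to be uniform given the history, which holds since $U_t$ is uniform and independent of everything else. I only invoke it, however, since (A3) is assumed here.

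Summing the pieces gives
\[
I(W_{t-1};Z^{1:t-1})+I(K_{t-1};Z^{1:t-1}\mid W_{t-1})+I(Q_t;Z_t\mid W_{t-1},K_{t-1},Z^{1:t-1}),
\]
and the first two terms combine by the identity above into $I(W_{t-1},K_{t-1};Z^{1:t-1})$. This yields \eqref{eq:one-step-recursion-revised}.

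The main care point is the bookkeeping: the $U_t$ leakage must be charged against the key-leakage term $I(K_{t-1};Z^{1:t-1}\mid W_{t-1})$, and for that term to be available, the past term must be kept as $I(W_{t-1};Z^{1:t-1})$ rather than bounded crudely by monotonicity. Every step is applied at the fixed-codebook level (Level~1) and restricted to positive-probability histories.
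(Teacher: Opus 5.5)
Your proof is correct and follows the paper's argument: the same split into past and current terms, (A1a) to reduce the past term to $I(W_{t-1};Z^{1:t-1}\mid\boldsymbol{\mathsf C})$, Lemma~\ref{lem:otp-leakage-absorption} for the $U_t$ piece, independence of $Q_t$ to adjoin $K_{t-1}$ in the $Q_t$ piece, and the second identity of (A3) to remove the $W_{t-1}$ piece. The only difference is that you expand the current term in the order $W_{t-1},Q_t,U_t$ instead of the paper's $Q_t,W_{t-1},U_t$. As a result, showing $I(W_{t-1};Z_t\mid Z^{1:t-1},\boldsymbol{\mathsf C})=0$ needs your extra step combining (A3) with $Q_t\perp(W_{t-1},Z^{1:t-1})$, whereas the paper reads $I(W_{t-1};Z_t\mid Z^{1:t-1},Q_t,\boldsymbol{\mathsf C})=0$ directly off (A3).
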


The proof of Lemma \ref{lem:otp-leakage-absorption} and Lemma \ref{lem:one-step-recursion} can be found in Appendix \ref{app:otp-leakage-absorption-proof} and Appendix \ref{app:one-step-recursion-proof}, respectively.

\subsection{Transfer to the implemented process}
The preceding subsections complete the one-round leakage argument for the ideal true-key
process.  We now quantify the effect of replacing the true key by User~1's
decoded estimate in the implemented process and then select one deterministic codebook collection.

For distributions $P$ and $Q$ on the same finite alphabet, write
\begin{equation}
d_{\rm TV}(P,Q):=\frac12\lVert P-Q\rVert_1.
\label{eq:tv-definition}
\end{equation}

\begin{lemma}[Mutual-information continuity with a common protected marginal]
\label{lem:common-marginal-continuity}
Let $P_{WZ}$ and $Q_{WZ}$ have the same marginal on $W$, and put
$d=d_{\rm TV}(P_{WZ},Q_{WZ})$.  Then
\begin{equation}
I_P(W;Z)\leq I_Q(W;Z)+
2d\log|\mathcal W|+2h_2(\min\{d,1/2\}).
\label{eq:common-marginal-continuity}
\end{equation}
Here $h_2$ uses natural logarithms.
\end{lemma}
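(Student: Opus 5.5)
The plan is to reduce the statement to a continuity bound for conditional entropy, using the common $W$-marginal. Since $P_W=Q_W$, we have $H_P(W)=H_Q(W)$. Writing $I(W;Z)=H(W)-H(W\mid Z)$ under both laws, the claim becomes
\begin{equation*}
H_Q(W\mid Z)-H_P(W\mid Z)\leq 2d\log|\mathcal W|+2h_2(\min\{d,1/2\}).
\end{equation*}
This reduction matters because it removes any dependence on $|\mathcal Z|$, which would enter if we bounded the joint entropy $H(W,Z)$ directly. The eavesdropper alphabet is $\mathcal Z^{n}$ and grows with $n$, so a bound depending on $|\mathcal Z|$ would be useless later.

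\emph{Case $d>1/2$.} Here the bound is trivial: $I_P(W;Z)\leq H_P(W)\leq\log|\mathcal W|<2d\log|\mathcal W|$, and the remaining terms on the right are nonnegative.

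\emph{Case $d\leq1/2$.} I would invoke the classical Alicki--Fannes--Winter bound for conditional entropy,
\begin{equation*}
|H_P(W\mid Z)-H_Q(W\mid Z)|\leq d\log|\mathcal W|+(1+d)\,h_2\!\left(\tfrac{d}{1+d}\right).
\end{equation*}
If a self-contained argument is wanted, the standard route is as follows. Set $R=\frac{1}{d}(P-Q)_+$ and $R'=\frac{1}{d}(Q-P)_+$, and form the mixture $\Omega=\frac{1}{1+d}P+\frac{d}{1+d}R'=\frac{1}{1+d}Q+\frac{d}{1+d}R$. Conditional entropy is concave in the joint law. It is also almost convex for a two-component mixture, up to an additive $h_2(\cdot)$ term, which one sees by adjoining the mixture label as a binary variable. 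Comparing the two decompositions of $\Omega$ and using $0\leq H(W\mid Z)\leq\log|\mathcal W|$ under $R$ and $R'$ yields the displayed bound.

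It then remains to check the elementary inequality $(1+d)h_2(d/(1+d))\leq2h_2(d)$ on $[0,1/2]$. Near $0$ the left side behaves like $-d\log d+d$ while the right behaves like $-2d\log d$. At $d=1/2$ the left is $\tfrac32h_2(1/3)\approx0.955$ and the right is $2\log2\approx1.386$. A derivative comparison, or concavity of both sides, closes the gap on the whole interval. Since $d\log|\mathcal W|\leq2d\log|\mathcal W|$, the statement follows with room to spare.

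The main obstacle is the conditional-entropy continuity step: it must not involve $|\mathcal Z|$. I would cite the Alicki--Fannes--Winter bound (Winter's classical form) rather than rederive it, and verify only the scalar inequality and the $d>1/2$ case explicitly.
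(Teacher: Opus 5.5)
Your proposal is correct and takes essentially the same route as the paper: use $P_W=Q_W$ to reduce to $H_Q(W\mid Z)-H_P(W\mid Z)$, invoke Winter's classical conditional-entropy continuity bound for $d\le 1/2$, and dispose of $d>1/2$ trivially. The one difference is that you cite the sharper form $d\log|\mathcal W|+(1+d)h_2(d/(1+d))$ and check $(1+d)h_2(d/(1+d))\le 2h_2(d)$ yourself, whereas the paper folds that step into the citation; note that concavity of both sides does not by itself give this scalar inequality, but the difference $f(d)=-d\log d-2(1-d)\log(1-d)-(1+d)\log(1+d)$ is concave with $f(0)=0<f(1/2)$, which does.
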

The implemented and ideal processes have the same marginal law on $W_T$, because
they use the same payload and key generation rules and differ only in the
transmitted codeword lookup.  Hence the common-protected-marginal
continuity lemma applies to their coupled output laws.

\begin{lemma}[Coupling bound for the implemented and ideal processes]
\label{lem:real-ideal-agreement}
Fix a codebook collection $\boldsymbol c$ and couple the implemented and ideal
processes by using the same payload and key variables, coding-randomization
indices, stochastic-prefix seeds, physical-channel randomness, and
deterministic decoders.  Define
\begin{equation}
\mathcal F_{n,T}(\boldsymbol c)
:=\bigcup_{t=2}^T\{\widehat K_{t-1}^{\rm real}\ne K_{t-1}\}.
\label{eq:real-ideal-failure-event}
\end{equation}
If the two histories agree through round $t-1$ and
$\widehat K_{t-1}^{\rm real}=K_{t-1}$, then their round-$t$ ciphertexts,
codeword lookups, prefix outputs, channel outputs, and decoder outputs agree.
Consequently, the final coupled pairs agree on
$\mathcal F_{n,T}(\boldsymbol c)^c$, and
\begin{equation}
d_{\rm TV}(P_{W_TZ^{1:T}}^{\rm real},
P_{W_TZ^{1:T}}^{\rm ideal})
\leq\Pr\{\mathcal F_{n,T}(\boldsymbol c)\}.
\label{eq:real-ideal-tv-coupling}
\end{equation}
The failure event is deliberately conservative: it includes a wrongly
supplied key even if the resulting ciphertext happens to coincide with the
ideal ciphertext.
\end{lemma}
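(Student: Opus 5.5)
The proof is a coupling argument in two parts: an induction over rounds showing that the two coupled processes evolve identically as long as no key error has occurred, followed by the standard coupling inequality for total variation. Fix $\boldsymbol c$ and realize both processes on one probability space. The shared primitive randomness is $\omega$: the payloads $(S_t,U_t,M_2^t)$, the keys $K_t$, the coding-randomization indices $O_{1,t},O_{2,t}$, the stochastic-prefix seeds, and the channel noise. We describe the channel noise as a deterministic function of the inputs and a uniform seed, one seed per channel use. With this representation, every quantity in either process is a deterministic function of $\omega$ and the shared codebook collection. In the ideal process the round-$t$ lookup index of User~1 is $(S_t,U_t\oplus K_{t-1},O_{1,t})$. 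In the implemented process it is $(S_t,U_t\oplus\widehat K_{t-1}^{\rm real},O_{1,t})$. User~2's index $(K_t,M_2^t,O_{2,t})$ is identical in the two processes by construction.

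For the agreement statement, we induct on $t$. Assume the histories agree through round $t-1$ and $\widehat K_{t-1}^{\rm real}=K_{t-1}$. Then $C_t^{\rm real}=U_t\oplus\widehat K_{t-1}^{\rm real}=U_t\oplus K_{t-1}=C_t$. Hence both users' complete indices coincide, so the codeword lookups $V_1^n,V_2^n$ from the fixed codebook $\mathsf c_t$ coincide. The shared prefix seeds then give equal $X_1^n,X_2^n$. The shared channel seeds give equal $(Y_1^n,Y_2^n,Z^n)$. The deterministic decoders, applied to equal arguments (own message, own transmitted sequence, received sequence), give equal outputs, in particular equal $\widehat K_t$.

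At round $1$ there is no ciphertext dependence: the initialization map uses only dummies, so the two processes coincide in round~1. Iterating, on the event $\mathcal F_{n,T}(\boldsymbol c)^c$ every round $2,\dots,T$ has a correctly supplied key, so the whole trajectories agree. In particular $(W_T,Z^{1:T})^{\rm real}=(W_T,Z^{1:T})^{\rm ideal}$ there. Note that $W_T$ is literally the same variable in both processes, because it consists only of shared source variables.

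The total-variation bound is then the coupling inequality. For any set $\mathcal A$,
\[
|P^{\rm real}(\mathcal A)-P^{\rm ideal}(\mathcal A)|\le\Pr\{(W_T,Z^{1:T})^{\rm real}\ne(W_T,Z^{1:T})^{\rm ideal}\}\le\Pr\{\mathcal F_{n,T}(\boldsymbol c)\}.
\]
Taking the supremum over $\mathcal A$ yields \eqref{eq:real-ideal-tv-coupling}. The event $\mathcal F_{n,T}(\boldsymbol c)$ is defined through the implemented process's decoded keys, and it is measurable in $\omega$, so the probability is well defined.

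The main subtlety is the induction hypothesis. Agreement of the round-$t$ outputs requires more than $\widehat K_{t-1}^{\rm real}=K_{t-1}$: the entire history through round $t-1$ must agree, and the ideal decoder's output must be the same deterministic function. We therefore state the induction on the full tuple of round variables, not just on keys. We must also check that no other component of the implemented encoder (for instance User~2's use of the true $K_{t-1}$ for decryption) depends on $\widehat K$. That decryption is receiver-side post-processing and does not affect the transmitted signals, so it cannot break the agreement.
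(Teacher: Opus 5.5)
Your proposal is correct and follows essentially the same route as the paper. Both arguments realize the two processes on shared primitive randomness (payloads, keys, round-1 dummies, randomization indices, prefix and channel seeds), show round-by-round agreement on the event that every supplied key equals the true key, and then apply the coupling inequality $d_{\rm TV}\le\Pr\{(W_T,Z^{1:T})^{\rm real}\ne(W_T,Z^{1:T})^{\rm ideal}\}\le\Pr\{\mathcal F_{n,T}(\boldsymbol c)\}$. One small remark: in this construction the round-$t$ lookups, prefixes, channel uses and decoders depend only on round-$t$ randomness and the supplied key, so your full-history induction hypothesis is harmless but not actually needed.
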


\begin{proposition}[Transfer from the ideal process to the implemented process]
\label{prop:abstract-transfer}
Let the implemented process use a decoded key $\widehat K_{t-1}$ in round $t$,
whereas the ideal process uses $K_{t-1}$.  Use the coupling of
Lemma~\ref{lem:real-ideal-agreement}.  For a fixed codebook collection
$\boldsymbol c$, let
$p_{n,T}(\boldsymbol c)=\Pr\{\mathcal F_{n,T}(\boldsymbol c)\}$ and put
$\bar p_{n,T}=\mathbb E_{\boldsymbol{\mathsf C}}p_{n,T}
(\boldsymbol{\mathsf C})$.  Then
\begin{align}
\mathbb E_{\boldsymbol{\mathsf C}}
 I_{\rm real}(W_T;Z^{1:T}\mid\boldsymbol{\mathsf C})&\leq
\mathbb E_{\boldsymbol{\mathsf C}}
 I_{\rm ideal}(W_T;Z^{1:T}\mid\boldsymbol{\mathsf C})
+\eta_{n,T},
\label{eq:abstract-transfer}
\end{align}
where 
\begin{equation}\label{eq:abstract-transfer-eta}
	\eta_{n,T} :=2\bar p_{n,T}\log|\mathcal W_T| +2h_2(\min\{\bar p_{n,T},1/2\}).
\end{equation}
\end{proposition}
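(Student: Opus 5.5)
The plan is to prove the bound first for each fixed codebook collection $\boldsymbol c$ and then average over $\boldsymbol{\mathsf C}$, using concavity to move the expectation inside the continuity penalty.

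\emph{Fixed-codebook step.} Fix a realization $\boldsymbol c$ and form the coupling of Lemma~\ref{lem:real-ideal-agreement}. The implemented and ideal processes generate $W_T$ by the same payload and key rules. The decoded key enters only through the codeword lookup, so the $W_T$-marginals of $P^{\rm real}_{W_TZ^{1:T}}$ and $P^{\rm ideal}_{W_TZ^{1:T}}$ coincide. Lemma~\ref{lem:real-ideal-agreement} then gives
\begin{equation*}
d(\boldsymbol c):=d_{\rm TV}\bigl(P^{\rm real}_{W_TZ^{1:T}},P^{\rm ideal}_{W_TZ^{1:T}}\bigr)\leq p_{n,T}(\boldsymbol c).
\end{equation*}
Apply Lemma~\ref{lem:common-marginal-continuity} with $P$ the implemented law and $Q$ the ideal law. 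This yields
\begin{equation*}
I_{\rm real}(W_T;Z^{1:T}\mid\boldsymbol c)\leq I_{\rm ideal}(W_T;Z^{1:T}\mid\boldsymbol c)+2d(\boldsymbol c)\log|\mathcal W_T|+2h_2(\min\{d(\boldsymbol c),1/2\}).
\end{equation*}
The map $x\mapsto h_2(\min\{x,1/2\})$ is nondecreasing on $[0,1]$. Hence $d(\boldsymbol c)$ can be replaced by $p_{n,T}(\boldsymbol c)$ in both penalty terms.

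\emph{Averaging step.} Define $g(x):=2x\log|\mathcal W_T|+2h_2(\min\{x,1/2\})$ on $[0,1]$. The function $h_2$ is concave on $[0,1]$ and increasing on $[0,1/2]$. Therefore $x\mapsto h_2(\min\{x,1/2\})$ is concave: it equals $h_2$ on $[0,1/2]$ and is constant at its maximum $\log 2$ afterwards, so it is the minimum of two concave functions, namely $h_2$ on $[0,1/2]$ extended by $\log 2$. More simply, it is a nondecreasing concave function whose slope vanishes past the peak. The linear term is also concave, so $g$ is concave. Take expectations over $\boldsymbol{\mathsf C}$ in the fixed-codebook inequality and apply Jensen's inequality:
\begin{equation*}
\mathbb E_{\boldsymbol{\mathsf C}}\, g\bigl(p_{n,T}(\boldsymbol{\mathsf C})\bigr)\leq g(\bar p_{n,T})=\eta_{n,T}.
\end{equation*}
This gives exactly \eqref{eq:abstract-transfer}.

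\emph{Main obstacle.} The one subtle point is verifying the common-marginal hypothesis of Lemma~\ref{lem:common-marginal-continuity}. The issue is that the implemented process's later keys $K_t$ might depend on earlier decoding errors. They do not: $K_t$ and all payloads are generated by User~2 and User~1 independently of the channel, so the law of $W_T$ is identical in both processes for every $\boldsymbol c$. The rest of the argument consists only of monotonicity and concavity checks.
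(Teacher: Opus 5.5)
Your proposal is correct and follows the paper's proof essentially step for step. For each fixed $\boldsymbol c$ you apply the common-marginal continuity lemma with $d(\boldsymbol c)\le p_{n,T}(\boldsymbol c)$ and use monotonicity of the penalty, then average over $\boldsymbol{\mathsf C}$, using linearity for the $\log|\mathcal W_T|$ term and Jensen's inequality for the concave entropy term. Your ``minimum of two concave functions'' justification is circular as written, but the concavity claim holds: $h_2'(x)=\log\frac{1-x}{x}$ decreases to $0$ at $x=1/2$, so the slope of $x\mapsto h_2(\min\{x,1/2\})$ is nonincreasing on $[0,1]$.
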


\begin{proposition}[Simultaneous codebook selection]
\label{prop:abstract-selection}
Suppose the ensemble-average implemented-process decoding error and ensemble-average
implemented-process leakage are bounded by $a_{n,T}$ and $b_{n,T}$, respectively.  Then
there is one deterministic codebook realization for which the error is at
most $2a_{n,T}$ and the leakage is at most $2b_{n,T}$.
\end{proposition}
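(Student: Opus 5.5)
The plan is a Markov-inequality argument on two nonnegative random variables defined over the codebook ensemble, followed by a union bound. Let $\mathsf e(\boldsymbol{\mathsf C})$ be the implemented-process average decoding error of the fixed collection $\boldsymbol{\mathsf C}$. Let $\ell(\boldsymbol{\mathsf C})=I_{\rm real}(W_T;Z^{1:T}\mid\boldsymbol{\mathsf C}=\boldsymbol{\mathsf C})$ be its implemented-process leakage. Both are nonnegative functions of the codebook collection alone. By hypothesis, $\mathbb E_{\boldsymbol{\mathsf C}}\mathsf e(\boldsymbol{\mathsf C})\le a_{n,T}$ and $\mathbb E_{\boldsymbol{\mathsf C}}\ell(\boldsymbol{\mathsf C})\le b_{n,T}$. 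The expectation here is over the full ensemble, i.e., Level~3 of Section~\ref{subsec:probability-levels}, after the ideal recursion and the transfer of Proposition~\ref{prop:abstract-transfer} have been carried out. I would first note that the selection must be performed at this level and not earlier: the history-conditioned bounds hold only in ensemble average, so a single collection has to be selected jointly for all rounds.

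The key step is to consider the combined quantity
\begin{equation}
F(\boldsymbol{\mathsf C}):=\frac{\mathsf e(\boldsymbol{\mathsf C})}{a_{n,T}}+\frac{\ell(\boldsymbol{\mathsf C})}{b_{n,T}}.
\end{equation}
This gives $\mathbb E F\le 2$, so some realization $\boldsymbol c$ in the support satisfies $F(\boldsymbol c)\le2$. That alone yields only the weaker conclusion $\mathsf e\le 2a_{n,T}$ \emph{or} a bound on the sum, not both simultaneously. So instead I would use Markov's inequality separately:
\begin{equation}
\Pr\{\mathsf e>2a_{n,T}\}<\tfrac12,\qquad \Pr\{\ell>2b_{n,T}\}<\tfrac12 .
\end{equation}
The inequalities are strict provided $a_{n,T},b_{n,T}>0$. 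Indeed, if $\Pr\{\mathsf e>2a\}\ge\frac12$, then $\mathbb E\mathsf e>a$ unless $\mathsf e$ vanishes, which is a contradiction. More carefully, Markov gives $\Pr\{\mathsf e\ge 2a\}\le\frac12$. To obtain strictness, I would argue that if equality held, then $\mathsf e$ would be concentrated on $\{0\}\cup\{2a\}$ with equal mass, and in that case one can simply pick a realization with $\mathsf e=0$ and handle it together with $\ell$ in the same way. The clean route is to use the non-strict bound $\Pr\{\mathsf e>2a\}\le\frac12$ and $\Pr\{\ell>2b\}\le\frac12$. The union bound then leaves probability at least $0$ of success, so an extra step is needed.

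The extra step, which is the main obstacle, is to exclude the degenerate case in which both events have probability exactly $\frac12$ and are disjoint. My first attempt would be to return to the averaged function $F$. On the complement of the union, we already have what we need. If the union had probability one with each piece of probability exactly $\frac12$, then $\mathsf e$ must equal $2a$ on its half and $0$ on the other, and likewise for $\ell$ (equality in Markov forces this). In that situation every realization with $\mathsf e>2a$ fails to exist, since the values are at most $2a$, not larger. This contradicts the assumption that the set has probability $\frac12$. Hence the set $\{\mathsf e\le 2a_{n,T}\}\cap\{\ell\le 2b_{n,T}\}$ has positive probability, and any $\boldsymbol c$ in it is the desired deterministic codebook. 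The degenerate case $a_{n,T}=0$ or $b_{n,T}=0$ is handled directly, since the corresponding quantity then vanishes almost surely.
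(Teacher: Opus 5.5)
Your argument ultimately works, but only after a detour. The paper's proof is exactly the normalized-sum argument you set up and then discarded, and your reason for discarding it is wrong. Both summands of $F(\boldsymbol c)=\mathsf e(\boldsymbol c)/a_{n,T}+\ell(\boldsymbol c)/b_{n,T}$ are nonnegative. Since $\mathbb E F\le 2$, some realization has $F(\boldsymbol c)\le 2$, and that realization satisfies $\mathsf e(\boldsymbol c)/a_{n,T}\le F(\boldsymbol c)\le 2$ and $\ell(\boldsymbol c)/b_{n,T}\le F(\boldsymbol c)\le 2$ simultaneously. So there is no ``or'' and no obstacle. Zero thresholds are handled by dropping the corresponding term, as you also do.

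Your fallback via separate Markov bounds and a union bound is nevertheless valid. For $a_{n,T}>0$ one has the strict bound $\Pr\{\mathsf e>2a_{n,T}\}<\tfrac12$, because otherwise $\mathbb E\,\mathsf e\ge\mathbb E[\mathsf e\,\mathbf 1\{\mathsf e>2a_{n,T}\}]>2a_{n,T}\Pr\{\mathsf e>2a_{n,T}\}\ge a_{n,T}$. The same holds for $\ell$, so the two bad events have total probability below one. That one line is all your ``extra step'' needs.

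Your equality-case-of-Markov version reaches the same conclusion, but it blurs $\{\mathsf e\ge2a_{n,T}\}$ with $\{\mathsf e>2a_{n,T}\}$. Equality in Markov puts $\mathsf e=2a_{n,T}$ on the former set, and that is what makes the latter set null. The aside about ``picking a realization with $\mathsf e=0$'' is not an argument.

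Both routes give the same factor $2$, and factor $k$ for $k$ criteria. The normalized sum simply avoids any strictness or equality-case analysis, which is why the paper uses it.
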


The continuity lemma, the transfer proposition, and the simultaneous codebook-selection proposition are proved in
Appendix~\ref{app:embedded-key-proofs}.  Simultaneous codebook selection is applied
only after the ideal recursion and the transfer from the ideal process to the implemented process have both
been completed.

\subsection{Multiround propagation theorem}
\begin{theorem}[Embedded-key leakage propagation]
\label{thm:embedded-key-propagation}
Suppose that the ideal embedded-key process, the round-codebook ensemble,
and the associated current-round observation kernels defined above satisfy
conditions (A1a)--(A4b).  Then the ideal process
satisfies
\begin{equation}
\mathbb E_{\boldsymbol{\mathsf C}}
 I(W_T,K_T;Z^{1:T}\mid\boldsymbol{\mathsf C})
\leq\sum_{t=1}^T\delta_{n,t}.
\label{eq:abstract-propagation-with-key}
\end{equation}
Consequently, after discarding the terminal key and transferring to the
implemented decoded-key process, the ensemble-average leakage obeys the bound
\begin{equation}
\mathbb E_{\boldsymbol{\mathsf C}}
 I_{\rm real}(W_T;Z^{1:T}\mid\boldsymbol{\mathsf C})
\leq\sum_{t=1}^T\delta_{n,t}+\eta_{n,T}.
\label{eq:abstract-propagation-real}
\end{equation}
If the corresponding ensemble-average decoding-error bound is
$a_{n,T}$, there exists a deterministic implemented code with
\begin{align}
P_e&\leq2a_{n,T},\\
I(W_T;Z^{1:T})&\leq
2\sum_{t=1}^T\delta_{n,t}+2\eta_{n,T}.
\end{align}
\end{theorem}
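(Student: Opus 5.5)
The plan is to take expectations in Lemma~\ref{lem:one-step-recursion} and telescope.

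\emph{Recursion.} For each payload round $t\in\{2,\ldots,T\}$, I would take $\mathbb E_{\boldsymbol{\mathsf C}}$ of both sides of \eqref{eq:one-step-recursion-revised}. This lemma holds at the fixed-codebook level under (A1a), (A1b) and (A3); through Lemma~\ref{lem:otp-leakage-absorption}, the $U_t$ contribution has already been absorbed into the prior key leakage. The last term is then bounded by $\delta_{n,t}$ via the tower-property form \eqref{eq:abstract-one-round-tower} of (A4a). This requires (A2), so that the history-conditioned current-codebook law is the original ensemble law.

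\emph{Telescoping.} The result is
\[
\mathbb E I(W_t,K_t;Z^{1:t}\mid\boldsymbol{\mathsf C})\le \mathbb E I(W_{t-1},K_{t-1};Z^{1:t-1}\mid\boldsymbol{\mathsf C})+\delta_{n,t}.
\]
Summing from $t=2$ to $T$ telescopes to the base term $\mathbb E I(W_1,K_1;Z^1\mid\boldsymbol{\mathsf C})$. Since $W_1=\emptyset$, this base term equals $\mathbb E I(K_1;Z_1\mid\boldsymbol{\mathsf C})\le\delta_{n,1}$ by (A4b). That gives \eqref{eq:abstract-propagation-with-key}.

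\emph{Real bound.} I would discard the terminal key using the chain rule (monotonicity of mutual information): $I(W_T;Z^{1:T}\mid\boldsymbol{\mathsf C})\le I(W_T,K_T;Z^{1:T}\mid\boldsymbol{\mathsf C})$ for each codebook realization. Then Proposition~\ref{prop:abstract-transfer} adds $\eta_{n,T}$, which yields \eqref{eq:abstract-propagation-real}. Proposition~\ref{prop:abstract-transfer} rests on Lemma~\ref{lem:real-ideal-agreement} and Lemma~\ref{lem:common-marginal-continuity}; the common marginal on $W_T$ holds because both processes use identical payload generation.

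\emph{Deterministic code.} Proposition~\ref{prop:abstract-selection} is applied with $a_{n,T}$ and $b_{n,T}=\sum_t\delta_{n,t}+\eta_{n,T}$. Concretely, by Markov's inequality each of the events ``error $>2a_{n,T}$'' and ``leakage $>2b_{n,T}$'' has ensemble probability strictly below $1/2$ (or the bound is trivial). Hence some realization satisfies both simultaneously. Once a codebook is fixed, the conditioning on $\boldsymbol{\mathsf C}$ is suppressed.

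\emph{Main obstacle.} The delicate step is justifying the expectation in the recursion: moving from the fixed-history bound \eqref{eq:abstract-one-round-bound} to the conditional mutual information term appearing in Lemma~\ref{lem:one-step-recursion}. The conditioning there includes $\boldsymbol{\mathsf C}_{>t}$, which plays no role in round $t$ but must be independent of the current round, as (A2) guarantees. I would write the conditional mutual information as an average over $h$ of $I(Q_t;Z_t\mid\mathscr H^{-t}_{t-1}=h,\mathsf C_t)$ and swap the order of averaging. The swap is valid because $\mathsf C_t$ given $h$ has its ensemble law.

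A second point to check is the concave $h_2$ term in $\eta_{n,T}$. I would confirm that Proposition~\ref{prop:abstract-transfer} already states the bound with $\bar p_{n,T}$, the expectation having been taken via Jensen's inequality, so no further convexity argument is needed here.
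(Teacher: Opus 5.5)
Your proposal is correct and follows essentially the same route as the paper. The route is: the ideal recursion $\Lambda_t\le\Lambda_{t-1}+\delta_{n,t}$ from Lemma~\ref{lem:one-step-recursion} together with the tower form \eqref{eq:abstract-one-round-tower} of (A4a), induction from the base case (A4b) with $W_1=\emptyset$, discarding $K_T$ and adding $\eta_{n,T}$ via Proposition~\ref{prop:abstract-transfer}, and finally simultaneous codebook selection. Your Markov-inequality selection step is an equivalent variant of the normalized-sum criterion used for Proposition~\ref{prop:abstract-selection}.
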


\begin{IEEEproof}
The proof consists of the following 4 steps. 

\textit{Step 1: Ideal one-round recursion.}
Let
\begin{equation}
\Lambda_t:=\mathbb E_{\boldsymbol{\mathsf C}}
 I(W_t,K_t;Z^{1:t}\mid\boldsymbol{\mathsf C}).
\end{equation}
Initialization gives $\Lambda_1\leq\delta_{n,1}$.  For every $t\geq2$,
Lemma~\ref{lem:one-step-recursion} and
\eqref{eq:abstract-one-round-tower} give
$\Lambda_t\leq\Lambda_{t-1}+\delta_{n,t}$.

\textit{Step 2: Ideal multiround bound.}
Induction therefore yields
\eqref{eq:abstract-propagation-with-key}.  This is the point at which the
ideal true-key analysis is complete.

\textit{Step 3: Discarding the terminal key and transferring to the actual
process.}
Discarding $K_T$ by data processing gives the same upper bound
for the ideal leakage of $W_T$.  Proposition~\ref{prop:abstract-transfer}
then adds the correction $\eta_{n,T}$ and proves
\eqref{eq:abstract-propagation-real} for the implemented decoded-key
process.

\textit{Step 4: Simultaneous codebook selection.}
After obtaining the implemented-process ensemble bounds for decoding error and leakage, we apply Proposition~\ref{prop:abstract-selection}. This gives one deterministic
codebook collection satisfying the two displayed deterministic bounds.
\end{IEEEproof}

\begin{corollary}[Growing number of rounds]
\label{cor:abstract-growing-rounds}
Assume the stationary bounds
\allowdisplaybreaks
\begin{align}
\delta_{n,t}&\leq c_1e^{-n\beta},\\
\bar p_{n,T}&\leq c_2Te^{-n\alpha},\\
a_{n,T}&\leq c_3Te^{-n\gamma},\\
\log|\mathcal W_T|&\leq c_4nT
\end{align}
for constants $c_1,c_2,c_3,c_4,\alpha,\beta,\gamma>0$.  If $T=T_n\to\infty$ and
\begin{equation}
\frac{\log T_n}{n}\longrightarrow0,
\end{equation}
then the deterministic-code error, the propagated leakage, and the
ideal-to-implemented correction all vanish.  If only the first round is used for
initialization, the rate factor due to the initialization round $(T_n-1)/T_n$ converges to one.
Fixed-input padding may be appended to realize every sufficiently large
total blocklength without increasing error or leakage.
\end{corollary}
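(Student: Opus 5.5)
The plan is to substitute the four stationary bounds into the deterministic conclusions of Theorem~\ref{thm:embedded-key-propagation} and check that each term tends to zero along $T=T_n$. The sub-exponential growth condition $\log T_n/n\to0$ is what lets every factor of $T_n$ be absorbed.

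\emph{Error and propagated leakage.} The deterministic code has
\[
P_e\le 2a_{n,T_n}\le 2c_3T_ne^{-n\gamma}=2c_3\exp\bigl(-n(\gamma-\log T_n/n)\bigr),
\]
which tends to zero because $\log T_n/n\to0$; eventually the exponent exceeds $n\gamma/2$. The leakage bound contains $2\sum_{t=1}^{T_n}\delta_{n,t}\le 2c_1T_ne^{-n\beta}$, and this vanishes by the same argument.

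\emph{Ideal-to-implemented correction.} The term $2\eta_{n,T_n}$ needs care because $\log|\mathcal W_{T_n}|$ grows like $nT_n$. We have
\[
\bar p_{n,T_n}\log|\mathcal W_{T_n}|\le c_2c_4\,nT_n^2e^{-n\alpha}=c_2c_4\exp\bigl(-n(\alpha-2\log T_n/n-\log n/n)\bigr)\to0 .
\]
Since $\bar p_{n,T_n}\le c_2T_ne^{-n\alpha}\to0$ and $h_2$ is continuous at $0$ with $h_2(0)=0$, the binary-entropy term also vanishes. The main point to get right is this double factor of $T_n$, together with the $n$ factor. It forces sub-exponential rather than merely polynomial control of $T_n$, and that is exactly what the hypothesis provides. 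One must also make sure that codebook selection via Proposition~\ref{prop:abstract-selection} is done separately for each $n$. This is legitimate because the theorem holds for every fixed pair $(n,T)$.

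\emph{Rates and padding.} Only rounds $2,\dots,T_n$ carry payload. The payload rate per active channel use is therefore $\frac{T_n-1}{T_n}$ times the per-round rates $(R_{1,\mathrm{sec}}+R_{1,e},R_2)$, and $\frac{T_n-1}{T_n}\to1$ because $T_n\to\infty$.

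For an arbitrary total blocklength $N$, choose $n=n(N)$ and $T_n$ so that $nT_n\le N$ and $nT_n/N\to1$. For example, take $n$ as the largest integer with $nT_n\le N$ for a slowly growing choice such as $T_n=\lceil\log n\rceil$. This works because $N-nT_n$ is at most of order $T_{n+1}+n(T_{n+1}-T_n)$, which is $o(nT_n)$.

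Fill the remaining $N-nT_n$ uses with fixed deterministic inputs, ignored by both decoders. The padded outputs at the eavesdropper are then generated by a memoryless channel from fixed inputs and are independent of everything else. Hence $I(W_{T_n};Z^N)=I(W_{T_n};Z^{1:T_n})$ by the chain rule, and the error probability is unchanged.

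Finally, the rates computed with respect to $N$ converge to the same limits. This establishes ordinary achievability along all blocklengths.
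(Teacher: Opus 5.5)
Your proposal is correct and follows essentially the paper's argument, which is carried out concretely in Step~3 of the proof of Theorem~\ref{thm:adaptive-achievable-region} and in Corollary~\ref{cor:growing-rounds-asymmetric}: substitute the stationary bounds, absorb the factors $T_n$ and $nT_n^2$ via $\log T_n/n\to0$, send the binary-entropy term to zero by continuity, and append fixed-input padding, which changes nothing by memorylessness. The only difference is the parameter choice: the paper indexes by total blocklength with $T_N=\lfloor N^{1/3}\rfloor$ and $n_N=\lfloor N/T_N\rfloor$, so the padding length is below $T_N$ trivially, whereas your choice $T_n=\lceil\log n\rceil$ with maximal $n$ relies on your (correct) estimate $N-nT_n<T_{n+1}+n=o(nT_n)$.
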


Theorem \ref{thm:embedded-key-propagation} establishes only the implication from the
conditions to multiround leakage control.  The following subsection shows 
that the asymmetric key-assisted adaptive code construction satisfies every one of those conditions.

\subsection{Verification for the asymmetric adaptive code construction}
\label{subsec:verify-A1-A4}

We now verify conditions \emph{(A1a), (A1b), and (A2)--(A4b)} for the
ideal process and round-codebook ensemble induced by the concrete asymmetric adaptive code 
construction of Section~\ref{sec:asymmetric-construction}.

\medskip
\noindent\textbf{Step 1: Verification of conditions (A1a) and (A1b).}
For every $t\geq2$, the variables
$S_t,U_t,M_2^t,K_t,O_{1,t},O_{2,t}$ are generated mutually
independently and uniformly before codebook lookup and independently of all
variables generated in preceding rounds and of the complete codebook
collection.  Hence \eqref{eq:abstract-current-freshness} holds.

Here $W_{t-1}$ contains only the payload source variables
$(S_2,U_2,\ldots,S_{t-1},U_{t-1})$; it contains no decoded key, channel
output, prefix randomness, or coding-randomization index.  The key $K_j$ is
generated in round $j$ and is first used only in round $j+1$ to encrypt
$U_{j+1}$.  In particular, $U_{t-1}$ uses $K_{t-2}$, not $K_{t-1}$, so
$K_{t-1}$ is not used to generate any component of $W_{t-1}$.  Passing the
independently generated sources through the channel does not change their
unconditioned source marginal; correlations arise only after a channel
output or decoding event is conditioned upon.  Therefore the independent
generation rule gives, before adjoining the past observation,
\begin{equation}
P_{W_{t-1},K_{t-1}\mid\boldsymbol{\mathsf C}}
=P_{W_{t-1}\mid\boldsymbol{\mathsf C}}
 P_{K_{t-1}\mid\boldsymbol{\mathsf C}},
\label{eq:asym-A1b-factorization}
\end{equation}
which is \eqref{eq:abstract-preobservation-closure}.  This statement is
made before adjoining $Z^{1:t-1}$ to the conditioning.  We make no claim
that the key and accumulated payload remain independent after the eavesdropper's past
observation is known.

\medskip
\noindent\textbf{Step 2: Verification of condition (A2).}
The round codebooks are generated independently.  Define
\begin{align}
\mathscr H_{t-1}^{-t}:=\sigma(&W_{t-1},K_{t-1},Z^{1:t-1},
\boldsymbol{\mathsf C}_{<t},\boldsymbol{\mathsf C}_{>t}).
\label{eq:asym-history}
\end{align}
Then $\mathsf C_t$ is independent of $\mathscr H_{t-1}^{-t}$.  Importantly,
$\mathsf C_t$ is not included in the history on which the one-round
ensemble bound is conditioned.  We first apply that bound with expectation
over $\mathsf C_t$ and then use the tower property to average over the
history and the remaining codebooks.

\medskip
\noindent\textbf{Step 3: Verification of condition (A3).}
Fix the history before round $t$.  The current primitive variables remain independent of the preceding variables, the current codebook retains its product ensemble law, and the
current output is generated only through the two current codeword lookups,
the stochastic prefixes, and the memoryless channel.  The following lemma
records this factorization and distinguishes the kernel with a fixed
ciphertext from the kernel obtained after averaging the history-wise uniform
ciphertext.  Keeping the statement separate also distinguishes structural
locality from the quantitative resolvability estimate used later.

\begin{lemma}[Round-local factorization for the asymmetric construction]
\label{lem:asym-round-local-factorization}
Fix $t\geq2$ and the ideal process, and let $\mathscr H_{t-1}^{-t}$ be as in
\eqref{eq:asym-history}.  Conditional on any history realization and on the
codebooks outside round $t$, the current codebook $\mathsf C_t$ remains
independent.  The six fresh variables
\begin{equation}
S_t,\quad U_t,\quad M_2^t,\quad K_t,\quad
O_{1,t},\quad O_{2,t}
\label{eq:asym-six-fresh-variables}
\end{equation}
are mutually independent and uniform.  The ideal ciphertext $C_t$ is
uniform and independent of
\begin{equation}
(Q_t,M_2^t,O_{1,t},O_{2,t},\mathscr H_{t-1}^{-t}).
\label{eq:asym-current-factorization}
\end{equation}
Write $\mathsf c_t$ for the current codebook realization, to distinguish it
from a ciphertext value $c$.  For a realization $q=(\sigma,\kappa)$ of
$Q_t=(S_t,K_t)$, the components $\sigma$ and $\kappa$ index the
User~1 and User~2 codewords, respectively.  For fixed history and
$\mathsf c_t$, the complete indices select auxiliary codewords
deterministically.  The stochastic prefixes are then applied through the
conditional memoryless kernels
\begin{equation}
P_{X_1^nX_2^n\mid V_1^nV_2^n}
=P_{X_1\mid V_1}^{\otimes n}P_{X_2\mid V_2}^{\otimes n}.
\label{eq:asym-prefix-kernel}
\end{equation}
Accordingly, the fixed-history current-round law factors as
\begin{align}
&P(q,c,m,o_1,o_2,v_1^n,v_2^n,x_1^n,x_2^n,z^n
 \mid h,\mathsf c_t)\notag\\
&\quad=P_Q(q)P_C(c)P_{M_2}(m)P_{O_1}(o_1)P_{O_2}(o_2)\notag\\
&\qquad\times
 \mathbf 1\{v_1^n=v_1^n(\sigma,c,o_1;\mathsf c_t)\}
 \mathbf 1\{v_2^n=v_2^n(\kappa,m,o_2;\mathsf c_t)\}\notag\\
&\qquad\times P_{X_1\mid V_1}^{\otimes n}(x_1^n\mid v_1^n)
 P_{X_2\mid V_2}^{\otimes n}(x_2^n\mid v_2^n)
 P_{Z\mid X_1X_2}^{\otimes n}(z^n\mid x_1^n,x_2^n).
\label{eq:A3-factorization}
\end{align}
Thus the two lookups use the appropriate components of the protected state
$q=(\sigma,\kappa)$ explicitly.  In particular, prefix outputs are not
asserted to be independent of the selected codewords.
Consequently,
\begin{align}
P_{Z_t\mid W_{t-1},Z^{1:t-1},Q_t,C_t,
\boldsymbol{\mathsf C}}
 &=P_{Z_t\mid Q_t,C_t,\mathsf C_t},
\label{eq:asym-round-locality-1}\\
P_{Z_t\mid W_{t-1},Z^{1:t-1},Q_t,
\boldsymbol{\mathsf C}}
 &=P_{Z_t\mid Q_t,\mathsf C_t},
\label{eq:asym-round-locality-2}
\end{align}
and
\begin{equation}
U_t-(C_t,Z^{1:t-1},W_{t-1},Q_t,
\boldsymbol{\mathsf C})-Z_t
\label{eq:asym-otp-markov}
\end{equation}
is a Markov chain.
\end{lemma}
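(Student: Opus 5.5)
The plan is to first write down the joint law of all variables in the ideal process up to and including round $t$, with the full codebook collection $\boldsymbol{\mathsf C}$ held fixed at a realization $\boldsymbol c$. Every other claim will then be read off from this law by marginalization.

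\textbf{Step 1: independence of $\mathsf C_t$.} By construction, the primitive sources of rounds $1,\dots,t-1$ are independent of the round codebooks. So are the coding-randomization indices, the prefix seeds, and the channel noise. The history variables $(W_{t-1},K_{t-1},Z^{1:t-1})$ are therefore a measurable function of these primitives together with $\boldsymbol{\mathsf C}_{<t}$. The only exception is $\widehat K$, which enters the real process but not the ideal one. Since $\mathsf C_t$ is generated independently of everything else, it is independent of $\mathscr H_{t-1}^{-t}$ and keeps its ensemble law. This gives the first claim.

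\textbf{Step 2: the ciphertext and the factorization.} The six fresh variables are generated mutually independently and uniformly, independent of all earlier primitives and of $\boldsymbol{\mathsf C}$. Hence they are independent of $\mathscr H_{t-1}^{-t}$. For the ciphertext $C_t=U_t\oplus K_{t-1}$, condition on any value of $(Q_t,M_2^t,O_{1,t},O_{2,t},h)$; such a value fixes $K_{t-1}=k$. Then $C_t=U_t\oplus k$, and since $U_t$ is uniform and independent of the conditioning, $C_t$ is uniform. Its conditional law therefore does not depend on the conditioning, which is the claimed independence. This step is the crux: the independence holds even though $K_{t-1}$ is part of $h$, because translation preserves the uniform law.

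Given $h$ and $\mathsf c_t$, the codeword lookups are deterministic in $(\sigma,c,o_1)$ and $(\kappa,m,o_2)$, as in \eqref{eq:asym-round-indices}. The prefixes and the memoryless channel act on the round-$t$ symbols only, using fresh seeds and noise. Multiplying these kernels gives \eqref{eq:A3-factorization}, and the right-hand side does not depend on $h$.

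\textbf{Step 3: the kernel identities and the Markov chain.} For \eqref{eq:asym-round-locality-1}, fix $(Q_t,C_t)=(q,c)$ and sum \eqref{eq:A3-factorization} over $m,o_1,o_2,v,x$. The result depends only on $(q,c,\mathsf c_t)$. Conditioning further on $W_{t-1},Z^{1:t-1}$ or on the other codebooks, which are parts of $h$, changes nothing, because the factorization holds for every $h$ separately. For \eqref{eq:asym-round-locality-2}, also average over $c$ with the uniform weight $P_C$; this weight is independent of $h$ by Step 2, so the result depends only on $(q,\mathsf c_t)$.

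For the Markov chain \eqref{eq:asym-otp-markov}, observe that $Z_t$ is generated from $(Q_t,C_t,M_2^t,O_{1,t},O_{2,t})$, the current codebook, and fresh randomness. Conditional on $(C_t,Z^{1:t-1},W_{t-1},Q_t,\boldsymbol{\mathsf C})$, the variables $U_t$ and $Z_t$ are therefore independent provided $(M_2^t,O_{1,t},O_{2,t},\text{fresh noise})$ is conditionally independent of $U_t$. This holds because those variables are independent of $(U_t,K_{t-1},\text{history})$. The one subtle point is that $K_{t-1}$ is a function of the conditioning only up to the pair $(U_t,C_t)$, i.e.\ $U_t=C_t\ominus K_{t-1}$. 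The argument above never uses $U_t$ except through the lookup, where it enters only via $C_t$, so the conditional independence is unaffected and the Markov chain holds.
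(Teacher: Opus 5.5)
Your proof is correct and follows essentially the same route as the paper. Translation invariance of the uniform $U_t$ under the history-fixed key $K_{t-1}$ makes the ideal ciphertext uniform and independent of the current protected, payload and randomization coordinates and of the history; the lookups, the memoryless prefixes and the channel then give the product law \eqref{eq:A3-factorization}, and marginalizing it yields the two kernel identities and the Markov chain. Your version is only slightly more explicit in two places: you condition on the full tuple at once to get the joint independence of $C_t$, and you show $Z_t$ is a function of the conditioning and of fresh variables independent of $U_t$ to get the Markov chain.
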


\begin{IEEEproof}
Because $K_{t-1}$ is measurable with respect to
$\mathscr H_{t-1}^{-t}$, every admissible history realization $h$ fixes a
value $k(h)$ of that key.  Hence, for every ciphertext symbol $c$,
\begin{align}
\Pr\{C_t=c\mid\mathscr H_{t-1}^{-t}=h\}
&=\Pr\{U_t=c\ominus k(h)\}=|G_n|^{-1}.
\label{eq:asym-ciphertext-uniform-calculation}
\end{align}
Let $L_t=(S_t,K_t,M_2^t,O_{1,t},O_{2,t})$.  Freshness of $U_t$ and
independent generation of $L_t$ give, for every admissible $h$,
\begin{equation}
P_{C_tL_t\mid\mathscr H_{t-1}^{-t}=h}
=P_{C_t\mid\mathscr H_{t-1}^{-t}=h}
 P_{L_t\mid\mathscr H_{t-1}^{-t}=h}.
\label{eq:asym-history-product-uniformity}
\end{equation}
Thus the six current index coordinates required by the one-round
resolvability bound are conditionally mutually independent and uniform.  Independent generation gives
$\mathsf C_t\perp\mathscr H_{t-1}^{-t}$.  Current codeword lookup, the two
conditional prefix kernels in \eqref{eq:asym-prefix-kernel}, and the
memoryless channel then give \eqref{eq:A3-factorization}; none of these
kernels contains a past variable.
Marginalizing \eqref{eq:A3-factorization} over $M_2^t$, the two coding-randomization indices,
and both prefix outputs gives, for fixed $(q,c,\mathsf c_t)$,
\begin{align}
&P_{Z_t\mid q,c,\mathsf c_t}(z^n)\notag\\
&=\frac{1}{|\mathcal M_2||\mathcal O_1||\mathcal O_2|}
  \sum_{m,o_1,o_2}\sum_{x_1^n,x_2^n}
  P_{X_1\mid V_1}^{\otimes n}(x_1^n\mid
       v_1^n(\sigma,c,o_1;\mathsf c_t))\notag\\
&\quad\times P_{X_2\mid V_2}^{\otimes n}(x_2^n\mid
       v_2^n(\kappa,m,o_2;\mathsf c_t))
  P_{Z\mid X_1X_2}^{\otimes n}(z^n\mid x_1^n,x_2^n).
\label{eq:A3-explicit-kernel}
\end{align}
The right-hand side contains no past variable, proving
\eqref{eq:asym-round-locality-1}.  Averaging this display further over the
history-wise uniform ciphertext gives
\begin{equation}
P_{Z_t\mid q,\mathsf c_t}(z^n) 
=|G_n|^{-1}\sum_{c\in G_n}P_{Z_t\mid q,c,\mathsf c_t}(z^n),
\label{eq:A3-explicit-averaged-kernel}
\end{equation}
which proves \eqref{eq:asym-round-locality-2}.  Finally, conditional on
$(C_t,Q_t,\mathsf C_t)$ and the fresh averaging variables, current input
selection and the memoryless channel do not use $U_t$ separately.  After
marginalization this is exactly the Markov chain
\eqref{eq:asym-otp-markov}.
\end{IEEEproof}

Lemma~\ref{lem:asym-round-local-factorization} proves that the ideal
current-round observation kernel satisfies condition (A3).  In particular,
the first kernel keeps the ciphertext visible for the one-time-pad step,
whereas the averaged kernel removes the old-payload/current-output term.
The Markov chain is the precise condition used when current encrypted
payload leakage is transferred to the preceding key state.

\medskip
\noindent\textbf{Step 4: Verification of conditions (A4a) and (A4b).}
Conditional on every admissible realization of
$\mathscr H_{t-1}^{-t}$, Lemma~\ref{lem:asym-round-local-factorization}
shows that the ideal-round coordinates have the product-uniform law required
by Corollary~\ref{cor:history-conditioned-one-round}.  The explicit
identification is
\begin{align}
(A_1,J_1,B_1)&=(S_t,C_t,O_{1,t}),\notag\\
(A_2,J_2,B_2)&=(K_t,M_2^t,O_{2,t}).
\label{eq:asym-A4-explicit-map}
\end{align}
The current codebook is independent of the history by condition (A2).
Corollary~\ref{cor:history-conditioned-one-round} therefore gives, uniformly
in the history value $h$,
\begin{equation}
\mathbb E_{\mathsf C_t}
 I(Q_t;Z_t\mid\mathscr H_{t-1}^{-t}=h,\mathsf C_t)
\leq\delta_n(P,s).
\label{eq:asym-A4-bound}
\end{equation}
Averaging over the history and the codebooks outside round $t$ yields
\eqref{eq:abstract-one-round-tower}, proving (A4a).

For round~1, use the specialization
\eqref{eq:asym-initialization-map}.  The same one-round proposition protects
$(D_{1,\mathrm{sec}},K_1)$; data processing discards the dummy coordinate
and gives \eqref{eq:abstract-initial-bound}.  This proves (A4b) without a
separate initialization resolvability argument.

The four verification steps establish all hypotheses of
Theorem~\ref{thm:embedded-key-propagation}; applying it gives the following
coding bound for the concrete asymmetric construction.
\begin{corollary}[Fixed-round bound for the asymmetric construction]
\label{cor:fixed-round-asymmetric}
Fix $P\in\mathcal Q$, $s\in(0,1]$, and $T\geq2$.  For component rates
satisfying \eqref{eq:asym-key-size} and
\eqref{eq:asym-finite-s-conditions}, there exists a deterministic asymmetric
adaptive code of blocklength $nT$ and payload sizes $e^{n(T-1)R_i}$ such
that
\begin{align}
P_e^{nT}&\leq2a_{n,T},\\
I(W_T;Z^{1:T})&\leq2T\delta_n(P,s)+2\eta_{n,T},
\end{align}
where $a_{n,T}$, $\delta_n(P,s)$, and $\eta_{n,T}$ are given in
\eqref{eq:asym-ensemble-error}, \eqref{eq:asym-delta}, and
\eqref{eq:abstract-transfer-eta}, respectively.  For fixed $T$, both bounds
decay exponentially in $n$.  More explicitly, if
$\bar p_{n,T}\leq c_Te^{-n\alpha}$ and
$\log|\mathcal W_T|\leq c'_Tn$, then for every
$0<\alpha'<\alpha$ there is $c''_T<\infty$ such that
\begin{equation}
\eta_{n,T}\leq c''_T n e^{-n\alpha'}
\end{equation}
for all sufficiently large $n$; the polynomial and binary-entropy factors
are absorbed by the exponent margin.
\end{corollary}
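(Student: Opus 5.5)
The plan is to obtain the statement as a direct instance of Theorem~\ref{thm:embedded-key-propagation}. The verification steps preceding the corollary establish conditions (A1a)--(A4b) for the ideal process and round-codebook ensemble of the asymmetric construction. In these steps we take $\delta_{n,t}=\delta_n(P,s)$ for every $t\ge2$. For round~1 we also take $\delta_{n,1}=\delta_n(P,s)$: the initialization map \eqref{eq:asym-initialization-map} uses the same index dimensions, so Proposition~\ref{prop:one-round-selected-confidential} yields the same bound \eqref{eq:asym-delta}. Hence $\sum_{t=1}^T\delta_{n,t}=T\delta_n(P,s)$.

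For the reliability side, the implemented process decodes complete index triples in each round. The round-$t$ decoding error under the ensemble is controlled by Proposition~\ref{prop:one-round-selected-confidential}. This holds even though $C_t^{\rm real}$ depends on $\widehat K_{t-1}$, because $C_t^{\rm real}$ is uniform and independent of the current codebook given the history: $U_t$ is fresh and uniform. A union bound over the $T$ rounds then gives ensemble error at most $a_{n,T}$ in \eqref{eq:asym-ensemble-error}. The recovery of $U_t$ by User~2 uses the true key $K_{t-1}$, so no extra error term arises. The same union bound controls $\bar p_{n,T}$, because $\mathcal F_{n,T}$ is contained in the event that some key-decoding step fails; thus $\bar p_{n,T}\le a_{n,T}$. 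Theorem~\ref{thm:embedded-key-propagation} with Proposition~\ref{prop:abstract-selection} then gives a deterministic code satisfying both displayed bounds. Since rounds $2,\dots,T$ carry payload at rates $R_i$ per use, the payload sizes are $e^{n(T-1)R_i}$.

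For the exponential decay at fixed $T$, conditions \eqref{eq:asym-finite-s-conditions} make every exponent in $a_{n,T}$ and $\delta_n(P,s)$ negative. The only delicate step, and the one I expect to be the main obstacle, is the estimate on $\eta_{n,T}$. Put $\bar p=\bar p_{n,T}\le c_Te^{-n\alpha}$. For large $n$ we have $\bar p<1/2$. The first term is bounded by $2\bar p\log|\mathcal W_T|\le 2c_Tc'_Tne^{-n\alpha}$. For the entropy term we use $h_2(p)\le p\log(1/p)+p$ for $p\le 1/2$; since $p\log(1/p)$ is increasing on $(0,e^{-1}]$, this gives $h_2(\bar p)\le c_Te^{-n\alpha}(n\alpha-\log c_T+1)$. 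Both terms are therefore $O(ne^{-n\alpha})$. Because $ne^{-n\alpha}\le ne^{-n\alpha'}$, for every $\alpha'<\alpha$ a finite constant $c''_T$ absorbs them; the exponent slack $\alpha-\alpha'$ is only needed if one wishes to drop the factor $n$. This establishes $\eta_{n,T}\le c''_T ne^{-n\alpha'}$ and completes the corollary.
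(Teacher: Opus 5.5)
Your proposal is correct and follows the paper's own route. You invoke Theorem~\ref{thm:embedded-key-propagation} on the already verified conditions (A1a)--(A4b) with $\delta_{n,t}=\delta_n(P,s)$ for every round, including the initialization round. You obtain $\bar p_{n,T}\le a_{n,T}$ because a wrongly used key is itself a complete-index decoding error, and then select one codebook via Proposition~\ref{prop:abstract-selection}. Your explicit estimate of $\eta_{n,T}$ fills in a step the paper only asserts, and you rightly observe that, with the factor $n$ kept, the slack $\alpha'<\alpha$ is not needed.

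One small precision concerns your sentence that no extra error term arises because User~2 recovers $U_t$ with the true key. Decryption with the true key is exactly why a wrong $\widehat K_{t-1}$ corrupts User~2's estimate of $U_t$. No extra term is needed only because that mismatch is already a round-$(t-1)$ complete-index decoding error counted in $a_{n,T}$.
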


\begin{IEEEproof}
The ideal embedded-key process, the round-codebook ensemble, and the
current-round observation kernels were verified above to satisfy conditions
(A1a), (A1b), and (A2)--(A4b).  Theorem~\ref{thm:embedded-key-propagation}
gives the ideal leakage recursion and its transfer to the implemented encoder.
Let $\mathcal E_{\rm key}$ denote use of a wrongly decoded key and $\mathcal E_{\rm nom}$ the event that at least one complete index is decoded incorrectly.  Since each key is a decoded complete-index component, $\mathcal E_{\rm key}\subseteq\mathcal E_{\rm nom}$; hence $\bar p_{n,T}\le a_{n,T}$.  After ensemble transfer, Proposition~\ref{prop:abstract-selection} is applied only to implemented-process error and implemented-process leakage.
\end{IEEEproof}

\begin{corollary}[Growing rounds and removal of initialization loss]
\label{cor:growing-rounds-asymmetric}
Suppose the finite-$s$ inequalities
\eqref{eq:asym-finite-s-conditions} hold with positive margins.  For total
blocklength $N$, choose
\begin{equation}
T_N=\lfloor N^{1/3}\rfloor,\qquad
n_N=\lfloor N/T_N\rfloor,
\end{equation}
run the active construction for $N'_N=n_NT_N$ channel uses, and append fixed
inputs for the remaining $N-N'_N$ uses.  Then error and leakage vanish, and
the effective payload rates converge to $(R_1,R_2)$.  No positive exponent
per total blocklength $N$ is asserted for this growing-round sequence.
\end{corollary}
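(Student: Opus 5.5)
The plan is to reduce the statement to Corollary~\ref{cor:abstract-growing-rounds} by checking its four stationary hypotheses for the concrete construction, with constants that do not depend on $T$. The padding and the rate-counting steps are then handled separately.

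\textbf{Fix the exponents.} Fix $s$ so that the five inequalities \eqref{eq:asym-finite-s-conditions} hold with a common margin $\mu>0$. From \eqref{eq:asym-delta}, each term of $\delta_n(P,s)$ has exponent at most $-ns\mu$. Hence
\[
\delta_{n,t}\le c_1e^{-n\beta},\qquad c_1=3/s,\quad \beta=s\mu,
\]
uniformly in $t$. This uses Corollary~\ref{cor:history-conditioned-one-round} through the verification in Step~4. From \eqref{eq:asym-ensemble-error}, $a_{n,T}\le 2Te^{-ns\mu}$, so $\gamma=s\mu$.

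\textbf{Bound the key-failure probability.} The proof of Corollary~\ref{cor:fixed-round-asymmetric} shows $\bar p_{n,T}\le a_{n,T}$. Thus $\bar p_{n,T}$ obeys the same bound, with $\alpha=s\mu$.

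\textbf{Bound the message alphabet.} Since $W_T=(S_2,U_2,\dots,S_T,U_T)$, we have
\[
\log|\mathcal W_T|=n(T-1)R_1\le c_4nT,\qquad c_4=R_1.
\]

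\textbf{Apply the growing-round schedule.} With $T_N=\lfloor N^{1/3}\rfloor$ and $n_N=\lfloor N/T_N\rfloor\sim N^{2/3}$, we get $T_N\to\infty$ and $\log T_N/n_N\to0$. Theorem~\ref{thm:embedded-key-propagation} then gives a deterministic code whose error is at most $4T_Ne^{-n_Ns\mu}$. Its leakage is at most
\[
2T_Nc_1e^{-n_N\beta}+2\eta_{n_N,T_N}.
\]
The first two of these vanish because $N^{1/3}e^{-cN^{2/3}}\to0$.

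\textbf{Control the transfer correction.} This is the step needing care, because $\log|\mathcal W_T|$ grows like $nT\sim N$. We have
\[
\bar p\log|\mathcal W_T|\le 2c_4\,n_NT_N^2e^{-n_Ns\mu}\le 2c_4\,N^{4/3}e^{-cN^{2/3}}\to0.
\]
Also $h_2(\min\{\bar p,1/2\})\to0$ because $\bar p\to0$. So $\eta_{n_N,T_N}\to0$.

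\textbf{Pad with fixed inputs.} The padding adds $N-N'_N<T_N+n_N$ deterministic fixed-input uses. These are independent of the messages given the code, so by data processing the leakage $I(M_1;Z^N)$ equals that of the active part. The decoders ignore the padded outputs, so the error does not increase.

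\textbf{Check the rates.} The payload sizes are $e^{n_N(T_N-1)R_i}$. The effective rate is
\[
R_i\,\frac{n_N(T_N-1)}{N}=R_i\,\frac{N'_N}{N}\cdot\frac{T_N-1}{T_N}.
\]
Both $N'_N/N\to1$, since $N-N'_N=O(N^{2/3})$, and $(T_N-1)/T_N\to1$. This yields \eqref{eq:model-rate-limit}, and integer rounding contributes $o(1)$.

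No exponent in $N$ is claimed, since the error bound decays only like $e^{-cN^{2/3}}$ up to the polynomial factor. The main obstacle is the uniformity of $\eta$ under growing $T$. Its control relies on the fact that $\bar p$ decays exponentially in $n_N$ while all prefactors are polynomial in $N$.
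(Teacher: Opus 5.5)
Your proposal is correct and follows essentially the same route as the paper. You verify the stationary hypotheses of Corollary~\ref{cor:abstract-growing-rounds} with $T$-independent constants, namely $\bar p_{n,T}\le a_{n,T}\le 2Te^{-n\alpha}$ and $\log|\mathcal W_T|\le R_1nT$; you show $\eta_{n_N,T_N}=O(n_NT_N^2e^{-n_N\alpha})+2h_2(\cdot)\to0$; you select one codebook collection per $N$; and you handle the padding and the factors $N'_N/N$ and $(T_N-1)/T_N$ as in Step~3 of the proof of Theorem~\ref{thm:adaptive-achievable-region}.

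Two cosmetic points. The padding step is really the chain rule together with independence of the fixed-input outputs from $(M_1,Z^{N'_N})$, not data processing. Also, $N-N'_N<T_N$ holds directly, which is tighter than the bound you used.
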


\begin{IEEEproof}
Positive finite-$s$ margins give exponential bounds of the form required by
Corollary~\ref{cor:abstract-growing-rounds}.  The choices above satisfy
$T_N\to\infty$, $n_N\to\infty$, and $\log T_N/n_N\to0$.  Also,
$N'_N/N\to1$ and $(T_N-1)/T_N\to1$.  For each $N$, the simultaneous codebook-selection
argument is applied to that $N$-dependent collection of $T_N$ round
codebooks, yielding one deterministic collection for the corresponding
code.  By channel memorylessness, the fixed
padding outputs are independent of the payload conditional on the active
output and do not increase error or leakage.
\end{IEEEproof}

\medskip
\noindent\textbf{Transition to rate projection.}
The probabilistic analysis is complete.  Section~\ref{sec:region-geometry}
projects the key-size condition and the five strict finite-$s$ inequalities
in \eqref{eq:asym-finite-s-conditions} onto the payload rates.

\section{Exact rate projection and fixed-distribution geometry}
\label{sec:region-geometry}

Sections~\ref{sec:asymmetric-construction} and
\ref{sec:embedded-key-propagation} have established a concrete asymmetric
adaptive code with vanishing decoding error and information leakage whenever its
component rates satisfy the strict finite-$s$ conditions in
\eqref{eq:asym-finite-s-conditions} together with the key-size condition
\eqref{eq:asym-key-size}.  
The remaining task is a rate-projection problem.  In this section, we
project the component-rate system, connect the strict Shannon system to the
finite-$s$ bounds, and finally compare the projected adaptive polygon with
the non-adaptive rectangle.

Fix $P\in\mathcal Q_{\rm F}$ and abbreviate
\begin{align}
A=A(P),~ &B=B(P),~ \notag\\
C=C(P),~ &D=D(P),~ E=E(P).
\label{eq:geometry-abbreviations}
\end{align}
Recall that
\begin{equation}
\mathcal Q_{\rm F}
=\{P\in\mathcal Q:C<A,\ D<B,\ E<A+B\}.
\label{eq:geometry-QF}
\end{equation}

\subsection{Exact projection of the asymmetric component-rate system}

At the Shannon-information level, the reliability and one-round secrecy
conditions in \eqref{eq:asym-finite-s-conditions} become
\allowdisplaybreaks
\begin{align}
\widetilde R_1+\widetilde R_{1,r}&<A,
\label{eq:projection-rel1}\\
\widetilde R_2+\widetilde R_{2,r}&<B,
\label{eq:projection-rel2}\\
\widetilde R_{1,r}&>C,
\label{eq:projection-sec1}\\
\widetilde R_{2,r}+R_2&>D,
\label{eq:projection-sec2}\\
\widetilde R_{1,r}+\widetilde R_{2,r}+R_2&>E.
\label{eq:projection-sec12}
\end{align}
They are supplemented by the identities
\begin{align}
R_1&=R_{1,\mathrm{sec}}+R_{1,e},~
\widetilde R_1=R_{1,\mathrm{sec}},~
\widetilde R_{1,r}=R_{1,e}+R_{1,o},
\label{eq:projection-user1}\\
\widetilde R_2&=R_2+R_{2,k},~
\widetilde R_{2,r}=R_{2,o},~
R_{1,e}=R_{2,k},
\label{eq:projection-user2}
\end{align}
and by nonnegativity of all component rates.

\begin{proposition}[Projection of the adaptive component-rate constraints]
\label{prop:exact-adaptive-projection}
For fixed $P\in\mathcal Q_{\rm F}$, every strictly feasible payload pair with $R_1>0$ is characterized exactly by
\begin{align}
R_1&<A,
\label{eq:adaptive-strict-A}\\
R_2&<B,
\label{eq:adaptive-strict-B}\\
R_1&<A+B-E,
\label{eq:adaptive-strict-E}\\
R_1+R_2&<A+B-C.
\label{eq:adaptive-strict-C}
\end{align}
The points with $R_1=0$ are obtained as limits of positive-$R_1$ points.  Consequently, the closure of the full projection is
\begin{equation}
\mathcal R_{\rm A}(P):=
\left\{ (R_1,R_2)\in\mathbb R_+^2 
\left |
\begin{array}{l}
	R_2\leq B,\\
	R_1\leq A,\\
	R_1\leq A+B-E,\\
	R_1+R_2\leq A+B-C
\end{array}
\right.
\right\}.
\label{eq:adaptive-fixed-P-region}
\end{equation}
\end{proposition}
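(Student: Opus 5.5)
We will prove the statement by eliminating the internal component rates by hand, in the same way as in the proof of Proposition~\ref{prop:nonadaptive-strict-projection}. Write $a:=R_{1,\mathrm{sec}}$ and $e:=R_{1,e}=R_{2,k}$, and introduce the aggregated variables
\begin{equation*}
x:=\widetilde R_{1,r}=e+R_{1,o},\qquad y:=R_2+\widetilde R_{2,r}=R_2+R_{2,o}.
\end{equation*}
Nonnegativity of $R_{1,o}$ and $R_{2,o}$ becomes $x\ge e$ and $y\ge R_2$. Using $a=R_1-e$, the system \eqref{eq:projection-rel1}--\eqref{eq:projection-user2} is equivalent to the existence of $e\in[0,R_1]$ and reals $x,y$ with
\begin{equation*}
\max\{C,e\}\le x<A-R_1+e,\quad \max\{D,R_2\}\le y<B-e,\quad x+y>E,
\end{equation*}
where the inequalities involving $C$ and $D$ are strict.

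\textit{Only-if direction.} This is a short chain of sums.
\begin{itemize}
\item From $R_1=a+e\le a+x<A$ we get \eqref{eq:adaptive-strict-A}.
\item From $R_2\le y<B-e\le B$ we get \eqref{eq:adaptive-strict-B}.
\item Adding $a+x<A$ and $y+e<B$ gives $R_1+x+y<A+B$. Combined with $x+y>E$ this yields \eqref{eq:adaptive-strict-E}.
\item From $a<A-x<A-C$ and $e+R_2\le e+y<B$ we get \eqref{eq:adaptive-strict-C}.
\end{itemize}

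\textit{If direction, assuming $R_1>0$.} We first choose the key rate $e$, and then $x$ and $y$. The $x$-interval is nonempty when $C<A-R_1+e$, i.e.\ $e>R_1-A+C$; the constraint $e<A-R_1+e$ holds by \eqref{eq:adaptive-strict-A}. The $y$-interval is nonempty when $e<\min\{B-D,\,B-R_2\}$. So we need some $e\in[0,R_1]$ with
\begin{equation*}
R_1-A+C<e<\min\{B-D,\,B-R_2\}.
\end{equation*}
We check that this open interval meets $[0,R_1]$:
\begin{itemize}
\item The upper end is positive, since $D<B$ (by $P\in\mathcal Q_{\rm F}$) and $R_2<B$.
\item The lower end is below $R_1$, since $C<A$.
\item Lower end below $B-R_2$ is exactly \eqref{eq:adaptive-strict-C}.
\item Lower end below $B-D$ means $R_1<A+B-C-D$.
\end{itemize}
The last item is the one point I expect to need care. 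It follows from \eqref{eq:adaptive-strict-E} together with $E\ge C+D$. That inequality holds because $V_1\perp V_2$ under $P\in\mathcal Q$, so
\begin{equation*}
I(Z;V_1V_2)=I(Z;V_1)+I(Z;V_2\mid V_1)\ge I(Z;V_1)+I(Z;V_2).
\end{equation*}
Since $R_1>0$, we can take $e$ strictly inside $(0,R_1]\cap(\cdot,\cdot)$, so that $a=R_1-e\ge0$.

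With $e$ fixed, both intervals are nonempty. The sum of their suprema is $(A-R_1+e)+(B-e)=A+B-R_1>E$ by \eqref{eq:adaptive-strict-E}. Hence taking $x$ and $y$ close enough to their upper ends gives $x+y>E$, and all strict inequalities hold. Setting $R_{1,o}=x-e$ and $R_{2,o}=y-R_2$ gives a feasible assignment.

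\textit{Closure.} For the case $R_1=0$ and the closure, note that the open polygon cut out by \eqref{eq:adaptive-strict-A}--\eqref{eq:adaptive-strict-C} inside $\{R_1>0,\ R_2\ge0\}$ has nonempty interior. Indeed, $A>C\ge0$, $B>D\ge0$, $A+B-E>0$ and $A+B-C>0$ for $P\in\mathcal Q_{\rm F}$. This set is convex. Every point with $R_1=0$ satisfying the closed inequalities is therefore a limit of points with small positive $R_1$ along a segment toward an interior point. Hence the closure of the full projection is the closed polygon $\mathcal R_{\rm A}(P)$ in \eqref{eq:adaptive-fixed-P-region}.
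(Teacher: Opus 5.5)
Your proof is correct and is essentially the paper's argument: you use the same aggregate variables $(x,y,e)$ with $R_{2,k}=e$, the same reformulated system, and the same fact $E=C+D+I(V_1;V_2\mid Z)\geq C+D$ to discard the redundant condition $R_1<A+B-C-D$. The only difference is the elimination order: the paper eliminates $e$ first and then $(x,y)$ through the sum $z=x+y$, whereas you fix $e$ first and note that the upper ends of the $x$- and $y$-intervals sum to $A+B-R_1$ for every $e$, so $x+y>E$ comes for free (and your convex-segment closure argument also covers the corner $(0,B)$ without the paper's separate back-off).
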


\begin{IEEEproof}
The proof is given in Appendix \ref{app:exact-adaptive-projection-proof}.
\end{IEEEproof}

\subsection{From the strict projection to an achievable region}

\begin{theorem}[Asymmetric adaptive achievable region]
\label{thm:adaptive-achievable-region}
For the asymmetric adaptive construction of
Section~\ref{sec:asymmetric-construction}, whose multiround reliability and
secrecy were established in Section~\ref{sec:embedded-key-propagation}, the
following payload-rate region is achievable under strong one-sided secrecy:
\begin{equation}
\mathcal R_{\rm A}
:=\overline{\operatorname{conv}}\!\left(
\bigcup_{P\in\mathcal Q_{\rm F}}\mathcal R_{\rm A}(P)
\right).
\label{eq:adaptive-overall-region}
\end{equation}
\end{theorem}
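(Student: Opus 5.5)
The argument parallels the proof of Theorem~\ref{thm:nonadaptive-region}, with Corollary~\ref{cor:growing-rounds-asymmetric} playing the role of Proposition~\ref{prop:nonadaptive-one-block}.

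\textit{Step 1: interior points of $\mathcal R_{\rm A}(P)$ with $R_1>0$.} Fix $P\in\mathcal Q_{\rm F}$ and a point $(R_1,R_2)$ with $R_1>0$ satisfying the four strict inequalities \eqref{eq:adaptive-strict-A}--\eqref{eq:adaptive-strict-C}. Proposition~\ref{prop:exact-adaptive-projection} supplies nonnegative component rates that satisfy the identities \eqref{eq:projection-user1}--\eqref{eq:projection-user2}, including the key-size condition $R_{1,e}=R_{2,k}$. These rates also satisfy the five strict Shannon inequalities \eqref{eq:projection-rel1}--\eqref{eq:projection-sec12}. Since finitely many strict inequalities hold, there is a common positive slack. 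By the continuity \eqref{eq:model-continuity}, a sufficiently small fixed $s\in(0,1]$ keeps all five finite-$s$ inequalities \eqref{eq:asym-finite-s-conditions} with positive margins.

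Corollary~\ref{cor:growing-rounds-asymmetric} then applies with $T_N=\lfloor N^{1/3}\rfloor$ and fixed-input padding. It yields, for every sufficiently large total blocklength $N$, a deterministic adaptive code with the following properties:
\begin{itemize}
\item decoding error and leakage $I(W_{T_N};Z^N)$ both vanish;
\item effective payload rates converge to $(R_1,R_2)$, because the initialization factor $(T_N-1)/T_N\to1$ and $N'_N/N\to1$.
\end{itemize}
Here $W_{T_N}$ is exactly User~1's full message $M_1$, so this is strong one-sided secrecy in the sense of \eqref{eq:model-rate-limit}--\eqref{eq:model-leakage-limit}. The leakage bound is stated for $W_T$ without conditioning on $M_2$, which matches the criterion \eqref{eq:model-strong-one-sided}.

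\textit{Step 2: the rest of $\mathcal R_{\rm A}(P)$ and the union.} Points with $R_1=0$ and boundary points are limits of strictly interior positive-$R_1$ points. For $R_1=0$ this is stated in Proposition~\ref{prop:exact-adaptive-projection}. For other boundary points, pull back slightly toward an interior point; this works because the strict region is nonempty for $P\in\mathcal Q_{\rm F}$. I then use the same diagonal threshold argument as in Step~2 of Theorem~\ref{thm:nonadaptive-region}. Pick strictly increasing thresholds $N_k$ beyond which the $k$th approximating construction has rate error, decoding error, and leakage at most $1/k$. Use that construction for $N_k\le N<N_{k+1}$. This gives an all-blocklength sequence converging to the target, and taking the union over $P$ is immediate.

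\textit{Step 3: convex hull and closure.} Time-share finitely many adaptive constituent codes over disjoint segments under a deterministic public schedule. Messages, keys, randomizers, and codebooks are independent across segments, and each segment's encoder is causal within that segment, so the concatenation is again an adaptive code of the form \eqref{eq:model-adaptive-encoder}. The errors add by the union bound.

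For leakage, I need the analogue of \eqref{eq:timesharing-leakage-sum}, and I expect this to be the only step requiring care. The segment processes $(M^{(\ell)},Z^{(\ell)})$ are mutually independent for fixed deterministic codes, since the channel is memoryless and no variable crosses segments. In particular, adaptation never uses outputs from another segment. Hence
\[
I(M^{[L]};Z^{[L]})=\sum_\ell I(M^{(\ell)};Z^{(\ell)}).
\]
Finally, the outer closure follows from the same diagonal threshold construction as in Step~3 of Theorem~\ref{thm:nonadaptive-region}, which establishes achievability of \eqref{eq:adaptive-overall-region}.
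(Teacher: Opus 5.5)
Your proposal is correct and follows the paper's own proof essentially step for step: strict component reconstruction via the exact projection, a small fixed $s$ preserving the five margins, the growing-round construction with $T_N=\lfloor N^{1/3}\rfloor$ and fixed-input padding, diagonal approximation of boundary points, and segment-wise time sharing with each encoder restricted to its current-segment history. The differences are cosmetic. The paper re-derives the bounds on $a_{n,T}$, $\delta_n(P,s)$, $\bar p_{n,T}$ and $\eta_{n,T}$ inside the proof instead of citing Corollary~\ref{cor:growing-rounds-asymmetric}, and it bounds the time-shared leakage by the sum via the chain rule, whereas your full independence of the segment pairs gives equality.
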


\begin{IEEEproof}
The proof consists of the following 4 steps. 

\textit{Step 1: Fixed-$P$ exact projection and component reconstruction.}
Fix $P\in\mathcal Q_{\rm F}$.  If $(R_1,R_2)$ is a positive-$R_1$
interior point of $\mathcal R_{\rm A}(P)$, then all four inequalities in
Proposition~\ref{prop:exact-adaptive-projection} hold with positive slack.
Additional details are given in Appendix \ref{app:adaptive-achievable-region-proof-step1}.

\textit{Step 2: Finite-$s$ realization and fixed-round bounds.}
The structural conditions required for leakage propagation impose no
additional rate inequalities at this stage: they were already verified in
Section~\ref{sec:embedded-key-propagation} from the construction itself.
Conditions (A4a)--(A4b) are supplied by the history-conditioned one-round
corollary and its initialization specialization.  Hence, once the reconstructed component rates satisfy
the five strict finite-$s$ inequalities, the fixed-round and growing-round
bounds of Section~\ref{sec:embedded-key-propagation} apply to the concrete
code.  For the reconstructed component rates, define the five positive
Shannon margins
\begin{align}
\mu_1&:=A-(\widetilde R_1+\widetilde R_{1,r}),&
\mu_2&:=B-(\widetilde R_2+\widetilde R_{2,r}),\notag\\
\nu_1&:=\widetilde R_{1,r}-C,&
\nu_2&:=\widetilde R_{2,r}+R_2-D,\notag\\
\nu_{12}&:=\widetilde R_{1,r}+\widetilde R_{2,r}+R_2-E.
\label{eq:adaptive-Shannon-margins}
\end{align}
By \eqref{eq:model-continuity}, there is a sufficiently small fixed
$s\in(0,1]$ for which the corresponding finite-$s$ margins remain
positive.  Equivalently, all five inequalities
\eqref{eq:asym-finite-s-conditions} hold.  Let
\begin{align}
\alpha:=s\min\{&A_s-(\widetilde R_1+\widetilde R_{1,r}),
B_s-(\widetilde R_2+\widetilde R_{2,r})\}>0,
\label{eq:adaptive-alpha}\\
\beta:=s\min\{&\widetilde R_{1,r}-C_s,
\widetilde R_{2,r}+R_2-D_s,\notag\\
&\widetilde R_{1,r}+\widetilde R_{2,r}+R_2-E_s\}>0.
\label{eq:adaptive-beta}
\end{align}
The one-round estimates then imply, for constants independent of $n$ and
$T$,
\begin{align}
a_{n,T}&\leq 2T e^{-n\alpha},
\label{eq:adaptive-anT-explicit}\\
\delta_n(P,s)&\leq \frac{3}{s}e^{-n\beta}.
\label{eq:adaptive-delta-explicit}
\end{align}
For fixed $T$, Corollary~\ref{cor:fixed-round-asymmetric} therefore gives
vanishing implemented-process decoding error and
\begin{equation}
I(W_T;Z^{1:T})\leq \frac{6T}{s} e^{-n\beta}+2\eta_{n,T}.
\label{eq:adaptive-fixed-round-expanded}
\end{equation}
The first term is the ideal-process leakage propagated over the rounds;
$\eta_{n,T}$ is the ideal-to-implemented correction caused by use of an
incorrectly decoded preceding key.  The latter is controlled by the same
complete-index decoding bound because
$\mathcal E_{\rm key}\subseteq\mathcal E_{\rm nom}$.

\textit{Step 3: Growing rounds, transfer correction, padding, and rates.}
For an arbitrary total blocklength $N$, set
\begin{equation}
T_N:=\lfloor N^{1/3}\rfloor,\qquad
n_N:=\lfloor N/T_N\rfloor,
\qquad N'_N:=n_NT_N.
\label{eq:adaptive-growing-parameters}
\end{equation}
Then
\begin{equation}
T_N\to\infty,\quad n_N\to\infty,\quad
\frac{\log T_N}{n_N}\to0,
\quad\frac{N'_N}{N}\to1.
\label{eq:adaptive-growing-limits}
\end{equation}
By \eqref{eq:adaptive-anT-explicit} and the inclusion of key-use failure
in failure to decode a complete index,
\begin{equation}
\bar p_{n_N,T_N}\leq 2T_Ne^{-n_N\alpha}\longrightarrow0.
\label{eq:adaptive-pbar-growing}
\end{equation}
Moreover, the accumulated payload alphabet satisfies
\begin{equation}
\log|\mathcal W_{T_N}|
=n_N(T_N-1)R_1+o(n_NT_N)
\leq c n_NT_N
\label{eq:adaptive-W-size}
\end{equation}
for some fixed $c$.  Hence the continuity correction obeys
\begin{align}
\eta_{n_N,T_N}
&=2\bar p_{n_N,T_N}\log|\mathcal W_{T_N}|
 +2h_2(\min\{\bar p_{n_N,T_N},1/2\})\notag\\
&=O\!\left(n_NT_N^2e^{-n_N\alpha}\right)
 +2h_2(\min\{2T_Ne^{-n_N\alpha},1/2\})\notag\\
 &\longrightarrow 0.
\label{eq:adaptive-eta-growing}
\end{align}
Also,
\begin{equation}
T_N\delta_{n_N}(P,s)
\leq \frac{3T_N}{s}e^{-n_N\beta}\longrightarrow0,
\qquad
a_{n_N,T_N}\longrightarrow0,
\label{eq:adaptive-growing-error-leakage}
\end{equation}
where the limits follow from $\log T_N/n_N\to0$.  Thus reliability,
ideal leakage, and the ideal-to-implemented correction all vanish for the
active construction of length $N'_N$.

Append fixed channel inputs for the remaining $N-N'_N$ uses.  The decoders
ignore the padding outputs.  Channel memorylessness makes those outputs
conditionally independent of the messages given the active outputs, so
padding changes neither the active decoding decision nor the leakage.  The
message sizes remain $e^{n_N(T_N-1)R_i}$ up to integer rounding, and the
rates measured per total blocklength satisfy
\begin{equation}
\frac{n_N(T_N-1)}{N}R_i
=\frac{N'_N}{N}\frac{T_N-1}{T_N}R_i
\longrightarrow R_i,
\qquad i=1,2.
\label{eq:adaptive-effective-rates}
\end{equation}
The construction therefore meets the formal achievability definition for
every sufficiently large total blocklength.  Because both $n_N$ and $T_N$
vary, we make no claim of a positive error or leakage exponent normalized
by the total blocklength $N$.

\textit{Step 4: Fixed-$P$ boundary, time sharing, and final closure.}
For a boundary point of $\mathcal R_{\rm A}(P)$, choose nonnegative
interior rate pairs converging to it.  For the $k$th pair, the order of
choices is: rate backoff, strict component reconstruction, a small fixed
$s_k$ preserving all five margins, and finally growing-round block
parameters large enough that error, leakage, initialization loss, and
padding loss are each at most $1/k$.  A diagonal sequence proves
achievability of the entire closed fixed-$P$ region.

The distribution $P$ may be chosen arbitrarily from
$\mathcal Q_{\rm F}$.  Finite time sharing remains within the adaptive code
class: concatenate finitely many adaptive constituent blocks, reset the
local encoder state at each deterministic boundary, and let each encoder
use only outputs from its current constituent block.  The overall decoder
applies the constituent decoders separately.  The constituent messages, local randomness, codebooks, and channel blocks
are mutually independent across schedule segments, and the public schedule
is deterministic.  The union bound controls the error.  Applying the chain
rule segment by segment, with each current message independent of all other
segments and their observations, bounds total leakage by the sum of the
constituent leakages, exactly as in
\eqref{eq:timesharing-leakage-sum}; the state reset and the restriction to
current-segment history ensure the conditional independence used in that
calculation.  This realizes the convex hull.  Finally, choose a sequence of
convex-hull points approaching any point in the outer closure and select
the constituent total blocklengths diagonally so that the rate error,
decoding error, and leakage all vanish.  This proves
\eqref{eq:adaptive-overall-region}.
\end{IEEEproof}

\section{Comparison of the three inner bounds}
\label{sec:three-inner-bound-comparison}

The preceding sections derived two strong-secrecy inner bounds: the
non-adaptive region $\mathcal R_{\rm N}$ and the asymmetric adaptive region
$\mathcal R_{\rm A}$.  This section places them on the same logical line as
the earlier weak one-sided inner bound.  The comparison has two distinct
levels.  The weak region is stated for product input distributions with
$V_i=X_i$, whereas the new strong regions permit general finite auxiliaries
and stochastic prefixes.  Accordingly, a fixed-product-input comparison is
made first, and only then are the general fixed-distribution and overall
comparisons stated.

\subsection{Weak versus strong non-adaptive inner bounds}

Let
\begin{equation}
P_X=P_{X_1}P_{X_2}
\label{eq:comparison-product-input}
\end{equation}
be a product input distribution, set $V_i=X_i$, and evaluate the common
quantities of \eqref{eq:model-Shannon} under this specialization:
\begin{align}
\begin{split}\label{eq:comparison-product-quantities}
	A_X:=&I(Y_2;X_1\mid X_2),\\ 
	B_X:=&I(Y_1;X_2\mid X_1), \\
	C_X:=&I(Z;X_1),\\
	D_X:=&I(Z;X_2),\\
	E_X:=&I(Z;X_1,X_2).
\end{split}
\end{align}
Thus $(A_X,B_X,C_X,D_X,E_X)$ is exactly
$(A(P_X),B(P_X),\allowbreak C(P_X),D(P_X),E(P_X))$ with identity prefixes.

For this product input, the earlier weak one-sided inner bound \cite[Theorem 1]{QCHT2016} can be rewritten as
\begin{align}
\mathcal R_{\rm W}(P_X):=
\Bigl\{(R_1,R_2)&\in\mathbb R_+^2:
R_2\leq B_X,\quad R_1\leq A_X-C_X,\notag\\
& R_1\leq A_X+R_2-E_X\Bigr\}.
\label{eq:comparison-weak-region}
\end{align}
The corresponding specialization of the strong non-adaptive region is
\begin{align}
\mathcal R_{\rm N}(P_X):=
\Bigl\{(R_1,R_2)&\in\mathbb R_+^2:
R_2\leq B_X,\quad R_1\leq A_X-C_X,\notag\\
& R_1\leq A_X+B_X-E_X\Bigr\},
\label{eq:comparison-strong-product-region}
\end{align}
provided the strict feasibility conditions
\begin{equation}
C_X<A_X,\qquad D_X<B_X,
\qquad E_X<A_X+B_X
\label{eq:comparison-product-feasibility}
\end{equation}
are satisfied.  These are the feasibility conditions required by the
strong-secrecy construction used here.  They are imposed explicitly for this
comparison and are not inferred merely from nonnegativity of a rate pair in
$\mathcal R_{\rm W}(P_X)$.

\begin{proposition}[Weak-to-strong product-input inclusion]
\label{prop:weak-strong-product-inclusion}
For every product input $P_X$ satisfying
\eqref{eq:comparison-product-feasibility},
\begin{equation}
\mathcal R_{\rm W}(P_X)
\subseteq\mathcal R_{\rm N}(P_X).
\label{eq:comparison-fixed-product-inclusion}
\end{equation}
Consequently, after taking unions over strictly feasible product inputs,
finite convex hulls, and closure, the resulting weak construction-specific
inner bound is contained in the corresponding product-input specialization
of the strong non-adaptive inner bound.
\end{proposition}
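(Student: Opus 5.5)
The fixed-input inclusion is a direct comparison of the defining inequalities in \eqref{eq:comparison-weak-region} and \eqref{eq:comparison-strong-product-region}. I would take an arbitrary $(R_1,R_2)\in\mathcal R_{\rm W}(P_X)$ and check the three constraints of $\mathcal R_{\rm N}(P_X)$ one at a time. Two of them appear verbatim in both regions: $R_2\le B_X$ and $R_1\le A_X-C_X$. The third follows by chaining the weak constraint with the first one:
\begin{equation*}
R_1\le A_X+R_2-E_X\le A_X+B_X-E_X .
\end{equation*}
Nonnegativity of the pair is common to both definitions. The strict feasibility conditions \eqref{eq:comparison-product-feasibility} enter only to make $\mathcal R_{\rm N}(P_X)$ a legitimate fixed-distribution region of Theorem~\ref{thm:nonadaptive-region}. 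Concretely, the product law with identity prefixes lies in $\mathcal Q_{\rm F}$, so this rectangle is one of the sets in the union \eqref{eq:nonadaptive-overall}.

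For the consequence, I would argue by monotonicity of the set operations. Let $\mathcal P_{\rm F}$ denote the strictly feasible product inputs. The fixed-input inclusion gives
\begin{equation*}
\bigcup_{P_X\in\mathcal P_{\rm F}}\mathcal R_{\rm W}(P_X)\subseteq\bigcup_{P_X\in\mathcal P_{\rm F}}\mathcal R_{\rm N}(P_X).
\end{equation*}
Taking the convex hull and then the closure each preserves inclusion, so applying both to the two unions gives the claimed containment. The right-hand side is the product-input specialization of $\mathcal R_{\rm N}$, which sits inside the full region \eqref{eq:nonadaptive-overall}.

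The one point needing care is bookkeeping rather than mathematics. The weak region must be restricted to the same strictly feasible product inputs; it must not be taken over all product inputs. Otherwise a weak region for an infeasible $P_X$ would have no strong counterpart in the union. The statement already imposes \eqref{eq:comparison-product-feasibility} explicitly, so the plan is to state this restriction and not attempt to derive feasibility from nonemptiness of $\mathcal R_{\rm W}(P_X)$. I would also note that the inclusion can be strict whenever $R_2<B_X$ makes the weak joint constraint the binding one. This remark is not needed for the proof.
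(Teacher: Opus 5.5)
Your proof is correct and is essentially the paper's argument: the $R_2$ constraint and the individual $R_1$ constraint carry over verbatim, the joint constraint follows from $R_1\le A_X+R_2-E_X\le A_X+B_X-E_X$ using $R_2\le B_X$, and the consequence follows from the fact that union, convex hull, and closure preserve inclusion. Restricting the weak union to the same strictly feasible product inputs also matches the paper's explicit feasibility qualifier.
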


\begin{IEEEproof}
The $R_2$ and individual $R_1$ inequalities are identical.  If
$(R_1,R_2)\in\mathcal R_{\rm W}(P_X)$, then
\begin{equation}
R_1\leq A_X+R_2-E_X
\leq A_X+B_X-E_X,
\end{equation}
which is the remaining strong non-adaptive inequality.  The operations of
union, convexification, and closure preserve inclusion.
\end{IEEEproof}

The feasibility qualifier is essential.   
The closed expression in
\eqref{eq:comparison-strong-product-region} is invoked only when the
underlying strict randomization-rate system is nonempty.  Closure of a
projected formula does not create operationally achievable points for a
fixed input whose strict auxiliary-rate system is empty.

\subsection{Strong non-adaptive versus asymmetric adaptive bounds}

We now return to a general auxiliary distribution $P\in\mathcal Q_{\rm F}$,
which may contain nontrivial $V_i$ and stochastic prefixes.  

Recall the fixed-distribution non-adaptive region
$\mathcal R_{\rm N}(P)$ defined in \eqref{eq:nonadaptive-region} and the fixed-distribution adaptive region $\mathcal R_{\rm A}(P)$  defined in \eqref{eq:adaptive-fixed-P-region}.  

With the
abbreviations as given in \eqref{eq:geometry-abbreviations}, $\mathcal R_{\rm N}(P)$ is a rectangle whose upper boundary is
$\min\{A-C,A+B-E\}$.  
Define
\begin{align}
	r_{\rm N}^{\max}:=&\min\{A-C,A+B-E\}\notag\\
	=&A-\max\{C,E-B\},
	\label{eq:geometry-N}\\
	r_{\rm A}^{\max}:=&\min\{A,A+B-E\}=A-\max\{0,E-B\},
	\label{eq:geometry-M}\\
	s_{\max}:=&A+B-C.
	\label{eq:geometry-S}
\end{align}
Then
\begin{align}
	\mathcal R_{\rm N}(P)
	=\{(R_1,R_2)\in\mathbb R_+^2:&R_2\leq B,\ R_1\leq r_{\rm N}^{\max}\},
	\label{eq:geometry-RN-compact}\\
	\mathcal R_{\rm A}(P)
	=\{(R_1,R_2)\in\mathbb R_+^2:&R_2\leq B,\ R_1\leq r_{\rm A}^{\max},\notag\\
	&R_1+R_2\leq s_{\max}\}.
	\label{eq:geometry-RA-compact}
\end{align}

The quantities $r_{\rm N}^{\max},r_{\rm A}^{\max},s_{\max}$ have a direct geometric meaning. As illustrated in Figure~\ref{fig:fixed-P-geometry}, the non-adaptive
region has height $r_{\rm N}^{\max}$ and width $B$.  The adaptive region has initial height
$r_{\rm A}^{\max}$, is capped by the sum-rate half-space $R_1+R_2\leq s_{\max}$, and has the same
right-hand endpoint $(B,r_{\rm N}^{\max})$ whenever the inclusion is strict.  The vertical
difference $G=r_{\rm A}^{\max}-r_{\rm N}^{\max}$ is the maximum User~1 rate gain when $R_2$ is fixed.  %Figure~\ref{fig:fixed-P-geometry} summarizes this geometry.

\begin{figure}[t]
	\centering
	\begin{tikzpicture}[x=0.88cm,y=0.82cm,>=stealth,font=\footnotesize]
		\coordinate (O) at (0,0);
		\coordinate (BN) at (6.2,2.6);
		\coordinate (NM) at (0,2.6);
		\coordinate (MM) at (0,4.4);
		\coordinate (K) at (4.4,4.4);
		\coordinate (B0) at (6.2,0);
		\coordinate (SL) at (5.34,3.46);
		
		\fill[blue!10] (O)--(B0)--(BN)--(NM)--cycle;
		\fill[orange!17] (NM)--(MM)--(K)--(BN)--cycle;
		\draw[blue!70!black,very thick] (O)--(B0)--(BN)--(NM)--cycle;
		\draw[orange!80!black,very thick] (O)--(B0)--(BN)--(K)--(MM)--cycle;
		
		\draw[->] (0,0)--(7.0,0) node[below] {$R_2$};
		\draw[->] (0,0)--(0,5.15) node[left] {$R_1$};
		\draw[densely dashed,gray] (NM)--(BN);
		\draw[densely dashed,gray] (MM)--(K);
		\draw[densely dashed,gray] (K)--(4.4,0);
		
		\node[left] at (NM) {$r_{\rm N}^{\max}$};
		\node[left] at (MM) {$r_{\rm A}^{\max}$};
		\node[below] at (B0) {$B$};
		\node[below] at (4.4,0) {$B-G$};
		\node[above,rotate=-45] at (SL) {$R_1+R_2=s_{\max}$};
		\node[blue!70!black] at (2.55,1.25) {$\mathcal R_{\rm N}(P)$};
		\node[orange!80!black] at (2.35,3.45) {additional region};
		
		\draw[<->,thick] (-0.42,2.9)--(-0.42,4.2)
		node[midway,left] {$G$};
		\fill (BN) circle (1.4pt);
		\node[right] at (BN) {$(B,r_{\rm N}^{\max})$};
	\end{tikzpicture}
	\caption{Fixed-distribution geometry under \eqref{eq:strictness-condition}.
		The non-adaptive region is the blue rectangle, and the adaptive region is
		the enclosing polygon.  The maximum User~1 rate gain at fixed $R_2$ is $G$.  This
		gain remains constant up to $R_2=B-G$ and then decreases to zero at
		$R_2=B$.  The figure is schematic.}
	\label{fig:fixed-P-geometry}
\end{figure}

\begin{theorem}[Fixed-distribution strictness criterion]
	\label{thm:fixed-P-gain}
	For every $P\in\mathcal Q_{\rm F}$,
	\begin{equation}\label{eq:comparison-strong-adaptive-inclusion}
		\mathcal R_{\rm N}(P)\subseteq\mathcal R_{\rm A}(P).
	\end{equation}
	This inclusion is strict if and only if
	\begin{equation}
		C(P)>0\quad\text{and}\quad E(P)<B(P)+C(P).
		\label{eq:strictness-condition}
	\end{equation}
\end{theorem}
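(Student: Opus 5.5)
The plan is to work directly with the compact descriptions \eqref{eq:geometry-RN-compact} and \eqref{eq:geometry-RA-compact}. Both regions are polytopes in $\mathbb R_+^2$ sharing the constraint $R_2\le B$. So inclusion reduces to two checks on each point of the rectangle: $R_1\le r_{\rm A}^{\max}$ and $R_1+R_2\le s_{\max}$.

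\emph{Step 1: inclusion.} The first check follows from $C\ge 0$, since then $\max\{C,E-B\}\ge\max\{0,E-B\}$, which gives $r_{\rm N}^{\max}\le r_{\rm A}^{\max}$. For the sum-rate check, every rectangle point satisfies
\begin{equation*}
R_1+R_2\le r_{\rm N}^{\max}+B\le (A-C)+B=s_{\max}.
\end{equation*}
Hence \eqref{eq:comparison-strong-adaptive-inclusion} holds. Nonnegativity of $C$ and $E$ is automatic, since these are mutual informations.

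\emph{Step 2: reduction of strictness to a corner test.} Both sets are closed convex polygons, and $\mathcal R_{\rm N}(P)$ is exactly the part of $\mathcal R_{\rm A}(P)$ with $R_1\le r_{\rm N}^{\max}$. Strictness is therefore equivalent to the existence of a point of $\mathcal R_{\rm A}(P)$ with $R_1>r_{\rm N}^{\max}$. Among the points with $R_1$ slightly above the rectangle, the one most likely to lie in $\mathcal R_{\rm A}(P)$ has $R_2=0$, because the sum constraint is loosest there. So strictness holds if and only if $r_{\rm N}^{\max}<\min\{r_{\rm A}^{\max},s_{\max}\}$. The comparison with $s_{\max}$ is automatic: $r_{\rm N}^{\max}\le A-C\le A+B-C$, with strictness unless $B=0$, and $B>D\ge0$ by $P\in\mathcal Q_{\rm F}$. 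One point needs care here, namely that $r_{\rm N}^{\max}\ge 0$, so that the rectangle is nonempty and the comparison is meaningful. This follows from $C<A$ and $E<A+B$. The criterion therefore reduces to $r_{\rm N}^{\max}<r_{\rm A}^{\max}$, that is,
\begin{equation*}
\max\{0,E-B\}<\max\{C,E-B\}.
\end{equation*}

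\emph{Step 3: the max inequality.} This strict inequality holds exactly when $C>\max\{0,E-B\}$, which is $C>0$ together with $E<B+C$. That is \eqref{eq:strictness-condition}. For completeness the two directions can be spelled out. If \eqref{eq:strictness-condition} holds, the right side equals $C$ and the left side is strictly smaller. Conversely, if $C\le\max\{0,E-B\}$, the right side equals the left side.

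The main obstacle is Step 2: the polygon argument has to be written so that the degenerate cases do not break it, namely $r_{\rm N}^{\max}=0$ and the possibility that the sum-rate cap cuts below $r_{\rm A}^{\max}$ at $R_2=0$. I would handle both by exhibiting an explicit witness $(R_1,0)$ with $r_{\rm N}^{\max}<R_1\le\min\{r_{\rm A}^{\max},s_{\max}\}$, using $B>0$ from strict feasibility.
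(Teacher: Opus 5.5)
Your proposal is correct and follows essentially the same route as the paper. Inclusion comes from $r_{\rm N}^{\max}\le r_{\rm A}^{\max}$ together with the sum-rate bound $r_{\rm N}^{\max}+B\le s_{\max}$, and strictness is reduced to positivity of $\max\{C,E-B\}-\max\{0,E-B\}$. Your explicit witness $(R_1,0)$, which uses $s_{\max}>r_{\rm N}^{\max}$ from $B>D\ge 0$, makes precise the paper's brief remark that a positive difference yields adaptive points with $R_1>r_{\rm N}^{\max}$.
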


\begin{IEEEproof}
	The proof is given in Appendix \ref{app:fixed-P-gain-proof}.
\end{IEEEproof}

The theorem is a pointwise comparison for a common auxiliary distribution
and common stochastic prefixes.  It identifies the local mechanism by which
adaptation enlarges this particular pair of construction-specific regions.
By itself, it does not imply strict inclusion after taking unions,
time sharing, and closure over all $P$.  

\begin{proposition}[User~1 rate gain at fixed $R_2$]
	\label{prop:fixed-P-gain-profile}
	Suppose \eqref{eq:strictness-condition} holds.  Define
	\begin{align}
		G(P):=&r_{\rm A}^{\max}-r_{\rm N}^{\max}\notag\\
		=&\min\{C(P),B(P)+C(P)-E(P)\}>0.
		\label{eq:fixed-P-max-gain}
	\end{align}
	For $r\in[0,B(P)]$, let $U_{\rm N}(r)$ and $U_{\rm A}(r)$ denote the largest
	$R_1$ allowed by the two fixed-distribution regions at $R_2=r$.  Then
	\begin{align}
		U_{\rm N}(r)&=r_{\rm N}^{\max},\\
		U_{\rm A}(r)&=\min\{r_{\rm A}^{\max},s_{\max}-r\},
	\end{align}
	and
	\begin{equation}
		U_{\rm A}(r)-U_{\rm N}(r)
		=\min\{G(P),B(P)-r\}.
		\label{eq:gain-profile}
	\end{equation}
	The two regions have the same maximum sum-rate, namely
	\begin{equation}
		A(P)+B(P)-\max\{C(P),E(P)-B(P)\}.
		\label{eq:common-sum-rate-journal}
	\end{equation}
\end{proposition}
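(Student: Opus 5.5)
The proof is elementary algebra on the compact descriptions \eqref{eq:geometry-RN-compact} and \eqref{eq:geometry-RA-compact}. I will first derive the closed form for $G(P)$, then read off the two upper envelopes, then compute their difference and the sum-rate.

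\emph{Closed form and positivity of $G(P)$.} From \eqref{eq:geometry-N}--\eqref{eq:geometry-M},
\[
G(P)=\max\{C,E-B\}-\max\{0,E-B\}.
\]
I will split into cases.
\begin{itemize}
\item If $E-B\le 0$, then $G=\max\{C,E-B\}=C$, and $C\le B+C-E$.
\item If $0<E-B\le C$, then $G=C-(E-B)=B+C-E\le C$.
\item If $E-B>C$, then $G=0$. This case is excluded by \eqref{eq:strictness-condition}.
\end{itemize}
In every case $G=\min\{C,B+C-E\}$ whenever $E-B\le C$. Under \eqref{eq:strictness-condition} both entries of the minimum are positive, so $G(P)>0$.

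\emph{Upper envelopes.} At $R_2=r\in[0,B]$, the rectangle imposes only $R_1\le r_{\rm N}^{\max}$, so $U_{\rm N}(r)=r_{\rm N}^{\max}$. The polygon imposes $R_1\le r_{\rm A}^{\max}$ and $R_1\le s_{\max}-r$, so $U_{\rm A}(r)=\min\{r_{\rm A}^{\max},s_{\max}-r\}$.

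For these to be genuine maxima I need the admissible $R_1$-set to be nonempty, i.e.\ $U_{\rm A}(r)\ge 0$. It suffices that $U_{\rm A}(r)\ge U_{\rm N}(r)=r_{\rm N}^{\max}\ge0$, which holds by Theorem~\ref{thm:fixed-P-gain} and the fact that $r_{\rm N}^{\max}>0$ on $\mathcal Q_{\rm F}$.

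\emph{Gain profile.} The key identity is
\[
s_{\max}-r_{\rm N}^{\max}=(A+B-C)-\bigl(A-\max\{C,E-B\}\bigr)=B-C+\max\{C,E-B\}.
\]
Under \eqref{eq:strictness-condition} we have $E-B<C$, so $\max\{C,E-B\}=C$ and $s_{\max}-r_{\rm N}^{\max}=B$. Subtracting,
\[
U_{\rm A}(r)-U_{\rm N}(r)=\min\{G,\;s_{\max}-r-r_{\rm N}^{\max}\}=\min\{G,B-r\},
\]
which is \eqref{eq:gain-profile}. This also explains the geometry of Figure~\ref{fig:fixed-P-geometry}: the two regions share the corner $(B,r_{\rm N}^{\max})$.

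\emph{Common sum-rate.} The rectangle maximizes $R_1+R_2$ at its corner, giving $B+r_{\rm N}^{\max}=A+B-\max\{C,E-B\}$. For the polygon, $U_{\rm A}(r)+r=\min\{r_{\rm A}^{\max}+r,\;s_{\max}\}$ is nondecreasing in $r$, so its maximum is attained at $r=B$. At $r=B$ it equals $\min\{r_{\rm A}^{\max}+B,\,s_{\max}\}=B+U_{\rm A}(B)=B+r_{\rm N}^{\max}$, using $U_{\rm A}(B)=r_{\rm N}^{\max}$ from the gain profile. Both regions therefore have maximum sum-rate \eqref{eq:common-sum-rate-journal}.

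The only step needing care is the identity $s_{\max}-r_{\rm N}^{\max}=B$. It depends on using the strictness hypothesis to resolve $\max\{C,E-B\}=C$; the rest is routine case checking.
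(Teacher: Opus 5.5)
Your proof is correct and follows essentially the same route as the paper: resolve $\max\{C,E-B\}=C$ under \eqref{eq:strictness-condition}, use $s_{\max}-r_{\rm N}^{\max}=B$ to subtract the two envelopes, and compare sum-rates at the shared corner $(B,r_{\rm N}^{\max})$. Your additional checks simply make explicit what the paper leaves implicit. These are the case split for $G(P)$, nonemptiness of each fiber $R_2=r$, and monotonicity of $U_{\rm A}(r)+r$ in $r$.
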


\begin{IEEEproof}
	Under \eqref{eq:strictness-condition}, $r_{\rm N}^{\max}=A-C$ and
	\begin{align}
		r_{\rm A}^{\max}-r_{\rm N}^{\max}
		&=C-\max\{0,E-B\}\\
		&=\min\{C,B+C-E\},
	\end{align}
	which gives \eqref{eq:fixed-P-max-gain}.  For fixed $r$, subtracting $r_{\rm N}^{\max}$
	from the adaptive upper boundary gives
	\begin{align}
		U_{\rm A}(r)-r_{\rm N}^{\max}
		&=\min\{r_{\rm A}^{\max}-r_{\rm N}^{\max},s_{\max}-r-r_{\rm N}^{\max}\}\\
		&=\min\{G(P),B-r\},
	\end{align}
	because $s_{\max}-r_{\rm N}^{\max}=B$.  This proves \eqref{eq:gain-profile}.  The non-adaptive
	upper-right corner has sum-rate $r_{\rm N}^{\max}+B$, and maximizing over the adaptive
	polygon gives the same value.  Substitution of $r_{\rm N}^{\max}$ yields
	\eqref{eq:common-sum-rate-journal}.
\end{IEEEproof}

\begin{remark}[Interpretation and scope]
	At a fixed $P$, the additional region is controlled by two margins.  The
	quantity $C(P)$ is the penalty removed from User~1's non-adaptive individual
	bound, while $B(P)+C(P)-E(P)$ is the remaining sum-rate margin.  Their minimum
	is the maximum User~1 rate gain at a fixed User~2 rate.  This description is useful for
	understanding the geometry and for locating candidate distributions.  Nevertheless,  the
	fixed-$P$ gain should not be interpreted as an operational advantage over the
	optimized non-adaptive inner bound.  Section~\ref{sec:binary-family}
	provides an example where the adaptive gain survives optimization and time sharing.
\end{remark}

The domains in the two subsections must not be conflated.  The relation
\eqref{eq:comparison-fixed-product-inclusion} compares two regions at the
same product input with identity auxiliaries.  Equation
\eqref{eq:comparison-strong-adaptive-inclusion} compares the two new strong
regions at an arbitrary common $P\in\mathcal Q_{\rm F}$.  In particular,
the latter permits distributions not represented in the former
product-input family.

\subsection{Fixed-distribution and overall comparisons}

Recall the overall construction-specific regions $\mathcal R_{\rm N}$ and $\mathcal R_{\rm A}$, defined in \eqref{eq:nonadaptive-overall} and \eqref{eq:adaptive-overall-region}, respectively.
Because the fixed-$P$ inclusion holds for every feasible $P$ as shown in \eqref{eq:comparison-strong-adaptive-inclusion}, we have
\begin{equation}
\mathcal R_{\rm N}\subseteq\mathcal R_{\rm A}.
\label{eq:comparison-overall-inclusion}
\end{equation}
This implication uses only monotonicity of union, convex hull, and closure.
Strictness is different.  An adaptive point added at one distribution can
be covered by a non-adaptive region generated by another distribution or
by time sharing.  Therefore the fixed-$P$ condition
\eqref{eq:strictness-condition} does not by itself prove strict overall
inclusion.

Section~\ref{sec:binary-family} next proves overall strictness by
separating the two closed convex regions on the $R_1$ axis.  Equation~\eqref{eq:comparison-overall-inclusion} compares the two inner
bounds constructed in this paper.  The binary-family theorem is logically
stronger in a different direction: it supplies a converse for the full
non-adaptive stochastic code class defined in
Section~\ref{sec:model-framework}.

\begin{remark}[Overall optimization and auxiliary alphabets]
Because $\mathcal R_{\rm N}$ and $\mathcal R_{\rm A}$ are closed and convex,
strict overall inclusion could equivalently be certified by a strict
support-function inequality in some direction.  The binary family below
uses the simpler $R_1$-axis direction and proves a converse for the full
non-adaptive code class.  The unions defining the general inner bounds range
over finite auxiliary alphabets, and their displayed outer closures include
limits of such finite-alphabet choices and deterministic time sharing.  No
general cardinality reduction is needed for the binary separation, whose
converse is a compact optimization over two Bernoulli marginals.
\end{remark}
\section{A binary channel family with an overall adaptive gain}
\label{sec:binary-family}
This section gives an open two-parameter family of binary channels for
which adaptive coding achieves a rate pair unreachable by any non-adaptive
stochastic code under strong one-sided secrecy.  We first prove an operational non-adaptive converse and then derive
the adaptive separation as a corollary,
and then identify the complete non-adaptive capacity region for a
representative channel.

\subsection{Channel family and analytic overall separation}
\label{subsec:binary-analytic-separation}
Consider binary inputs and noiseless cross-links
\begin{equation}
Y_1=X_2,\qquad Y_2=X_1.
\end{equation}
The inequalities $0<a<b<1/2$ define an open subset of the two-dimensional
$(a,b)$ parameter plane.  For parameters in this set, let the eavesdropper's
binary-output channel be
\begin{equation}\label{eq:eve-family}
\begin{array}{c|cccc}
(x_1,x_2)&(0,0)&(0,1)&(1,0)&(1,1)\\ \hline
P_{Z\mid X_1X_2}(1\mid x_1,x_2)&a&b&1-b&1-a.
\end{array}
\end{equation}
This family is invariant under the simultaneous complement
$(x_1,x_2,z)\mapsto(1-x_1,1-x_2,1-z)$.  It is not, in general, an additive
binary symmetric channel: in the representation $Z=X_1\oplus N$, the
crossover law depends on whether $X_1=X_2$.  When $X_2$ is uniform, however,
the marginal channel from $X_1$ to $Z$ is a binary symmetric channel with
crossover probability $t=(a+b)/2$.

The proof uses an axis separation.  The symmetric product input gives the
adaptive point $(\log 2,0)$.  For the entire non-adaptive code class defined in
Section~\ref{sec:model-framework}, the
$R_1$-axis intercept is bounded by a compact optimization of
$H(X_1\mid Z)$ and is uniformly smaller than $\log 2$.  Nonnegativity then
shows that convexification cannot repair this axis gap.

\begin{lemma}[Removal of independent decoder side information]
\label{lem:remove-independent-side-information}
Let $(M,X)\perp S$.  Then
\begin{equation}
P_{M\mid X,S}=P_{M\mid X}
\label{eq:independent-side-information-posterior}
\end{equation}
for every pair $(x,s)$ of positive probability.  Consequently, the MAP
error based on $(X,S)$ equals the MAP error based on $X$ alone.  In
particular, if a prescribed decoder using $(X,S)$ has error probability
$\epsilon$, then an $X$-only decoder exists with error probability at most
$\epsilon$.
\end{lemma}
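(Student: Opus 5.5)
The plan is a direct computation of the posterior from the product structure of the joint law, followed by the standard characterization of MAP error as an expectation over the observation. Since all alphabets here are finite, the computation reduces to elementary sums.

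\emph{Step 1 (posterior identity).} By hypothesis the joint law factors as $P_{MXS}(m,x,s)=P_{MX}(m,x)P_S(s)$. Fix $(x,s)$ with $P_{XS}(x,s)>0$. Then $P_{XS}(x,s)=P_X(x)P_S(s)$, and both factors are positive. Dividing gives
\begin{equation*}
P_{M\mid XS}(m\mid x,s)=\frac{P_{MX}(m,x)P_S(s)}{P_X(x)P_S(s)}=P_{M\mid X}(m\mid x),
\end{equation*}
which is \eqref{eq:independent-side-information-posterior}.

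\emph{Step 2 (equality of MAP errors).} The MAP error based on an observation $Y$ is $1-\mathbb E_Y\max_m P_{M\mid Y}(m\mid Y)$.
\begin{itemize}
\item For $Y=(X,S)$, Step~1 shows that the integrand $\max_m P_{M\mid XS}(m\mid x,s)$ equals $\max_m P_{M\mid X}(m\mid x)$ on the support.
\item This function depends only on $x$, so averaging over $(X,S)$ is the same as averaging over $X$ alone.
\end{itemize}
Hence the two MAP errors coincide.

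\emph{Step 3 (decoder conclusion).} Any prescribed decoder $\phi(X,S)$, deterministic or stochastic, has error at least the MAP error based on $(X,S)$. By Step~2 this equals the MAP error based on $X$. The MAP decoder using $X$ alone is therefore an $X$-only decoder with error at most $\epsilon$.

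An alternative route avoids MAP entirely. Since $S\perp(M,X)$, we have
\begin{equation*}
\Pr[\phi(X,S)\ne M]=\sum_s P_S(s)\Pr[\phi(X,s)\ne M].
\end{equation*}
Some fixed $s^\ast$ satisfies $\Pr[\phi(X,s^\ast)\ne M]\le\epsilon$, and $\phi(\cdot,s^\ast)$ is then an $X$-only decoder meeting the bound.

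There is no real obstacle. The only point requiring care is restricting to positive-probability pairs in Step~1, so that the division is legitimate and the conditionals are well defined; zero-probability pairs do not affect the expectation in Step~2.
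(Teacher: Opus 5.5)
Your proof is correct and follows the paper's argument: you factor $P_{MXS}=P_{MX}P_S$, divide by $P_{XS}=P_XP_S$ on positive-probability pairs, note that the two MAP rules maximize the same posterior, and use that the MAP error lower-bounds any prescribed decoder. Your alternative averaging argument, which fixes some $s^\ast$ with $\Pr[\phi(X,s^\ast)\ne M]\le\epsilon$, is also valid and directly yields the $X$-only decoder $\phi(\cdot,s^\ast)$, but the paper does not need it.
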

\begin{IEEEproof}
Independence gives
$P_{MXS}=P_{MX}P_S$, and division by $P_{XS}=P_XP_S$ proves
\eqref{eq:independent-side-information-posterior}.  The two MAP rules
therefore maximize the same posterior distribution.  The MAP error is no
larger than the error of any prescribed decoder.
\end{IEEEproof}

The converse below applies to the entire non-adaptive stochastic code class
of Section~\ref{sec:model-framework}: private local seeds are independent,
shared or correlated encoder randomness is excluded, and the public code
design and a deterministic time-sharing schedule may be common knowledge.
Within each encoder, however, the coordinates of $X_i^n$ may have arbitrary
temporal dependence induced by the message and private randomness; the
converse is not restricted to the memoryless random-coding construction used
for the inner bound.

\begin{theorem}[Non-adaptive converse for the binary channel family]
\label{thm:binary-nonadaptive-converse}
Let $\mathcal C_{\rm N}^{(1)}$ denote the capacity region under strong
one-sided secrecy when both encoders are restricted to be non-adaptive.
For independent random variables $X_1\sim\mathrm{Bern}(p_1)$ and
$X_2\sim\mathrm{Bern}(p_2)$, with $Z$ generated by
\eqref{eq:eve-family}, define
\begin{align}
M_{\rm N}(a,b)&:=\max_{p_1,p_2\in[0,1]}H(X_1\mid Z),
\label{eq:binary-MN-general}\\
\Delta(a,b)&:=\log 2-M_{\rm N}(a,b).
\label{eq:axis-gap-definition}
\end{align}
Then $\Delta(a,b)>0$, and every non-adaptive achievable pair satisfies
\begin{equation}
\mathcal C_{\rm N}^{(1)}
\subseteq [0,M_{\rm N}(a,b)]\times[0,\log 2].
\label{eq:binary-nonadaptive-capacity-outer}
\end{equation}
\end{theorem}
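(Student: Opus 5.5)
The plan is a standard Fano-plus-secrecy converse. It uses three features of the binary family: the noiseless cross-links, the product structure \eqref{eq:model-nonadaptive-product-kernel} of non-adaptive codes, and a single-letterization of $H(X_1^N\mid Z^N)$ that is valid because $X_{1,i}$ and $X_{2,i}$ remain independent coordinatewise.

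\emph{The $R_2$ bound.} Since $Y_1=X_2$ is binary, User~1 observes at most $N$ binary symbols about $M_2$. Fano's inequality together with $I(M_2;Y_1^N,X_1^N,M_1)\le H(X_2^N)\le N\log2$ gives $R_{2,N}\le\log 2+o(1)$. Hence the limit satisfies $R_2\le\log2$.

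\emph{The $R_1$ bound, step 1 (removing User~2's side information).} User~2's decoder uses $(M_2,X_2^N,Y_2^N)=(M_2,X_2^N,X_1^N)$. By \eqref{eq:model-nonadaptive-product-kernel} and the independence of the messages and of the private seeds, $(M_1,X_1^N)\perp(M_2,X_2^N)$. Lemma~\ref{lem:remove-independent-side-information} then yields an $X_1^N$-only decoder whose error is at most $P_e(\mathcal C_N)\to0$.

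\emph{Step 2 (combining Fano with secrecy).} Fano's inequality gives $NR_{1,N}\le I(M_1;X_1^N)+o(N)$. The same independence gives $Z^N\perp M_1\mid X_1^N$: conditional on $X_1^N$, $Z^N$ is generated from the independent sequence $X_2^N$ through the memoryless channel. So $M_1-X_1^N-Z^N$ is a Markov chain. Using this chain and strong secrecy \eqref{eq:model-strong-one-sided},
\begin{equation*}
NR_{1,N}\le I(M_1;X_1^N)-I(M_1;Z^N)+o(N)=I(M_1;X_1^N\mid Z^N)+o(N)\le H(X_1^N\mid Z^N)+o(N).
\end{equation*}

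\emph{Step 3 (single-letterization).} Subadditivity gives $H(X_1^N\mid Z^N)\le\sum_i H(X_{1,i}\mid Z_i)$. For each $i$, the pair $(X_{1,i},X_{2,i})$ is a product of Bernoulli marginals, since the two input sequences are independent. The channel is memoryless, so $P_{Z_i\mid X_{1,i}X_{2,i}}$ is \eqref{eq:eve-family}. Hence $H(X_{1,i}\mid Z_i)\le M_{\rm N}(a,b)$ for every $i$, and dividing by $N$ and letting $N\to\infty$ gives $R_1\le M_{\rm N}(a,b)$. The maximum in \eqref{eq:binary-MN-general} exists because $H(X_1\mid Z)$ is continuous on the compact set $[0,1]^2$. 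This proves \eqref{eq:binary-nonadaptive-capacity-outer}.

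\emph{Positivity of $\Delta(a,b)$.} This is where the parameter restriction $0<a<b<1/2$ enters. Since the maximum is attained, it suffices to show $H(X_1\mid Z)<\log2$ at every $(p_1,p_2)$. Equality would require both $H(X_1)=\log 2$, i.e.\ $p_1=1/2$, and $X_1\perp Z$. The second condition would force $P(Z=1\mid X_1=0)=P(Z=1\mid X_1=1)$. However,
\begin{align*}
P(Z=1\mid X_1=0)&=(1-p_2)a+p_2b\le b<\tfrac12,\\
P(Z=1\mid X_1=1)&=(1-p_2)(1-b)+p_2(1-a)\ge 1-b>\tfrac12,
\end{align*}
so the two conditional laws always differ. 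Therefore $X_1$ and $Z$ are dependent, $H(X_1\mid Z)<H(X_1)\le\log 2$ pointwise, and compactness makes the gap $\Delta(a,b)$ strictly positive.

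The main obstacle is Step~2: one must verify carefully that the Markov chain $M_1-X_1^N-Z^N$ and the removal of User~2's side information both hold for arbitrary non-adaptive stochastic codes with temporally dependent inputs. Both rest only on the independence $(M_1,X_1^N)\perp(M_2,X_2^N)$ of the two encoders, and this is exactly where adaptivity, which correlates the two inputs through feedback, is excluded.
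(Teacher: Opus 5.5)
Your proposal is correct and follows essentially the same route as the paper: the independence $(M_1,X_1^N)\perp(M_2,X_2^N)$, removal of User~2's side information via Lemma~\ref{lem:remove-independent-side-information}, Fano plus strong secrecy to reach $H(X_1^N\mid Z^N)$, coordinatewise single-letterization at product Bernoulli inputs, a Fano bound with $H(X_2^N)\le N\log2$ for $R_2$, and continuity on a compact set for $\Delta(a,b)>0$. The only difference is in Step~2, where your Markov chain $M_1-X_1^N-Z^N$ is valid but not needed: the paper writes $\log|\mathcal M_{1,N}|=I(M_1;Z^N)+I(M_1;X_1^N\mid Z^N)+H(M_1\mid X_1^N,Z^N)$ and uses $H(M_1\mid X_1^N,Z^N)\le H(M_1\mid X_1^N)$, so the step you flag as the main obstacle does not actually arise.
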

Thus the only excluded cross-encoder resource is shared or correlated
randomness; arbitrary within-encoder temporal correlation is allowed.
\begin{IEEEproof}
Consider an arbitrary sequence of non-adaptive stochastic codes.  Let its blocklength-$n$ message rates be
$R_{1,n},R_{2,n}$, its decoding error be $\epsilon_n$, and its leakage be
$\tau_n=I(M_1;Z^n)$.  Non-adaptivity, independent messages, and independent
local encoder randomness imply
\begin{equation}
(M_1,X_1^n)\perp(M_2,X_2^n).
\label{eq:binary-nonadaptive-product-sequences}
\end{equation}
Since $Y_2^n=X_1^n$, the prescribed User~2 decoder uses
$(M_2,X_2^n,X_1^n)$.  Apply
Lemma~\ref{lem:remove-independent-side-information} with
$M=M_1$, $X=X_1^n$, and $S=(M_2,X_2^n)$.  Equation
\eqref{eq:binary-nonadaptive-product-sequences} gives
\begin{equation}
P_{M_1\mid X_1^n,M_2,X_2^n}=P_{M_1\mid X_1^n}.
\label{eq:binary-decoder-side-information-removal}
\end{equation}
Thus the MAP error based on $X_1^n$ alone equals that based on all the
prescribed side information and is no larger than the error of the given
decoder.  Fano's inequality therefore gives
\begin{equation}
H(M_1\mid X_1^n)\leq 1+\epsilon_n\log|\mathcal M_{1,n}|.
\label{eq:binary-Fano-X1n}
\end{equation}
Using strong one-sided secrecy and conditioning reducing entropy, we obtain
\begin{align}
\log|\mathcal M_{1,n}|
&=I(M_1;Z^n)+I(M_1;X_1^n\mid Z^n)
  +H(M_1\mid X_1^n,Z^n)\notag\\
&\leq \tau_n+H(X_1^n\mid Z^n)
  +1+\epsilon_n\log|\mathcal M_{1,n}|.
\label{eq:binary-block-converse}
\end{align}
The conditional chain rule yields
\begin{align}
H(X_1^n\mid Z^n)
&=\sum_{j=1}^n H(X_{1,j}\mid X_1^{j-1},Z^n)\notag\\
&\leq\sum_{j=1}^n H(X_{1,j}\mid Z_j).
\label{eq:binary-single-letter-entropy-bound}
\end{align}
For every $j$, \eqref{eq:binary-nonadaptive-product-sequences} implies
$X_{1,j}\perp X_{2,j}$.  Thus the $j$th summand is evaluated at some
product Bernoulli input and is at most $M_{\rm N}(a,b)$.  Dividing
\eqref{eq:binary-block-converse} by $n$, using
$\epsilon_n\to0$, $\tau_n\to0$, and taking the limit gives
\begin{equation}
R_1\leq M_{\rm N}(a,b).
\label{eq:axis-wiretap-bound}
\end{equation}
For User~1's decoding of $M_2$, the observation is
$Y_1^n=X_2^n$.  Because the prescribed decoder has error at most
$\epsilon_n$, Fano's inequality gives
\begin{equation}
H(M_2\mid M_1,X_1^n,X_2^n)
\leq 1+\epsilon_n\log|\mathcal M_{2,n}|.
\end{equation}
Since $M_2\to X_2^n\to(M_1,X_1^n)$ is a Markov chain under the
non-adaptive product encoder law,
\begin{equation}
H(M_2\mid X_2^n)=H(M_2\mid M_1,X_1^n,X_2^n).
\label{eq:binary-M2-Markov-removal}
\end{equation}
Therefore
\begin{align}
(1-\epsilon_n)\log|\mathcal M_{2,n}|
&\leq I(M_2;X_2^n)+1\notag\\
&\leq H(X_2^n)+1\leq n\log 2+1,
\end{align}
which yields $R_2\leq\log 2$.  This proves
\eqref{eq:binary-nonadaptive-capacity-outer}.

It remains to show that the gap is positive.  For fixed 
$p_2=P(X_2=1)$, let $p_1=P(X_1=1)$ and
\begin{align}
q_0&:=P(Z=1\mid X_1=0)=(1-p_2)a+p_2b,\\
q_1&:=P(Z=1\mid X_1=1)=(1-p_2)(1-b)+p_2(1-a).
\end{align}
Then $q_0\in[a,b]$ and $q_1\in[1-b,1-a]$.  These intervals are disjoint,
so $q_0\ne q_1$ for every $p_2$.  If $p_1$ is $0$ or $1$, then
$H(X_1\mid Z)=0$.  If $0<p_1<1$, the distinct conditional output laws imply
$I(X_1;Z)>0$, and hence
\begin{equation}
H(X_1\mid Z)=h_2(p_1)-I(X_1;Z)<\log 2.
\end{equation}
The function $H(X_1\mid Z)$ is continuous on the compact square
$[0,1]^2$.  Its maximum is therefore strictly below $\log 2$, proving
$\Delta(a,b)>0$.
\end{IEEEproof}

\begin{corollary}[Operational adaptive separation]
\label{cor:binary-operational-separation}
For every channel in the family \eqref{eq:eve-family},
\begin{equation}
(\log 2,0)\in\mathcal R_{\rm A}
\setminus\mathcal C_{\rm N}^{(1)}.
\label{eq:binary-operational-separation}
\end{equation}
Consequently, adaptive coding achieves a rate pair unreachable by every
non-adaptive stochastic code in the code class of
Section~\ref{sec:model-framework}.
\end{corollary}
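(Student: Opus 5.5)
To prove the corollary, we show separately that $(\log 2,0)\in\mathcal R_{\rm A}$ and that $(\log 2,0)\notin\mathcal C_{\rm N}^{(1)}$.

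\textbf{Exclusion from the non-adaptive region.} This part is immediate from Theorem~\ref{thm:binary-nonadaptive-converse}. Every non-adaptive achievable pair has $R_1\le M_{\rm N}(a,b)=\log 2-\Delta(a,b)$, and $\Delta(a,b)>0$. Hence no pair with $R_1=\log 2$ lies in $\mathcal C_{\rm N}^{(1)}$. Time sharing and closure cannot help, because the outer bound is already stated for the whole operational region, which is closed and convex by construction.

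\textbf{Adaptive achievability.} We take the symmetric product input $P$ with $V_i=X_i$ uniform and identity prefixes, and evaluate the Shannon quantities of \eqref{eq:model-Shannon}.

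\begin{itemize}
\item Since $Y_2=X_1$ and $Y_1=X_2$, we get $A=H(X_1\mid X_2)=\log 2$ and $B=\log 2$.
\item With $X_2$ uniform, the channel $X_1\to Z$ is a BSC with crossover $t=(a+b)/2\in(0,1/2)$, so $C=\log 2-h_2(t)$.
\item With $X_1$ uniform, the channel $X_2\to Z$ has $P(Z=1\mid X_2=0)=(a+1-b)/2$ and $P(Z=1\mid X_2=1)=(b+1-a)/2$. This is a BSC with crossover $(1-(b-a))/2$, so $D=\log 2-h_2((1+b-a)/2)$.
\item For the joint cost, $Z$ is uniform by the complement symmetry, so $E=\log 2-\tfrac12 h_2(a)-\tfrac12 h_2(b)$.
\end{itemize}

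\textbf{Strict feasibility, $P\in\mathcal Q_{\rm F}$.}
\begin{itemize}
\item $C<\log 2=A$ holds because $t\in(0,1/2)$.
\item $D<\log 2=B$ holds because $0<b-a<1$.
\item $E<\log 2<2\log 2=A+B$ holds trivially.
\end{itemize}

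\textbf{Membership in $\mathcal R_{\rm A}(P)$.} The point $(\log 2,0)$ satisfies every constraint of \eqref{eq:adaptive-fixed-P-region}:
\begin{itemize}
\item $R_2=0\le B$;
\item $R_1=\log 2\le A$;
\item $R_1\le A+B-E$, since $E\le\log 2=B$;
\item $R_1+R_2=\log 2\le A+B-C=\log 2+h_2(t)$.
\end{itemize}
Hence $(\log 2,0)\in\mathcal R_{\rm A}(P)\subseteq\mathcal R_{\rm A}$, and Theorem~\ref{thm:adaptive-achievable-region} makes it operationally achievable by adaptive codes.

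\textbf{Main obstacle.} The only delicate point is the boundary. The point sits on $R_1=A$ with $R_1>0$, so it is a limit of strictly feasible interior points. Theorem~\ref{thm:adaptive-achievable-region} covers the closed fixed-$P$ region through its diagonal boundary argument in Step~4, so no extra work is needed. We must also confirm that $(\log 2,0)$ is not excluded by $R_2\ge0$ issues; it is not, since $\mathcal R_{\rm A}(P)$ is defined on $\mathbb R_+^2$ and includes $R_2=0$.

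Combining the two parts gives \eqref{eq:binary-operational-separation}.
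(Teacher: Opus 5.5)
Your proof is correct and follows the paper's argument. The uniform product input with $V_i=X_i$ satisfies the $\mathcal Q_{\rm F}$ conditions and places $(\log 2,0)$ in $\mathcal R_{\rm A}(P)\subseteq\mathcal R_{\rm A}$, while Theorem~\ref{thm:binary-nonadaptive-converse} gives $R_1\le M_{\rm N}(a,b)<\log 2$ for every non-adaptive achievable pair.

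Your value $D=\log 2-h_2((1+b-a)/2)$ is in fact the correct one; the paper's displayed $D=\log 2-h_2(t)$ overlooks that the family is not symmetric under $x_1\leftrightarrow x_2$. Since $D$ enters only through the condition $D<B$, this does not affect the conclusion.
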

\begin{IEEEproof}
Choose $V_i=X_i$ and let  $X_1,X_2$ be independent and uniform.  Recall the
abbreviations as given in \eqref{eq:geometry-abbreviations}. We have here $A=B=\log 2$.  Complement
symmetry makes $Z$ uniform.  Writing $h_2$ for binary entropy with natural
logarithms and setting $t=(a+b)/2$, direct calculation gives
\begin{align}
C&=\log 2-h_2(t),\\
D&=\log 2-h_2(t),\\
E&=\log 2-\frac{h_2(a)+h_2(b)}2.
\end{align}
Because $0<a<b<1/2$, we have $0<C<A$, $0<D<B$, and $E<A+B$.
Moreover,
\begin{align}
A+B-E&=\log 2+\frac{h_2(a)+h_2(b)}2>\log 2,\\
A+B-C&=\log 2+h_2(t)>\log 2.
\end{align}
Thus all defining inequalities of $\mathcal R_{\rm A}(P)$ hold at
$(R_1,R_2)=(\log 2,0)$.
Hence $(\log 2,0)\in\mathcal R_{\rm A}$.  On the other hand,
Theorem~\ref{thm:binary-nonadaptive-converse} gives
$R_1\leq M_{\rm N}(a,b)<\log 2$ for every non-adaptive achievable pair.
Therefore $(\log 2,0)\notin\mathcal C_{\rm N}^{(1)}$, which proves
\eqref{eq:binary-operational-separation}.
\end{IEEEproof}
\begin{remark}[Relation to additive-channel converses]
The channel \eqref{eq:eve-family} is not an additive-noise TW-WC with input-independent
eavesdropper noise unless $a=b$.  Therefore the additive-channel converse
conditions used in earlier work do not apply directly.  The converse above
instead uses the product structure of arbitrary non-adaptive stochastic
encoders and the noiseless legitimate cross-links.  It is specific to
one-sided secrecy and does not assert a converse for arbitrary adaptive
codes.
\end{remark}

\subsection{Representative channel: analytic maximization and exact non-adaptive region}
\label{subsec:binary-numerical-illustration}
The strict overall separation in Corollary~\ref{cor:binary-operational-separation} is
entirely analytic.  For the representative channel with $a=1/10$ and $b=1/5$ as defined in \eqref{eq: representative channel}, the analysis below first
bounds the complete non-adaptive capacity region in terms of the scalar
quantity $M_{\rm N}$.  Appendix~\ref{app:binary-analytic-maximization}
then identifies $M_{\rm N}=h_2(3/20)$ by localizing the stationary maximizer
and proving strict concavity of the optimized one-dimensional profile.  The
reverse inclusion is supplied by the uniform product input.  We first record
the general one-dimensional reduction.

Let
\begin{equation}
 p:=P_{X_1}(1),\quad 
 t:=\frac{a+b}{2},\quad 
 d:=1-a-b=1-2t,
\end{equation}
and introduce
\begin{equation}
 u:=P_{Z\mid X_1}(1\mid0)
   =a+(b-a)P_{X_2}(1).
\end{equation}
Then $u\in[a,b]$ and
$P_{Z\mid X_1}(1\mid1)=u+d$.  Writing $h_2$ for binary entropy with
natural logarithms, we obtain
\begin{align}
 F_t(p,u)
 &: = H(X_1\mid Z)\notag\\
 &=h_2(p)+(1-p)h_2(u)+p h_2(u+d)-h_2(u+pd).
 \label{eq:numerical-profile-F}
\end{align}
Consequently,
\begin{equation}
 \max_{p_1,p_2\in[0,1]}H(X_1\mid Z)
 =\max_{\substack{0\leq p\leq1\\a\leq u\leq b}}F_t(p,u).
 \label{eq:numerical-two-variable-problem}
\end{equation}
The change from $(p_1,p_2)$ to $(p,u)$ is bijective because
$u=a+(b-a)p_2$ maps $[0,1]$ onto $[a,b]$.

\begin{proposition}[Reduction to a one-dimensional maximization]
\label{prop:numerical-profile-reduction}
For every fixed $u\in[a,b]$, the function
$p\mapsto F_t(p,u)$ is strictly concave on $(0,1)$ and has a unique
maximizer $p_t^\star(u)\in(0,1)$.  Therefore,
\begin{equation}
 \max_{\substack{0\leq p\leq1\\a\leq u\leq b}}F_t(p,u)
 =\max_{a\leq u\leq t}G_t(u),
 \quad
 G_t(u):=F_t(p_t^\star(u),u).
 \label{eq:numerical-one-variable-problem}
\end{equation}
Moreover,
\begin{equation}
 p_t^\star(2t-u)=1-p_t^\star(u),
 \quad
 G_t(2t-u)=G_t(u).
 \label{eq:numerical-profile-symmetry}
\end{equation}
\end{proposition}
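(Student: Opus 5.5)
The plan is to prove strict concavity in $p$ by a direct second-derivative computation, then obtain the interior maximizer from boundary slopes, and finally derive the symmetry from an explicit involution of $(p,u)$.

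\emph{Concavity.} Fix $u\in[a,b]$ and write $v:=u+d$ and $w:=u+pd=(1-p)u+pv$. Since $0<a\le u\le b<1/2$ and $v\in[1-b,1-a]$, both $u$ and $v$ lie in $(0,1)$, and hence so does $w$. The two middle terms of \eqref{eq:numerical-profile-F} are affine in $p$. Differentiating twice therefore gives
\begin{equation*}
\frac{\partial^2F_t}{\partial p^2}(p,u)=-\frac{1}{p(1-p)}+\frac{d^2}{w(1-w)} .
\end{equation*}
Strict concavity on $(0,1)$ is thus equivalent to $w(1-w)>d^2p(1-p)$. I will use the elementary identity for a two-point mixture,
\begin{equation*}
w(1-w)=(1-p)u(1-u)+p\,v(1-v)+p(1-p)(v-u)^2 .
\end{equation*}
Because $v-u=d$ and the first two terms are strictly positive, the identity gives the required strict inequality. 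I expect this step to be the main point needing care, and the mixture identity is what makes it clean.

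\emph{Interior maximizer.} Compute the first derivative:
\begin{equation*}
\partial_pF_t=\log\frac{1-p}{p}-h_2(u)+h_2(v)-d\,h_2'(w).
\end{equation*}
The last three terms stay bounded as $p\to0$ and as $p\to1$, because $w$ stays in a compact subset of $(0,1)$. The logarithm, however, tends to $+\infty$ at $p\to0$ and to $-\infty$ at $p\to1$. Strict concavity then yields a unique stationary point $p_t^\star(u)\in(0,1)$, and this point is the unique maximizer over $[0,1]$. For comparison, the endpoint values are $F_t(0,u)=F_t(1,u)=0$, and $F_t>0$ in the interior. Maximizing first over $p$ therefore gives $\max F_t=\max_{a\le u\le b}G_t(u)$.

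\emph{Symmetry.} Consider the map $(p,u)\mapsto(1-p,\,2t-u)$. Since $2t=a+b$, we have $2t-u=1-u-d=1-v$, so the map sends $[a,b]$ onto itself. Under this map:
\begin{itemize}
\item $h_2(p)$ is unchanged;
\item $(1-p)h_2(u)$ becomes $p\,h_2(1-v)=p\,h_2(v)$;
\item $p\,h_2(v)$ becomes $(1-p)h_2(1-u)=(1-p)h_2(u)$;
\item $w$ becomes $1-v+(1-p)d=1-w$, so $h_2(w)$ is unchanged.
\end{itemize}
Hence $F_t(1-p,2t-u)=F_t(p,u)$. This is the analytic form of the complement symmetry of \eqref{eq:eve-family}. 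By uniqueness of the maximizer, $p_t^\star(2t-u)=1-p_t^\star(u)$, and consequently $G_t(2t-u)=G_t(u)$. Finally, the reflection $u\mapsto 2t-u$ maps $[t,b]$ onto $[a,t]$, so the maximum of $G_t$ over $[a,b]$ equals its maximum over $[a,t]$. This proves \eqref{eq:numerical-one-variable-problem} and \eqref{eq:numerical-profile-symmetry}.
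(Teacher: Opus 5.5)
Your proof is correct and follows the paper's argument essentially step for step. It uses the same second-derivative formula, the same kind of positivity identity for $w(1-w)-d^2p(1-p)$ (your two-point mixture variance identity is the paper's identity rearranged), the divergence of $\partial_pF_t$ at the endpoints, and the involution $(p,u)\mapsto(1-p,2t-u)$ combined with uniqueness of the maximizer, with your term-by-term check of $F_t(1-p,2t-u)=F_t(p,u)$ via $2t-u=1-v$ filling in a verification the paper only asserts.
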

\begin{IEEEproof}
Put $r=u+pd$.  Direct differentiation gives
\begin{equation}
 \frac{\partial^2F_t}{\partial p^2}
 =-\frac{1}{p(1-p)}+\frac{d^2}{r(1-r)}.
 \label{eq:numerical-Fpp}
\end{equation}
The identity
\begin{align}
&r(1-r)-d^2p(1-p)\notag\\
&\quad=(1-p)u(1-u)+p(u+d)(1-u-d)>0
\label{eq:numerical-variance-identity}
\end{align}
shows that \eqref{eq:numerical-Fpp} is strictly negative in the interior.
Also, $\partial F_t/\partial p$ tends to $+\infty$ as $p\downarrow0$ and to
$-\infty$ as $p\uparrow1$.  Thus the maximizer exists in $(0,1)$ and is
unique.  Finally,
\begin{equation}
 F_t(p,u)=F_t(1-p,2t-u).
\end{equation}
Uniqueness gives the first identity in
\eqref{eq:numerical-profile-symmetry}, and the second follows by
substitution.  Since $[a,b]$ is symmetric about $t=(a+b)/2$, the outer
maximization may be restricted to $[a,t]$.
\end{IEEEproof}

For the representative channel below, Appendix~\ref{app:binary-analytic-maximization}
localizes the unique fixed-$u$ maximizer to a narrow strip on the left half
of the parameter interval.  Elementary bounds on that strip make the
optimized profile strictly concave.  Together with the symmetry in
\eqref{eq:numerical-profile-symmetry}, this identifies its unique maximum.

For the representative channel
\begin{equation}\label{eq: representative channel}
 a=\frac{1}{10},\qquad b=\frac{1}{5},\qquad t=\frac{3}{20},
\end{equation}
define
\begin{equation}
 M_{\rm N}:=\max_{p_1,p_2\in[0,1]}H(X_1\mid Z).
 \label{eq:numerical-MN-definition}
\end{equation}
We separate the exact-region identification into three steps.

Theorem~\ref{thm:binary-nonadaptive-converse} gives the operational converse
\begin{equation}
 \mathcal C_{\rm N}^{(1)}
 \subseteq[0,M_{\rm N}]\times[0,\log2].
 \label{eq:numerical-RN-MN-outer-bound}
\end{equation}
At this stage no value of $M_{\rm N}$ has been asserted.

\textit{Step 2: Analytic identification of $M_{\rm N}$.}
Appendix~\ref{app:binary-analytic-maximization} proves
\begin{equation}
 M_{\rm N}=h_2\!\left(\frac{3}{20}\right).
 \label{eq:numerical-MN-exact-definition}
\end{equation}
It first localizes the unique fixed-$u$ maximizer $p^\star(u)$ to a narrow
strip for $1/10\leq u\leq3/20$.  Direct rational bounds show that the
Hessian is negative definite on this strip.  The optimized profile is
therefore strictly concave, and symmetry identifies its unique maximizer as
$(p,u)=(1/2,3/20)$.

\textit{Step 3: Analytic reverse inclusion.}
The uniform product input places $(\log2,0)$ and $(0,\log2)$ in
$\mathcal R_{\rm A}$, while it places $(0,\log2)$ in
$\mathcal R_{\rm N}$.  Thus
\begin{align}
 \max_{(R_1,R_2)\in\mathcal R_{\rm A}}R_1&=\log2,&
 \max_{(R_1,R_2)\in\mathcal R_{\rm A}}R_2&=\log2,\notag\\
 \max_{(R_1,R_2)\in\mathcal R_{\rm N}}R_2&=\log2.
 \label{eq:numerical-overall-coordinate-maxima}
\end{align}
For the same uniform product input, the fixed-distribution non-adaptive
rectangle is
\begin{equation}
 \left[0,h_2\!\left(\frac{3}{20}\right)\right]
 \times[0,\log2]
 \subseteq\mathcal R_{\rm N}.
 \label{eq:numerical-uniform-rectangle}
\end{equation}
Combining this reverse inclusion with
\eqref{eq:numerical-RN-MN-outer-bound} and the analytic identity
\eqref{eq:numerical-MN-exact-definition} gives the complete non-adaptive capacity region
\begin{equation}
 \mathcal C_{\rm N}^{(1)}=\mathcal R_{\rm N}
 =\left[0,h_2\!\left(\frac{3}{20}\right)\right]
  \times[0,\log2].
 \label{eq:numerical-RN-exact-rectangle}
\end{equation}
Numerically, $h_2(3/20)\approx0.4227091$ nats,
$\log2\approx0.6931472$ nats, and the $R_1$-axis gap is approximately
$0.2704381$ nats.

For comparison, substituting the uniform product input into the adaptive
fixed-distribution formula gives
\begin{align}
 \mathcal R_{\rm A}(P_{\rm unif})
 =\Bigl\{(R_1,R_2)&\in\mathbb R_+^2:
 R_1\leq\log2,\ R_2\leq\log2,\notag\\
 &R_1+R_2\leq\log2+h_2(3/20)\Bigr\}.
 \label{eq:numerical-uniform-RA}
\end{align}
Its non-axis boundary is the exact line
\begin{equation}
 R_1+R_2=\log2+h_2\!\left(\frac{3}{20}\right),
 \label{eq:numerical-uniform-RA-sloping-boundary}
\end{equation}
joining $(\log2,h_2(3/20))$ and $(h_2(3/20),\log2)$.

Figure~\ref{fig:numerical-overall-region-comparison-ab-010-020} compares the
exact overall non-adaptive region with this fixed-uniform-input adaptive
region.  Since
$\mathcal R_{\rm A}(P_{\rm unif})\subseteq\mathcal R_{\rm A}$, it displays an
explicit adaptive subset extending strictly beyond the complete non-adaptive capacity region.
\begin{figure}[t]
 \centering
 \begin{tikzpicture}[x=0.88cm,y=0.82cm,>=stealth,font=\footnotesize]
  \def\L{6.20}
  \def\H{3.78}
  \coordinate (O) at (0,0);
  \coordinate (L0) at (\L,0);
  \coordinate (LH) at (\L,\H);
  \coordinate (HL) at (\H,\L);
  \coordinate (H0) at (\H,0);
  \coordinate (OH) at (0,\H);
  \coordinate (OL) at (0,\L);
  \fill[orange!17] (O)--(L0)--(LH)--(HL)--(OL)--cycle;
  \fill[blue!12] (O)--(L0)--(LH)--(OH)--cycle;
  \draw[orange!80!black,very thick]
    (O)--(L0)--(LH)--(HL)--(OL)--cycle;
  \draw[blue!70!black,very thick]
    (O)--(L0)--(LH)--(OH)--cycle;
  \draw[->] (0,0)--(7.0,0) node[below] {$R_2$};
  \draw[->] (0,0)--(0,7.0) node[left] {$R_1$};
  \draw[densely dashed,gray] (OH)--(LH);
  \draw[densely dashed,gray] (H0)--(HL);
  \node[left] at (OH) {$h_2(3/20)$};
  \node[left] at (OL) {$\log2$};
  \node[below] at (H0) {$h_2(3/20)$};
  \node[below] at (L0) {$\log2$};
  \node[blue!70!black,align=center] at (3.25,1.65)
    {$\mathcal C_{\rm N}^{(1)}=\mathcal R_{\rm N}$};
  \node[orange!80!black,align=center] at (1.95,5.05)
    {$\mathcal R_{\rm A}(P_{\rm unif})\setminus\mathcal C_{\rm N}^{(1)}$};
  \path (LH)--(HL)
    node[midway,sloped,above=3pt,orange!80!black,
         fill=white,inner sep=1.2pt]
    {$R_1+R_2=\log2+h_2(3/20)$};
  \fill (LH) circle (1.3pt);
  \fill (HL) circle (1.3pt);
 \end{tikzpicture}
 \caption{Exact overall non-adaptive region and an explicit fixed-uniform-input
 adaptive subregion for $a=0.1$ and $b=0.2$.  The blue rectangle is the complete non-adaptive capacity region
 $\mathcal C_{\rm N}^{(1)}=\mathcal R_{\rm N}
=[0,h_2(3/20)]\times[0,\log2]$, as proved in
 Appendix~\ref{app:binary-analytic-maximization}.  The orange polygon is
 the fixed-uniform-input adaptive region
 $\mathcal R_{\rm A}(P_{\rm unif})$, not an identification of the complete
 overall adaptive inner region.  Its sloping boundary is the exact line
 $R_1+R_2=\log2+h_2(3/20)$.  Here
 $h_2(3/20)\approx0.4227091$, $\log2\approx0.6931472$, and their difference
 is approximately $0.2704381$ nats.}
 \label{fig:numerical-overall-region-comparison-ab-010-020}
\end{figure}

\section{Conclusion}
\label{sec:conclusion}

We developed a key-assisted adaptive coding scheme for two-way wiretap
channels under strong one-sided secrecy.  The multiround proof separates the
one-round resolvability estimate from the propagation of a previously
generated key.  It explicitly retains prior key leakage, treats one-time-pad
use with side information, and transfers the ideal true-key analysis to the
implemented process through a coupling up to the first wrongly used key.
Simultaneous codebook selection then yields one codebook sequence satisfying
reliability and secrecy simultaneously, and a growing-round construction
removes the initialization-rate loss.

For the resulting constructions, exact elimination of the component rates
shows that the adaptive fixed-distribution region contains the
non-adaptive rectangle and gives a complete description of the User~1 gain
profile.  The best sum-rate is unchanged.  Separately, for an open binary
channel family, a direct converse for arbitrary non-adaptive stochastic
codes proves an operational adaptive-versus-non-adaptive separation.  For a
representative channel, the converse and an exact analytic maximization identify
the complete non-adaptive strong one-sided secrecy capacity region.

A supplementary appendix compares the direct one-sided component system
with the formal one-sided specialization of an earlier symmetric system.
That algebraic comparison is not needed for the multiround theorem or the
binary operational separation.  Broader capacity characterizations for
adaptive codes remain open.

\section*{Acknowledgements} 
During the preparation of this manuscript, the authors used Microsoft Copilot with the GPT-5.6 Thinking model to assist with language editing, organization and presentation of the manuscript, and the exploration, development, and checking of certain mathematical derivations and arguments. 
All AI-assisted material was critically reviewed, verified, and revised by the authors, who take full responsibility for the accuracy and integrity of the manuscript.

\appendices

\section{Formal comparison with the full symmetric component-rate system}
\label{sec:symmetric-relation}
The asymmetric key-assisted adaptive code construction was specially designed for one-sided secrecy.  For
comparison, this section extracts from the full symmetric key-exchange
construction of \cite{CH2025} only its component variables and one-sided
auxiliary-rate system.  We do not invoke the cited multiround leakage proof
for the implemented decoded-key process.  The comparison below is an
equality of projected formal component-rate regions, not a new operational
achievability theorem for the full construction.  

\subsection{Full symmetric component allocation}
In each payload round, the full system assigns to User~$i$ a protected payload of rate $R_{i,\mathrm{sec}}$, an encrypted payload of rate
$R_{i,e}$, a fresh key of rate $R_{i,k}$ for the other user, and an open
randomization coordinate of rate $R_{i,o}$.  Its payload and internal index rates
are
\begin{align}
R_i&=R_{i,\mathrm{sec}}+R_{i,e},
\label{eq:full-payload-rates}\\
\widetilde R_i&=R_{i,\mathrm{sec}}+R_{i,k},&
\widetilde R_{i,r}&=R_{i,e}+R_{i,o}.
\label{eq:full-underlying-rates}
\end{align}
The formal full system uses the original key-size conditions
\begin{equation}
R_{1,e}\leq R_{2,k},\qquad R_{2,e}\leq R_{1,k}.
\label{eq:key-matching}
\end{equation}
For payload-rate projection, any excess in $R_{2,k}-R_{1,e}$ may be removed
without violating feasibility.  The direct asymmetric system is therefore
represented in the equality-normalized form $R_{2,k}=R_{1,e}$, without
redefining the cited full system.
Note that in this cited symmetric construction, the first-round message components are fixed
during initialization. The asymmetric construction proposed in this paper instead uses
independent dummy coordinates to keep the codebook-index dimensions uniform.
This difference does not enter the component-rate projection below.

\subsection{One-sided auxiliary-rate system}
At the Shannon-information level, the formal one-sided specialization is
\allowdisplaybreaks
\begin{align}
\widetilde R_1+\widetilde R_{1,r}&<A,
\label{eq:full-rel1}\\
\widetilde R_2+\widetilde R_{2,r}&<B,
\label{eq:full-rel2}\\
\widetilde R_{1,r}&>C,
\label{eq:full-sec1}\\
\widetilde R_{2,r}+R_{2,\mathrm{sec}}&>D,
\label{eq:full-sec2}\\
\widetilde R_{1,r}+\widetilde R_{2,r}+R_{2,\mathrm{sec}}&>E.
\label{eq:full-sec12}
\end{align}
Together with \eqref{eq:full-payload-rates}--\eqref{eq:key-matching}
and nonnegativity, these inequalities define the strict full component-rate
system used here.  This algebraic specialization does not assert that the
cited analysis already provides the decoded-key transfer established for
the direct construction in Section~\ref{sec:embedded-key-propagation}.

\subsection{Equality of the projected component-rate regions}
\begin{proposition}[Equality of the projected component-rate regions]
\label{prop:equal-projections}
Fix $P\in\mathcal Q_{\rm F}$ and $(R_1,R_2)$ with $R_1>0$.  The formal full
component-rate system above has a strictly feasible assignment realizing
$(R_1,R_2)$ if and only if the asymmetric component-rate system has one.
Consequently, after taking closures, the two formal systems have the same
projected achievable rate region $\mathcal R_{\rm A}(P)$.
\end{proposition}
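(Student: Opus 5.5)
The plan is to prove the equivalence by two explicit maps between component-rate assignments. Each map keeps the payload pair $(R_1,R_2)$ fixed and never tightens any of the five strict inequalities. The organizing observation is that, once \eqref{eq:full-payload-rates}--\eqref{eq:full-underlying-rates} are substituted, the full system depends on its User~2 components only through three aggregates:
\begin{align}
\widetilde R_2+\widetilde R_{2,r}&=R_2+R_{2,k}+R_{2,o},\notag\\
\widetilde R_{2,r}+R_{2,\mathrm{sec}}&=R_2+R_{2,o}.\notag
\end{align}
These are exactly the User~2 expressions in the asymmetric system \eqref{eq:projection-rel2}, \eqref{eq:projection-sec2}, and \eqref{eq:projection-sec12}, with $\widetilde R_{2,r}=R_{2,o}$ there. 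On User~1's side, the full system differs from the asymmetric one only by the extra term $R_{1,k}$ in $\widetilde R_1$.

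\emph{Asymmetric $\Rightarrow$ full.} Start from a strictly feasible asymmetric assignment with $R_{2,k}=R_{1,e}$. Define a full assignment with the same $R_{1,\mathrm{sec}},R_{1,e},R_{1,o},R_{2,k},R_{2,o}$, and set
\[
R_{1,k}:=0,\qquad R_{2,e}:=0,\qquad R_{2,\mathrm{sec}}:=R_2 .
\]
Then $\widetilde R_1$, $\widetilde R_{1,r}$, $\widetilde R_{2,r}$, and the two aggregates above coincide with their asymmetric counterparts, so \eqref{eq:full-rel1}--\eqref{eq:full-sec12} hold strictly. The key-size conditions \eqref{eq:key-matching} reduce to $R_{1,e}\le R_{2,k}$ and $0\le0$, and both hold.

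\emph{Full $\Rightarrow$ asymmetric.} Start from a strictly feasible full assignment and modify it in three moves.
\begin{itemize}
\item[(i)] Transfer the encrypted User~2 payload into the protected part: $R_{2,\mathrm{sec}}\leftarrow R_{2,\mathrm{sec}}+R_{2,e}$ and $R_{2,e}\leftarrow0$. This preserves $R_2$. It lowers $\widetilde R_{2,r}$ by exactly the amount added to $R_{2,\mathrm{sec}}$, so $\widetilde R_{2,r}+R_{2,\mathrm{sec}}$ and $\widetilde R_2+\widetilde R_{2,r}$ are unchanged. I must check that \eqref{eq:full-sec12} is unaffected: its left side equals $\widetilde R_{1,r}+R_2+R_{2,o}$, which does not involve $R_{2,e}$ separately.
\item[(ii)] Set $R_{1,k}\leftarrow0$. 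This only lowers $\widetilde R_1$, which loosens \eqref{eq:full-rel1}. It is allowed because the second key condition now reads $R_{2,e}=0\le R_{1,k}=0$.
\item[(iii)] Reduce $R_{2,k}$ to $R_{1,e}$, as in the normalization remark after \eqref{eq:asym-key-size}. This only loosens \eqref{eq:full-rel2}.
\end{itemize}
The resulting assignment, read with $R_2=R_{2,\mathrm{sec}}$ and $\widetilde R_{2,r}=R_{2,o}$, satisfies \eqref{eq:projection-rel1}--\eqref{eq:projection-user2} strictly. Nonnegativity is preserved throughout.

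The main obstacle is bookkeeping rather than analysis. I must make sure the formal one-sided specialization counts User~2's protected payload $R_{2,\mathrm{sec}}$ as an averaging coordinate in \eqref{eq:full-sec2} and \eqref{eq:full-sec12}. That is what makes step (i) invariant, and I would verify it explicitly by expanding both sides.

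Finally, since the two strict projections coincide on $\{R_1>0\}$, their closures coincide. Proposition~\ref{prop:exact-adaptive-projection} identifies the closure of the asymmetric projection as $\mathcal R_{\rm A}(P)$, which gives the last claim of the proposition.
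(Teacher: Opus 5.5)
Your proof is correct and follows the paper's argument: the asymmetric-to-full embedding is identical, and your three moves compose to the paper's full-to-asymmetric map except that you discard $R_{1,k}$ (loosening only the User~1 reliability inequality), whereas the paper folds it into $R_{1,o}'=R_{1,o}+R_{1,k}$ (keeping that load fixed and enlarging the averaging rate). Both choices work, because in the one-sided specialization $R_{1,k}$ enters only through $\widetilde R_1$ and the second key-matching condition, which reads $0\le 0$ once $R_{2,e}=0$.
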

\begin{IEEEproof}
\textit{Asymmetric to full.}  Every asymmetric tuple is a full tuple after
setting
\begin{equation}
R_{1,k}=0,\qquad R_{2,e}=0,\qquad R_{2,\mathrm{sec}}=R_2.
\label{eq:asym-to-full-embedding}
\end{equation}
The remaining components are unchanged.  Both payloads, both reliability
sums, all three averaging sums, and User~1's key-matching condition are
preserved; the second key-matching condition becomes $0\leq0$.

\textit{Full to asymmetric.}  From a strictly feasible full tuple define
\begin{align}
&R_{1,\mathrm{sec}}'=R_{1,\mathrm{sec}}, \quad R_{1,e}' =R_{1,e}, \quad R_{1,o}'=R_{1,o}+R_{1,k},
\label{eq:full-to-asym-user1}\\
&R_2'=R_{2,\mathrm{sec}}+R_{2,e}=R_2, \quad R_{2,k}' =R_{1,e}, \quad R_{2,o}'=R_{2,o}.
\label{eq:full-to-asym-user2}
\end{align}
Then $R_1'=R_1$, $R_2'=R_2$, and $R_{1,e}'=R_{2,k}'$.  The transformed
User~1 reliability load is
\begin{align}
R_{1,\mathrm{sec}}'+R_{1,e}'+R_{1,o}'
&=\widetilde R_1+\widetilde R_{1,r}<A,
\label{eq:full-to-asym-rel1}
\end{align}
and the User~2 load satisfies
\begin{align}
R_2'+R_{2,k}'+R_{2,o}'
&\leq\widetilde R_2+\widetilde R_{2,r}<B.
\label{eq:full-to-asym-rel2}
\end{align}
The transformed averaging rates obey
\allowdisplaybreaks
\begin{align}
&R_{1,e}'+R_{1,o}'=\widetilde R_{1,r}+R_{1,k}>C,
\label{eq:full-to-asym-sec1}\\
&R_2'+R_{2,o}'=R_{2,\mathrm{sec}}+\widetilde R_{2,r}>D,
\label{eq:full-to-asym-sec2}\\
&(R_{1,e}'+R_{1,o}')+(R_2'+R_{2,o}')
\geq\widetilde R_{1,r}+\widetilde R_{2,r}+R_{2,\mathrm{sec}}>E.
\label{eq:full-to-asym-sec12}
\end{align}
Thus the transformations preserve the payload pair and strict feasibility.
The $R_1=0$ boundary and all remaining boundary points follow by the closure
argument of Section~\ref{sec:region-geometry}.
\end{IEEEproof}

\begin{remark}[Scope of the comparison]
This proves equality of the projected formal component-rate systems.  It
does not identify the implemented stochastic processes or supply a separate
full-construction decoded-key leakage theorem.
\end{remark}

\section{Proof of Proposition~\ref{prop:selected-index-component}}
\label{app:selected-index-component-proof}
Compose the physical eavesdropper channel with the two stochastic prefixes.
The resolvability argument is therefore applied to the induced memoryless
MAC $P_{Z\mid V_1V_2}$ in \eqref{eq:induced-eve-mac}; no additional
randomness from the prefixes remains outside this channel.

Let
\begin{equation}
 L_i:=(J_i,B_i),\qquad R_{L_i}:=R_{J_i}+R_{B_i},
 \label{eq:selected-superindices}
\end{equation}
and fix a retained pair $(A_1,A_2)=(a_1,a_2)$.  For a codebook realization
$c$, averaging the two super-indices gives $P^{c}_{Z^n\mid a_1,a_2}(z^n)$ by
\begin{align}
 \frac{1}{|\mathcal L_1||\mathcal L_2|}
 \sum_{l_1,l_2}
 P_{Z\mid V_1V_2}^{\otimes n}
 \bigl(z^n\mid v_1^n(a_1,l_1;c),v_2^n(a_2,l_2;c)\bigr).
 \label{eq:selected-fixed-retained-output}
\end{align}
Because all complete-index codewords are generated independently, the two
sub-codebooks exposed after fixing $(a_1,a_2)$ remain independent i.i.d.
codebooks with single-letter laws $P_{V_1}$ and $P_{V_2}$ and rates
$R_{L_1}$ and $R_{L_2}$.

Use the product reference output
\begin{align}
 Q_{Z^n}:=&P_Z^{\otimes n},\notag\\
 P_Z(z):=&\sum_{v_1,v_2}P_{V_1}(v_1)P_{V_2}(v_2)
 P_{Z\mid V_1V_2}(z\mid v_1,v_2).
 \label{eq:selected-reference-output}
\end{align}
Applying \cite[Lemma~2]{HC2023} to the conditional sub-codebook gives,
for every retained pair, the same ensemble bound
\begin{align}
 &\mathbb E_{\mathsf C}
 D\bigl(P^{\mathsf C}_{Z^n\mid a_1,a_2}\Vert Q_{Z^n}\bigr)\notag\\
 &\leq\frac1s\Bigl[
 e^{ns(E_s-R_{L_1}-R_{L_2})}
 +e^{ns(C_s-R_{L_1})}
 +e^{ns(D_s-R_{L_2})}\Bigr].
 \label{eq:selected-divergence-bound}
\end{align}
The three terms are the three nonempty transmitter subsets
$\{1,2\}$, $\{1\}$, and $\{2\}$, respectively.  Tensor-product additivity
produces the $n$-fold exponents, and relative entropy is bounded by the
order-$(1+s)$ R\'{e}nyi quantity used in the cited lemma.

For a fixed codebook $c$, let
\begin{equation}
 P^c_{Z^n}:=\frac{1}{|\mathcal A_1||\mathcal A_2|}
 \sum_{a_1,a_2}P^c_{Z^n\mid a_1,a_2}.
\end{equation}
The standard divergence decomposition is
\begin{align}
 &\frac{1}{|\mathcal A_1||\mathcal A_2|}
 \sum_{a_1,a_2}
 D\bigl(P^c_{Z^n\mid a_1,a_2}\Vert Q_{Z^n}\bigr)\notag\\
 &\qquad=I_c(A_1,A_2;Z^n)+D(P^c_{Z^n}\Vert Q_{Z^n}).
 \label{eq:selected-divergence-decomposition}
\end{align}
The second term is nonnegative.  Averaging
\eqref{eq:selected-divergence-decomposition} over the random codebook and
using \eqref{eq:selected-divergence-bound}, which is uniform in the retained
pair, yields \eqref{eq:selected-index-component-bound} after substituting
$R_{L_i}=R_{J_i}+R_{B_i}$.

For the history-conditioned statement, fix a positive-probability history
realization $h$.  By hypothesis, conditional on $h$ the six current
coordinates retain the same product-uniform law, and the current codebook is
independent of $h$ with its original product ensemble distribution.
Consequently, fixing the retained pair again exposes the same independent
sub-codebooks, while the induced MAC in \eqref{eq:induced-eve-mac}, the
reference output in \eqref{eq:selected-reference-output}, and the quantities
$C_s,D_s,E_s$ are unchanged.  Equations
\eqref{eq:selected-divergence-bound} and
\eqref{eq:selected-divergence-decomposition} therefore hold for every such
$h$ with the same right-hand side.  Averaging the fixed-history inequality
over the history and the codebooks outside the current round gives the
conditional version by the tower property.

\section{Proof of Proposition~\ref{prop:nonadaptive-one-block}}
\label{app:nonadaptive-one-block-proof}
User~2 decodes User~1's complete index pair, of total rate
$R_1+R_{1,r}$, through the induced channel $W_1$ in
\eqref{eq:induced-reliability-channels}; User~1 similarly decodes User~2's
complete pair, of rate $R_2+R_{2,r}$, through $W_2$.  The i.i.d.
random-coding maximum-likelihood bound, in its Gallager form as used in
\cite[Lemma~4]{HC2023} and with the convention
\eqref{eq:downward-renyi-definition}, gives the two exponential terms in
\eqref{eq:nonadaptive-error-ensemble-an}.  More precisely, the Gallager
functions $A_s(P)$ and $B_s(P)$ there are the two conditional downward
R\'enyi-information terms in \eqref{eq:nonadaptive-error-ensemble-an}, with
order $1/(1+s)$ and receiver side information averaged before the logarithm.  Their sum follows from the
union bound.

For secrecy, apply Proposition~\ref{prop:selected-index-component} with
\begin{align}
&A_1=M_1,\quad  J_1=\mathrm{const},\quad 
 B_1=\text{User~1's randomization index},\notag\\
&A_2=\mathrm{const},\quad  J_2=M_2,\quad 
 B_2=\text{User~2's randomization index}.\notag
\end{align}  This gives the bound defining $b_n$ in
\eqref{eq:nonadaptive-leakage-ensemble-bn}, and hence
\eqref{eq:nonadaptive-leakage-ensemble}.  Finally apply the normalized-sum
criterion to $P_e^n/a_n$ and the leakage divided by $b_n$.  Its expectation
is at most two, so one realization satisfies both inequalities in
\eqref{eq:nonadaptive-deterministic-consequence}.  The leading factor two
is introduced only at this selection step.

\section{Proof of Lemma \ref{lem:otp-leakage-absorption}}
\label{app:otp-leakage-absorption-proof}
Put $E_t=Z^{1:t-1}$, $W=W_{t-1}$, and  
$D=(Q_t,\boldsymbol{\mathsf C})$.  The Markov requirement for the ideal observation kernel in condition
(A3) gives the  
conditional data-processing inequality  
\begin{align}  
	I(U_t;Z_t\mid E_t,W,D)&\leq I(U_t;C_t\mid E_t,W,D)\notag\\  
	&\leq I(U_t;C_t,E_t\mid W,D). \label{eq:otp-conditional-dpi}  
\end{align}  
The joint-distribution requirement in condition (A1a) makes $U_t$ fresh and uniform conditional on $D$ and  
independent of $(K_{t-1},W,E_t)$.  Condition (A1b), together with the  
independence of the fresh variable $Q_t$, gives  
$K_{t-1}\perp W\mid D$ before $E_t$ is observed.  Proposition~\ref{prop:abstract-otp}, applied with these choices, therefore yields  
\begin{align}  
	I(U_t;C_t,E_t\mid W,D)
	\leq I(K_{t-1};E_t\mid W,D).
\end{align}  
Finally, $Q_t$ is independent of  
$(K_{t-1},W,E_t)$ conditional on the codebooks, so the last  
term equals  
$I(K_{t-1};E_t\mid W,\boldsymbol{\mathsf C})$.  Combining  
these inequalities proves \eqref{eq:otp-leakage-absorption}.  No  
posterior independence of the preceding key and payload given $E_t$ is  
used.

\section{Proof of Lemma \ref{lem:one-step-recursion}}
\label{app:one-step-recursion-proof}
Write $A_t=W_{t-1}$ and $E_t=Z^{1:t-1}$.  Since  
$(W_t,K_t)$ is, up to ordering, $(A_t,Q_t,U_t)$, with the  
terminal key occurring only inside $Q_t$, the chain rule first gives  
\allowdisplaybreaks
\begin{align}  
	&I(W_t,K_t;Z^{1:t}\mid\boldsymbol{\mathsf C})\notag\\  
	&=I(A_t,Q_t,U_t;E_t\mid\boldsymbol{\mathsf C})
	+I(A_t,Q_t,U_t;Z_t\mid E_t,\boldsymbol{\mathsf C}).  
	\label{eq:recursion-split-past-current}  
\end{align}  
The independence condition (A1a) for the ideal process implies that $(Q_t,U_t)$ is independent of $(A_t,E_t)$ conditional on the  
codebooks.  Hence the past-observation term is exactly  
\begin{equation}  
	I(A_t,Q_t,U_t;E_t\mid\boldsymbol{\mathsf C})  
	=I(A_t;E_t\mid\boldsymbol{\mathsf C}).  
	\label{eq:recursion-past-term}  
\end{equation}  
Expand the current-output term in the order $Q_t,A_t,U_t$:  
\begin{align}  
	&I(A_t,Q_t,U_t;Z_t\mid E_t,\boldsymbol{\mathsf C})\notag\\  
	&=I(Q_t;Z_t\mid E_t,\boldsymbol{\mathsf C})
	+I(A_t;Z_t\mid E_t,Q_t,\boldsymbol{\mathsf C})\notag\\  
	&\quad+I(U_t;Z_t\mid E_t,A_t,Q_t,\boldsymbol{\mathsf C}).  
	\label{eq:recursion-current-expansion}  
\end{align}  
By (A1a), $Q_t\perp(A_t,K_{t-1},E_t)\mid\boldsymbol{\mathsf C}$.  The chain rule therefore gives
\begin{equation}
 I(Q_t;Z_t\mid E_t,\boldsymbol{\mathsf C})
 \leq I(Q_t;Z_t\mid E_t,A_t,K_{t-1},\boldsymbol{\mathsf C}).
 \label{eq:recursion-Q-bound}
\end{equation}
The second identity in (A3) gives
$I(A_t;Z_t\mid E_t,Q_t,\boldsymbol{\mathsf C})=0$.
Lemma~\ref{lem:otp-leakage-absorption} bounds the third term by  
\begin{equation}  
	I(K_{t-1};E_t\mid A_t,\boldsymbol{\mathsf C}).  
	\label{eq:recursion-U-bound}  
\end{equation}  
Substituting \eqref{eq:recursion-past-term}--  
\eqref{eq:recursion-U-bound} into  
\eqref{eq:recursion-split-past-current} yields  
\begin{align}  
	&I(W_t,K_t;Z^{1:t}\mid\boldsymbol{\mathsf C})\notag\\  
	&\leq I(A_t;E_t\mid\boldsymbol{\mathsf C})  
	+I(K_{t-1};E_t\mid A_t,\boldsymbol{\mathsf C})\notag\\  
	&\quad+I(Q_t;Z_t\mid A_t,K_{t-1},E_t,  
	\boldsymbol{\mathsf C})\notag\\  
	&=I(A_t,K_{t-1};E_t\mid\boldsymbol{\mathsf C})\notag\\  
	&\quad+I(Q_t;Z_t\mid A_t,K_{t-1},E_t,  
	\boldsymbol{\mathsf C}),  
\end{align}  
where the last equality is the chain rule.  Restoring  
$A_t=W_{t-1}$ and $E_t=Z^{1:t-1}$ proves  
\eqref{eq:one-step-recursion-revised}.  

\section{Auxiliary proofs for embedded-key leakage propagation}
\label{app:embedded-key-proofs}

\subsection{Proof of Proposition~\ref{prop:abstract-otp}}
Condition on $(W,D)=(w,d)$.  Since
$I(U;E\mid W=w,D=d)=0$, the chain rule gives the first equality below.
For fixed $U=u$, the group translation $C=u\oplus K$ preserves conditional
entropy, and hence
$H(C\mid U,E,W=w,D=d)=H(K\mid E,W=w,D=d)$.  Therefore
\begin{align}
&I(U;C,E\mid W=w,D=d)\notag\\
=&I(U;C\mid E,W=w,D=d)\notag\\
=&H(C\mid E,W=w,D=d)-H(K\mid E,W=w,D=d)\notag\\
\leq&\log|G|-H(K\mid E,W=w,D=d)\notag\\
=&I(K;E\mid W=w,D=d),
\end{align}
where the last equality uses that $K$ is uniform and independent of $W$
conditional on $D=d$, so $H(K\mid W=w,D=d)=\log|G|$.
Average over $(w,d)$.  The key statement follows by data processing.

\subsection{Proof of Lemma~\ref{lem:common-marginal-continuity}}
Because the two laws have the same $W$ marginal,
$I_P(W;Z)-I_Q(W;Z)=H_Q(W\mid Z)-H_P(W\mid Z)$.
For $d\leq1/2$, the classical conditional-entropy continuity bound
\cite{Winter2016} gives
\begin{equation}
|H_P(W\mid Z)-H_Q(W\mid Z)|
\leq2d\log|\mathcal W|+2h_2(d).
\end{equation}
Only the alphabet of the protected variable $W$ enters this bound; the
observation alphabet may be arbitrarily large.  This point is useful here
because $Z^{1:T}$ grows with the number of rounds.
For $d>1/2$, the left-hand side is at most $\log|\mathcal W|$, whereas
$2d\log|\mathcal W|+2h_2(1/2)\geq\log|\mathcal W|$.
Combining the two cases proves
\eqref{eq:common-marginal-continuity} for every $d\in[0,1]$.

\subsection{Proof of Proposition~\ref{prop:abstract-transfer}}
Fix a codebook collection $\boldsymbol c$.  The coupling makes the implemented
and ideal pairs $(W_T,Z^{1:T})$ identical unless an incorrect key is used.
Therefore
\begin{equation}
d(\boldsymbol c):=d_{\rm TV}
(P^{\boldsymbol c}_{\rm real},P^{\boldsymbol c}_{\rm ideal})
\leq p_{n,T}(\boldsymbol c).
\end{equation}
For each fixed $\boldsymbol c$, the two laws have the same $W_T$ marginal,
because the coupling changes only the codeword lookup and leaves the payload
and generated-key variables unchanged.  Lemma
\ref{lem:common-marginal-continuity} yields the fixed-codebook inequality
\begin{align}
I_{\rm real}(W_T;Z^{1:T}\mid\boldsymbol c)\leq& I_{\rm ideal}(W_T;Z^{1:T}\mid\boldsymbol c)\notag\\
&
+2d(\boldsymbol c)\log|\mathcal W_T|
+2h_2(\min\{d(\boldsymbol c),1/2\}).
\end{align}
The function $x\mapsto h_2(\min\{x,1/2\})$ is nondecreasing and concave on
$[0,1]$: it agrees with the increasing concave binary entropy on
$[0,1/2]$ and is constant thereafter.  Average the fixed-codebook inequality over the full codebook collection.
Then use $d(\boldsymbol c)\leq p_{n,T}(\boldsymbol c)$, monotonicity of the
correction term, and Jensen's inequality for its concave entropy part.
This gives \eqref{eq:abstract-transfer}--
\eqref{eq:abstract-transfer-eta}.

\subsection{Proof of Proposition~\ref{prop:abstract-selection}}
Apply the normalized-sum criterion
\[
X(\boldsymbol{\mathsf C})=
\frac{P_{e,\mathrm{actual}}(\boldsymbol{\mathsf C})}{a_{n,T}}+
\frac{I_{\mathrm{actual}}(W_T;Z^{1:T}\mid
\boldsymbol{\mathsf C})}{b_{n,T}}.
\]
Its expectation is at most two, so some realization satisfies $X\leq2$.
If $a_{n,T}=0$ or $b_{n,T}=0$, the corresponding nonnegative random
quantity has expectation zero and therefore vanishes almost surely; that
term is omitted from the normalized sum.

\section{Proof of Proposition~\ref{prop:exact-adaptive-projection}}
\label{app:exact-adaptive-projection-proof}
Introduce the aggregate variables
\begin{equation}
	x:=R_{1,e}+R_{1,o},\qquad
	y:=R_2+R_{2,o},\qquad e:=R_{1,e}.
	\label{eq:projection-aggregates}
\end{equation}
For fixed $(R_1,R_2,x,y,e)$, increasing $R_{2,k}$ only tightens
\eqref{eq:projection-rel2}.  Hence feasibility can be tested with the
smallest permitted value $R_{2,k}=e$.  The full component-rate system is
therefore equivalent to
\begin{align}
	x&>C, & y&>D, & x+y&>E,
	\label{eq:projection-xyz-secrecy}\\
	y&\geq R_2,
	& R_1+x-e&<A,
	& e+y&<B,
	\label{eq:projection-xyz-reliability}\\
	0&\leq e\leq\min\{R_1,x\}.
	\label{eq:projection-e-range}
\end{align}
The equivalence is constructive: given $(x,y,e)$, take
\begin{align}
	R_{1,e}&=e,&R_{1,o}&=x-e,&R_{1,\mathrm{sec}}&=R_1-e,\\
	R_{2,k}&=e,&R_{2,o}&=y-R_2.
\end{align}

We eliminate $e$ first.  For fixed $x$ and $y$, a strictly feasible $e$
exists if and only if
\begin{equation}
	\max\{0,R_1+x-A\}
	<\min\{R_1,x,B-y\}.
	\label{eq:projection-e-interval}
\end{equation}
Combining this interval condition with
\eqref{eq:projection-xyz-secrecy} and $y\geq R_2$, and then eliminating
$x$ and $y$, gives
\begin{align}
	R_1&<A,\qquad R_2<B,\notag\\
	R_1&<A+B-\max\{E,C+D\},
	\label{eq:projection-preliminary}\\
	R_1+R_2&<A+B-C.
	\notag
\end{align}
For completeness, the implications can also be read directly from
\eqref{eq:projection-e-interval}: the non-strict upper bounds $e\leq R_1$ and $e\leq x$, together with
the strict lower bound in \eqref{eq:projection-e-interval}, produce
$R_1<A$; $e<B-y$ together with $y\geq R_2$ produces $R_2<B$;
the simultaneous lower bounds $x>C$, $y>D$, and $x+y>E$ produce the third
and fourth projected constraints.  Sufficiency is supplied by the constructive open-polytope elimination in
Appendix~\ref{app:projection-details}: it selects $x,y$ with $C<x<A$, $\max\{D,R_2\}<y<B$, and $E<x+y<A+B-R_1$, then chooses $e$ from \eqref{eq:projection-e-interval} and reconstructs all component rates.

Because $V_1$ and $V_2$ are independent under every $P\in\mathcal Q$,
\begin{equation}
	E=C+D+I(V_1;V_2\mid Z)\geq C+D.
	\label{eq:E-dominates-CD-journal}
\end{equation}
Thus the third bound in \eqref{eq:projection-preliminary} reduces to
$R_1<A+B-E$.  This proves necessity and sufficiency of
\eqref{eq:adaptive-strict-A}--\eqref{eq:adaptive-strict-C} for $R_1>0$.  For $R_1=0$, use positive-$R_1$ interior pairs converging to the target.
When the target also has $R_2=B$, both coordinates are backed off, for
example $R_1^{(k)}\downarrow0$ and $R_2^{(k)}\uparrow B$ with
$R_2^{(k)}<B$; this preserves every strict inequality and avoids a
degenerate strict $e$-interval.  Taking the
closure gives \eqref{eq:adaptive-fixed-P-region}.

\section{Additional details for proof of Theorem \ref{thm:adaptive-achievable-region}: Step 1}
\label{app:adaptive-achievable-region-proof-step1}
The constructive part of Proposition~\ref{prop:exact-adaptive-projection} selects aggregate rates
$(x,y,e)$ satisfying \eqref{eq:projection-xyz-secrecy}--
\eqref{eq:projection-e-range} strictly and then reconstructs
\begin{align}
	R_{1,e}&=e,&R_{1,o}&=x-e,&R_{1,\mathrm{sec}}&=R_1-e,\notag\\
	R_{2,k}&=e,&R_{2,o}&=y-R_2.
	\label{eq:adaptive-achievability-reconstruction}
\end{align}
All component rates are nonnegative, the key-size condition holds with
$R_{2,k}=R_{1,e}$, and
\eqref{eq:projection-rel1}--\eqref{eq:projection-sec12} hold with a
common positive Shannon-information slack.  This is the component system
used by the probabilistic construction, not merely a necessary projection.

For a point on the boundary $R_1=0$, take positive-$R_1$ interior
points converging to the target while keeping $R_2<B$; if the target also
has $R_2=B$, back off both coordinates before taking the limit.  This uses
the same strict open component system throughout.  Thus the positive-$R_1$
restriction in the exact strict projection removes no point from the closed
fixed-$P$ region.

\section{Proof of Theorem \ref{thm:fixed-P-gain}}
\label{app:fixed-P-gain-proof}
The inequality $r_{\rm N}^{\max}\leq r_{\rm A}^{\max}$ gives the individual-rate inclusion.  Moreover,
every point in $\mathcal R_{\rm N}(P)$ satisfies
\begin{equation}
	R_1+R_2\leq r_{\rm N}^{\max}+B\leq A+B-C=s_{\max},
\end{equation}
and hence satisfies all inequalities defining $\mathcal R_{\rm A}(P)$.
Finally,
\begin{equation}
	r_{\rm A}^{\max}-r_{\rm N}^{\max}=\max\{C,E-B\}-\max\{0,E-B\}.
\end{equation}
This difference is positive exactly when $C>0$ and $E-B<C$, which is
\eqref{eq:strictness-condition}.  If the difference is zero, the two regions
coincide; if it is positive, the adaptive region contains points with
$R_1>r_{\rm N}^{\max}$, so the inclusion is strict.

\section{Pairwise elimination for the asymmetric projection}
\label{app:projection-details}

We give the algebra underlying Proposition~\ref{prop:exact-adaptive-projection}.
Starting from \eqref{eq:projection-xyz-secrecy}--\eqref{eq:projection-e-range},
strict feasibility of $e$ is equivalent to
\begin{equation}
\max\{0,R_1+x-A\}<\min\{R_1,x,B-y\}.
\end{equation}
Pairing every lower bound with every upper bound yields
\begin{align}
R_1&>0,\qquad x>0,\qquad y<B,\notag\\
R_1&<A,\qquad x<A,
\label{eq:app-e-pairs}\\
R_1+x+y&<A+B.
\notag
\end{align}
Together with $x>C$, $y>\max\{D,R_2\}$, and $x+y>E$, the variables $x,y$
must therefore satisfy
\begin{align}
C<x&<A,\\
\max\{D,R_2\}<y&<B,\\
E<x+y&<A+B-R_1.
\end{align}
Let $L=\max\{D,R_2\}$ and $U=A+B-R_1$.  The attainable sums of pairs with $C<x<A$ and $L<y<B$ form the open interval $(C+L,A+B)$.  Hence the additional requirement $E<x+y<U$ is feasible exactly when
\begin{equation}
\max\{E,C+L\}<U.
\end{equation}
To construct a pair, choose $z$ strictly between these endpoints, choose $x$ in the nonempty intersection $(C,A)\cap(z-B,z-L)$, and set $y=z-x$.  Expanding these interval conditions gives
\begin{align}
C&<A,\\
D&<B,\\
R_2&<B,\\
E&<A+B,\\
R_1&<A,\\
R_1&<A+B-E,\\
R_1+R_2&<A+B-C,\\
R_1&<A+B-C-D.
\end{align}
The first, second, and fourth inequalities are precisely the strict
feasibility conditions defining $\mathcal Q_{\rm F}$.  Independence of
$V_1$ and $V_2$ gives $E\geq C+D$, so
$A+B-E\leq A+B-C-D$ and the last inequality is redundant.  The remaining
payload inequalities are exactly
\eqref{eq:adaptive-strict-A}--\eqref{eq:adaptive-strict-C}.

This also completes the projection analysis for the full component system. Indeed, Proposition~\ref{prop:equal-projections} gives a constructive equivalence between the full and asymmetric component systems and identifies their common closed fixed-distribution achievable rate region, so no second elimination is required.

\section{Analytic maximization for the representative binary channel}
\label{app:binary-analytic-maximization}

This appendix proves the exact maximization used in
Section~\ref{subsec:binary-numerical-illustration} without a global
concavity claim for the full two-dimensional domain.  We use the
one-dimensional reduction of Proposition~\ref{prop:numerical-profile-reduction},
localize its stationary maximizer to a narrow strip, and establish strict
concavity of the optimized profile on the left half of the parameter
interval.

Recall \eqref{eq:numerical-profile-F} and write $F:=F_{t}:$
\begin{equation}
	F(p,u)=h_2(p)+(1-p)h_2(u)+p h_2\!\left(u+d\right)
	-h_2\!\left(u+dp\right),
	\label{eq:analytic-representative-F}
\end{equation}
where $d=1-a-b$ and $a\leq u\leq b.$ Moreover, put 
\begin{equation}
	r:=u+dp,\quad
	L(x):=\log\frac{1-x}{x},\quad
	J(x):=\frac{1}{x(1-x)}.
	\label{eq:analytic-rLJ}
\end{equation}
Direct differentiation gives
\begin{align}
	F_p&=L(p)+h_2\!\left(u+d\right)-h_2(u)
	-dL(r),
	\label{eq:analytic-Fp}\\
	F_u&=(1-p)L(u)+pL\!\left(u+d\right)-L(r),
	\label{eq:analytic-Fu}\\
	F_{pp}&=-J(p)+d^2J(r),
	\label{eq:analytic-Fpp}\\
	F_{uu}&=J(r)-(1-p)J(u)-pJ\!\left(u+d\right),
	\label{eq:analytic-Fuu}\\
	F_{pu}&=dJ(r)-
	\left[L(u)-L\!\left(u+d\right)\right].
	\label{eq:analytic-Fpu}
\end{align}

The analysis below has a focus on the representative channel as defined in \eqref{eq: representative channel}, i.e., $a=1/10$ and $b=1/5$ and accordingly
\begin{equation}
 t=\frac{3}{20},\quad d=\frac{7}{10},\quad r=u+\frac{7}{10}p.
 \label{eq:analytic-representative-td}
\end{equation}
Proposition~\ref{prop:numerical-profile-reduction} already shows that, for
each fixed $u$, $F(\cdot,u)$ has a unique interior maximizer, denoted by
$p^\star(u)$.

\begin{lemma}[Localization of the stationary maximizer]
\label{lem:analytic-localization}
For $1/10\leq u\leq3/20$,
\begin{equation}
 \frac12\leq p^\star(u)<\frac{13}{25}.
 \label{eq:analytic-localization}
\end{equation}
\end{lemma}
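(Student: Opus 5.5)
Since $F(\cdot,u)$ is strictly concave on $(0,1)$ by Proposition~\ref{prop:numerical-profile-reduction}, the partial derivative $F_p(\cdot,u)$ in \eqref{eq:analytic-Fp} is strictly decreasing in $p$. The unique maximizer $p^\star(u)$ is its unique zero. The localization \eqref{eq:analytic-localization} is therefore equivalent to two sign conditions holding for every $u\in[1/10,3/20]$:
\begin{equation*}
F_p(1/2,u)\geq0,\qquad F_p(13/25,u)<0.
\end{equation*}
The plan is to verify these two one-variable inequalities on the compact interval.

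For the lower bound, I would evaluate at $p=1/2$, so that $L(p)=0$ and
\[
F_p(1/2,u)=h_2(u+d)-h_2(u)-dL(u+d/2).
\]
At $u=t=3/20$ we have $u+d=17/20=1-u$, so $h_2(u+d)=h_2(u)$. Also $u+d/2=1/2$, so $L(u+d/2)=0$, and the expression vanishes. This matches the symmetry \eqref{eq:numerical-profile-symmetry}, which gives $p^\star(t)=1/2$. I would then show that $g(u):=F_p(1/2,u)$ is nonincreasing on $[1/10,3/20]$. Its derivative is $g'(u)=L(u+d)-L(u)-dJ(u+d/2)$. Here $L(u+d)-L(u)<0$ because $L$ is decreasing, and $-dJ<0$, so $g'<0$. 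Hence $g(u)\geq g(3/20)=0$ on the interval, which gives $p^\star(u)\geq 1/2$. This is the expected sign for $u<t$: $X_2$ is biased toward $0$, which makes the $X_1=1$ branch slightly cleaner.

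For the upper bound, I would set $\phi(u):=F_p(13/25,u)$. Its derivative is
\[
\phi'(u)=L(u+d)-L(u)-dJ(u+0.364),
\]
which is again strictly negative. So $\phi$ is decreasing, and it suffices to check the single endpoint $u=1/10$:
\[
\phi(1/10)=L(13/25)+h_2(4/5)-h_2(1/10)-\tfrac{7}{10}L(0.464)<0.
\]
Substituting gives
\[
\log(12/13)+h_2(0.2)-h_2(0.1)-0.7\log(0.536/0.464).
\]
Numerically this is about $-0.0800+0.5004-0.3251-0.1010\approx-0.0057$.

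The main obstacle is this last endpoint inequality, because the margin is small. To make it rigorous, I would bound each logarithm by rational upper and lower bounds, for instance truncated series $\log(1+x)$ with explicit remainder terms, written as $\log$ of rationals. This certifies the sign with enough digits.

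Both derivative sign arguments are elementary, so the whole proof reduces to the endpoint value $g(3/20)=0$ and this one certified numerical inequality.
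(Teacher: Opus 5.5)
Your architecture matches the paper's. You reduce to the two sign conditions $F_p(1/2,u)\ge0$ and $F_p(13/25,u)<0$, anchor the first at $F_p(1/2,3/20)=0$, and reduce the second to the corner $u=1/10$ by monotonicity in $u$. The gap is in that monotonicity step, where the chain-rule term has the wrong sign. Since $L'=-J$ and $\partial r/\partial u=1$, differentiating $-dL(r)$ with respect to $u$ gives $+dJ(r)$, not $-dJ(r)$. Hence
\[
\partial_u F_p(p,u)=F_{pu}=d\,J(r)-\bigl[L(u)-L(u+d)\bigr],
\]
which is a positive term minus a positive term. Your claim that $g'<0$ and $\phi'<0$ because both summands are negative is therefore unjustified. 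The proof does not reduce to the endpoint evaluation alone, and the derivative arguments are not merely qualitative.

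The sign $F_{pu}<0$ is nevertheless true, but it needs a quantitative comparison, which is exactly the extra ingredient in the paper.
\begin{itemize}
\item On the strip $[1/2,13/25]\times[1/10,3/20]$ you have $r=u+\tfrac{7}{10}p\in[9/20,257/500]$. Hence $J(r)\le 400/99$ and $dJ(r)\le 280/99$.
\item The map $u\mapsto L(u)-L(u+d)$ has derivative $J(u+d)-J(u)\le0$ on $[1/10,3/20]$, because $u(1-u)\le(u+d)(1-u-d)$ there. So $L(u)-L(u+d)\ge L(3/20)-L(17/20)=2\log(17/3)$.
\item The series for $\log\frac{1+y}{1-y}$ at $y=7/10$ gives $2\log(17/3)>2443/750>280/99$.
\end{itemize}
This yields $F_{pu}<0$ on the whole strip, which supplies both of your monotonicity claims at once.

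For the corner value you can also avoid truncated-series error analysis. Since $10F_p(13/25,1/10)=\log\frac{2\cdot3^{28}\cdot29^{7}}{13^{10}\cdot67^{7}}$, the sign reduces to an exact integer comparison, as in the paper.
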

\begin{IEEEproof}
Consider the strip
\begin{equation}
 \mathcal S:=\left[\frac12,\frac{13}{25}\right]
 \times\left[\frac1{10},\frac3{20}\right].
 \label{eq:analytic-strip}
\end{equation}
On $\mathcal S$, we have
\begin{equation}
 \frac9{20}\leq r\leq\frac{257}{500},
 \qquad 4\leq J(r)\leq\frac{400}{99}.
 \label{eq:analytic-rJ-bounds}
\end{equation}
Hence the positive term $dJ(r)$ in \eqref{eq:analytic-Fpu} satisfies
\begin{equation}
 \frac7{10}J(r)\leq\frac{280}{99}.
 \label{eq:analytic-dJ-upper}
\end{equation}
We next lower-bound the term $L(u)-L(u+d)$ that is subtracted from \eqref{eq:analytic-Fpu}.  The function
$u\mapsto L(u)-L(u+d)$ is decreasing for $d=7/10$ on $[1/10,3/20]$, and therefore
\begin{equation}
 L(u)-L\!\left(u+\frac7{10}\right)
 \geq2\log\frac{17}{3}.
 \label{eq:analytic-L-difference-lower}
\end{equation}
Consequently, the two terms in \eqref{eq:analytic-Fpu} satisfy the direct
comparison
\begin{equation}
 F_{pu}
 =\frac7{10}J(r)-\left[L(u)-L\!\left(u+\frac7{10}\right)\right]
 \leq\frac{280}{99}-2\log\frac{17}{3}<0
 \label{eq:analytic-Fpu-negative}
\end{equation}
throughout $\mathcal S$.
Note that the last inequality can be shown as follows. 
To calculate $\log\frac{17}{3},$ we use
\begin{equation}
	\log\frac{1+y}{1-y}
	=2\sum_{k=0}^{\infty}\frac{y^{2k+1}}{2k+1},
	\qquad 0<y<1.
	\label{eq:analytic-atanh-series}
\end{equation}
with $y=7/10.$ This gives
\begin{equation}
	2\log\frac{17}{3}
	>4\left(\frac7{10}+\frac{(7/10)^3}{3}\right)
	=\frac{2443}{750}>\frac{280}{99}.
	\label{eq:analytic-cross-strict}
\end{equation}

Since $F_p(1/2,3/20)=0$, it follows that for $\frac1{10}\leq u\leq\frac3{20},$
\begin{equation}
 F_p(1/2,u)\geq0.
 \label{eq:analytic-left-Fp}
\end{equation}

At the other vertical boundary, expansion of the binary entropies in
\eqref{eq:analytic-Fp} gives
\begin{align}
10F_p\!\left(\frac{13}{25},\frac1{10}\right)
&=\log\frac{2\cdot3^{28}\cdot29^7}{13^{10}\cdot67^7}.
 \label{eq:analytic-endpoint-log}
\end{align}
The exact integer comparison
\begin{align}
13^{10}67^7-2\cdot3^{28}29^7
=46276881052077093674329>0
 \label{eq:analytic-endpoint-integer}
\end{align}
shows that
$F_p(13/25,1/10)<0$.  Since $F_{pu}<0$ on $\mathcal S$, we have 
\begin{equation}
 F_p(13/25,u)<0,
 \qquad \frac1{10}\leq u\leq\frac3{20}.
 \label{eq:analytic-right-Fp}
\end{equation}
Finally, $F_{pp}<0$ by
\eqref{eq:numerical-variance-identity}.  Hence the unique zero of
$p\mapsto F_p(p,u)$ lies between the two vertical boundaries, proving
\eqref{eq:analytic-localization}.
\end{IEEEproof}

\begin{lemma}[Negative definiteness on the localized strip]
\label{lem:analytic-strip-hessian}
The Hessian of $F$ is negative definite throughout $\mathcal S$.
\end{lemma}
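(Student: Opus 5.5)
The plan is to verify negative definiteness of the $2\times 2$ Hessian on $\mathcal S$ through Sylvester's criterion: show $F_{pp}<0$ and $F_{pp}F_{uu}-F_{pu}^2>0$. Each entry will be bounded by crude rational bounds obtained from the explicit formulas \eqref{eq:analytic-Fpp}--\eqref{eq:analytic-Fpu}, the range $\tfrac9{20}\le r\le\tfrac{257}{500}$, and the consequence $4\le J(r)\le\tfrac{400}{99}$ recorded in \eqref{eq:analytic-rJ-bounds}. Throughout I will use that $J(x)=1/(x(1-x))$ is convex, symmetric about $1/2$, and increasing in $|x-\tfrac12|$.

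\emph{Step 1: the diagonal entries.} On $p\in[\tfrac12,\tfrac{13}{25}]$ we have $J(p)\ge 4$. With $d^2=\tfrac{49}{100}$ this gives
\begin{equation*}
F_{pp}\le -4+\tfrac{49}{100}\cdot\tfrac{400}{99}=-4+\tfrac{196}{99}<-2.
\end{equation*}
For $F_{uu}$, the weights $1-p$ and $p$ are nonnegative and sum to one. Hence $(1-p)J(u)+pJ(u+d)$ is at least the smaller of the minima of $J$ over $[\tfrac1{10},\tfrac3{20}]$ and over $[\tfrac45,\tfrac{17}{20}]$. These minima are attained at the endpoints nearest $1/2$, namely $J(\tfrac3{20})=\tfrac{400}{51}$ and $J(\tfrac45)=\tfrac{25}{4}$. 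Therefore
\begin{equation*}
F_{uu}\le \tfrac{400}{99}-\tfrac{25}{4}<-2.
\end{equation*}

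\emph{Step 2: the off-diagonal entry.} The proof of Lemma~\ref{lem:analytic-localization} already gives $F_{pu}<0$ on $\mathcal S$, so it remains to bound $|F_{pu}|$ from above. The map $u\mapsto L(u)-L(u+d)$ is decreasing on $[\tfrac1{10},\tfrac3{20}]$, so its maximum is at $u=\tfrac1{10}$:
\begin{equation*}
L(\tfrac1{10})-L(\tfrac45)=\log 9+\log 4=\log 36<4,
\end{equation*}
where the last inequality holds since $e^4>36$. Together with $dJ(r)\ge\tfrac7{10}\cdot 4=\tfrac{28}{10}$, this yields $|F_{pu}|=[L(u)-L(u+d)]-dJ(r)<4-\tfrac{28}{10}=\tfrac65$, and hence $F_{pu}^2<\tfrac{36}{25}$.

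\emph{Step 3: conclusion.} Combining the bounds gives $F_{pp}F_{uu}>4>\tfrac{36}{25}>F_{pu}^2$ throughout $\mathcal S$. Together with $F_{pp}<0$, Sylvester's criterion shows that the Hessian is negative definite on $\mathcal S$. The only delicate point is the bound for $F_{uu}$. Its positive term $J(r)$ is of size about $4$, which is not negligible, so I must use the endpoint minima of $J$ on the two $u$-intervals rather than a cruder estimate. The margin $\tfrac{25}{4}-\tfrac{400}{99}\approx 2.2$ is comfortable, so no finer case split in $p$ should be needed.
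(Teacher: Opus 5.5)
Your proposal is correct and follows essentially the paper's argument. You use the same bounds $F_{pp}\le-\frac{200}{99}$ and $|F_{pu}|<\frac65$ (via $\log 36<4$, $J(r)\ge 4$, and the sign $F_{pu}<0$ from Lemma~\ref{lem:analytic-localization}), and then Sylvester's criterion. The only difference is in $F_{uu}$: you bound the convex combination $(1-p)J(u)+pJ(u+d)$ below by $\min\{\frac{400}{51},\frac{25}{4}\}=\frac{25}{4}$, whereas the paper uses $p\le\frac{13}{25}$ to obtain the sharper $\frac{477}{68}$. Your version is slightly simpler and still leaves ample margin, since $F_{pp}F_{uu}\ge\frac{200}{99}\cdot\frac{875}{396}>4>\frac{36}{25}>F_{pu}^2$.
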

\begin{IEEEproof}
First, \eqref{eq:analytic-rJ-bounds} and $J(p)\geq4$ give
\begin{equation}
 F_{pp}\leq-4+\frac{49}{100}\frac{400}{99}
 =-\frac{200}{99}<0.
 \label{eq:analytic-Fpp-bound}
\end{equation}
Next, we evaluate $F_{uu}.$ Note that on $\mathcal S$, we have
\begin{equation}
 J(u)\geq\frac{400}{51},\qquad
 J\!\left(u+\frac7{10}\right)\geq\frac{25}{4}.
 \label{eq:analytic-J-side-bounds}
\end{equation}
Because $J(u)>J(u+7/10)$ on the relevant intervals and
$p\leq13/25$,
\begin{align}
(1-p)J(u)+pJ\!\left(u+\frac7{10}\right)
&\geq\frac{12}{25}\frac{400}{51}
 +\frac{13}{25}\frac{25}{4}
 =\frac{477}{68}.
 \label{eq:analytic-weighted-J-bound}
\end{align}
Together with $J(r)\leq\frac{400}{99}$ as given in \eqref{eq:analytic-rJ-bounds}, we obtain
\begin{equation}
 F_{uu}\leq\frac{400}{99}-\frac{477}{68}<0.
 \label{eq:analytic-Fuu-bound}
\end{equation}

It remains to control the mixed derivative. Recall that the function
$u\mapsto L(u)-L(u+d)$ (that is used in \eqref{eq:analytic-L-difference-lower}) is decreasing for $d=7/10$ on $[1/10,3/20],$ and thus satisfies  
\begin{equation}
 L(u)-L\!\left(u+\frac7{10}\right)\leq\log36.
 \label{eq:analytic-L-difference-upper}
\end{equation}
We already know that $F_{pu}<0$.  Since $J(r)\geq4$  as given in \eqref{eq:analytic-rJ-bounds} and
$\log36<4$,
\begin{equation}
 |F_{pu}|=
 L(u)-L\!\left(u+\frac7{10}\right)-\frac7{10}J(r)
 <4-\frac{14}{5}=\frac65.
 \label{eq:analytic-Fpu-bound}
\end{equation}
Here $\log36<4$ follows, for example, from
$e>1+1+1/2+1/6=8/3$ and hence $e^4>36$.
Combining \eqref{eq:analytic-Fpp-bound},
\eqref{eq:analytic-Fuu-bound}, and
\eqref{eq:analytic-Fpu-bound}, we obtain
\begin{align}
 \det\nabla^2F
 &=F_{pp}F_{uu}-F_{pu}^2\notag\\
 &>\frac{200}{99}
 \left(\frac{477}{68}-\frac{400}{99}\right)-\frac{36}{25}\notag\\
 &=\frac{19030538}{4165425}>0.
 \label{eq:analytic-determinant-bound}
\end{align}
Together with $F_{pp}<0$, this proves negative definiteness.
\end{IEEEproof}

\begin{proposition}[Exact analytic maximum for the representative channel]
\label{prop:analytic-representative-maximum}
For $a=1/10$ and $b=1/5$,
\begin{equation}
 \max_{\substack{0\leq p\leq1\\1/10\leq u\leq1/5}}F(p,u)
 =F\!\left(\frac12,\frac3{20}\right)
 =h_2\!\left(\frac3{20}\right).
 \label{eq:analytic-exact-maximum}
\end{equation}
The maximizer is unique.  Consequently,
\begin{equation}
 M_{\rm N}=h_2\!\left(\frac3{20}\right).
 \label{eq:analytic-MN-equality}
\end{equation}
\end{proposition}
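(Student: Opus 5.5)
The proof assembles the one-dimensional reduction of Proposition~\ref{prop:numerical-profile-reduction} with the two strip lemmas. \textbf{Step 1.} By \eqref{eq:numerical-one-variable-problem}, it suffices to maximize $G(u)=F(p^\star(u),u)$ over $u\in[1/10,3/20]$ and then use the symmetry \eqref{eq:numerical-profile-symmetry} to transfer the conclusion to $[3/20,1/5]$.

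\textbf{Step 2: regularity and derivatives of $p^\star$ and $G$.} Since $F_{pp}<0$ on the interior, the implicit function theorem shows that $p^\star$ is smooth and
\[
p^{\star\prime}(u)=-\frac{F_{pu}}{F_{pp}}.
\]
The envelope identity then gives $G'(u)=F_u(p^\star(u),u)$. Differentiating once more,
\[
G''(u)=F_{uu}-\frac{F_{pu}^2}{F_{pp}}=\frac{\det\nabla^2F}{F_{pp}},
\]
with everything evaluated at $(p^\star(u),u)$.

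\textbf{Step 3: strict concavity on the left half.} By Lemma~\ref{lem:analytic-localization}, the curve $(p^\star(u),u)$ lies in the strip $\mathcal S$ for $u\in[1/10,3/20]$. By Lemma~\ref{lem:analytic-strip-hessian}, the Hessian is negative definite there, so $\det\nabla^2F>0$ and $F_{pp}<0$. Hence $G''<0$ on $[1/10,3/20]$, and $G$ is strictly concave on this interval.

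\textbf{Step 4: locating the maximizer.} At $u=3/20=t$, the identity $p^\star(2t-u)=1-p^\star(u)$ gives $p^\star(t)=1/2$. Because $G(2t-u)=G(u)$ and $G$ is differentiable at $t$, we get $G'(t)=0$. A strictly concave function on $[1/10,3/20]$ with zero derivative at its right endpoint is strictly increasing on that interval. So its unique maximizer on the left half is $u=t$. By the symmetry, the same conclusion holds on $[t,1/5]$.

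\textbf{Step 5: uniqueness and evaluation.} Over the whole rectangle, the maximum is attained only at $(p,u)=(1/2,3/20)$. Uniqueness of the $p$-maximizer for each fixed $u$ excludes the boundary values $p\in\{0,1\}$, where $F=0$ anyway. To evaluate, set $r=u+dp=3/20+7/20=1/2$, so
\[
F=\log 2+\tfrac12 h_2(3/20)+\tfrac12 h_2(17/20)-\log 2=h_2(3/20),
\]
using $h_2(17/20)=h_2(3/20)$. The change of variables in \eqref{eq:numerical-two-variable-problem} then yields $M_{\rm N}=h_2(3/20)$.

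The main obstacle is justifying $G'(t)=0$ cleanly, which requires differentiability of $G$ at the endpoint $t$ of the left half. I would handle it by applying the implicit function theorem on an open neighbourhood of $u=t$ inside $(a,b)$ and then using evenness of $G$ about $t$. One-sided concavity up to $t$ suffices once $G'(t)=0$ is known.
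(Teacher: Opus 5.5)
Your proposal is correct and follows the paper's proof essentially step for step. Both arguments use implicit-function regularity of $p^\star$, the envelope formula $G''=\det\nabla^2F/F_{pp}$, and the localization and strip-Hessian lemmas to get strict concavity on $[1/10,3/20]$; both then use the symmetry $G(3/10-u)=G(u)$ to obtain $G'(3/20)=0$ and $p^\star(3/20)=1/2$, while your explicit evaluation via $r=1/2$ and $h_2(17/20)=h_2(3/20)$ simply spells out the paper's ``substitution'' step.
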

\begin{IEEEproof}
Define the optimized profile
\begin{equation}
 G(u):=F(p^\star(u),u).
 \label{eq:analytic-profile}
\end{equation}
The strict inequality $F_{pp}<0$ and the implicit-function theorem show
that $p^\star(u)$ is continuously differentiable.  The envelope identity
and differentiation of $F_p(p^\star(u),u)=0$ give
\begin{align}
 G'(u)&=F_u(p^\star(u),u),\notag\\
 G''(u)&=F_{uu}-\frac{F_{pu}^2}{F_{pp}}
 =\frac{\det\nabla^2F}{F_{pp}},
 \label{eq:analytic-profile-second}
\end{align}
where the derivatives on the second line are evaluated at
$(p^\star(u),u)$.  By Lemma~\ref{lem:analytic-localization}, this point lies
in $\mathcal S$ for $1/10\leq u\leq3/20$.  Lemma~\ref{lem:analytic-strip-hessian}
therefore gives
\begin{equation}
 G''(u)<0,
 \qquad \frac1{10}<u<\frac3{20}.
 \label{eq:analytic-profile-concavity}
\end{equation}
The symmetry \eqref{eq:numerical-profile-symmetry} gives
$G(3/10-u)=G(u)$ and hence $G'(3/20)=0$.  Strict concavity on the left half
implies $G'(u)>0$ for $1/10\leq u<3/20$.  By symmetry, $G$ is strictly
decreasing on $3/20<u\leq1/5$.  Thus $u=3/20$ is its unique maximizer, and
symmetry together with uniqueness of the fixed-$u$ maximizer gives
$p^\star(3/20)=1/2$.  Substitution in
\eqref{eq:analytic-representative-F} proves
\eqref{eq:analytic-exact-maximum}.  The definition of $M_{\rm N}$ in
\eqref{eq:numerical-MN-definition} then gives
\eqref{eq:analytic-MN-equality}.
\end{IEEEproof}

\end{document}